\documentclass[11pt]{article}
\usepackage[utf8]{inputenc}
\usepackage{amsmath,amsfonts,amsthm,amssymb,geometry,bbm}
\usepackage{graphics,color}
\usepackage{epsfig}
\usepackage{caption}
\usepackage{algorithmic}
\usepackage[ruled]{algorithm2e}
\usepackage{blkarray}
\usepackage[T1]{fontenc}
\usepackage{enumitem}
\usepackage[numbers]{natbib}
\usepackage{mathtools}
\usepackage{cancel}
\usepackage{thm-restate}
\usepackage{rotating}
\usepackage{multirow}

\geometry{left=1.0in, right=1.0in, top=1.0in, bottom=1in}

\newtheorem{theorem}{Theorem}[section]

\newtheorem{lemma}[theorem]{Lemma}
\newtheorem{proposition}[theorem]{Proposition}
\newtheorem{conjecture}[theorem]{Conjecture}

\theoremstyle{definition}
\newtheorem{remark}[theorem]{Remark}

\newtheorem{definition}[theorem]{Definition}
\newtheorem{example}[theorem]{Example}

\newcommand{\infmax}{{\sc InfMax}\xspace}
\newcommand{\maxc}{{\sc Max-k-Coverage}\xspace}
\DeclareMathOperator*{\E}{\mathbb{E}}

\DeclareMathOperator*{\argmax}{argmax}

\DeclareMathOperator*{\val}{val}

\newcommand{\Z}{{\mathbb Z}}

\newcommand{\calS}{{\mathcal S}}
\newcommand{\calP}{{\mathcal P}}
\newcommand{\M}{{\mathcal M}}
\newcommand{\IN}{{\textit IN}}

\newcommand{\infect}[1]{\xrightarrow[]{\cancel{#1}}}

\title{Limitations of Greed: Influence Maximization in Undirected Networks Re-visited\footnote{A short version of this paper is appeared in AAMAS'20. Grant Schoenebeck, Biaoshuai Tao, and Fang-Yi Yu are pleased to acknowledge the support of National Science Foundation AitF \#1535912 and CAREER \#1452915.}}
\author{Grant Schoenebeck\thanks{University of Michigan, School of Information, schoeneb@umich.edu} \and Biaoshuai Tao\thanks{University of Michigan, Division of Computer Science and Engineering, bstao@umich.edu} \and Fang-Yi Yu\thanks{University of Michigan, School of Information, fayu@umich.edu}}
\date{}

\begin{document}

\maketitle

\begin{abstract}

We consider the influence maximization problem (selecting $k$ seeds in a network maximizing the expected total influence) on undirected graphs under the linear threshold model.  On the one hand, we prove that the greedy algorithm always achieves a $(1 - (1 - 1/k)^k + \Omega(1/k^3))$-approximation, showing that the greedy algorithm does slightly better on undirected graphs than the generic $(1- (1 - 1/k)^k)$ bound which also applies to directed graphs.  On the other hand, we show that substantial improvement on this bound is impossible by presenting an example where the greedy algorithm can obtain at most a $(1- (1 - 1/k)^k + O(1/k^{0.2}))$ approximation.

This result stands in contrast to the previous work on the independent cascade model.
Like the linear threshold model, the greedy algorithm obtains a $(1-(1-1/k)^k)$-approximation on directed graphs in the independent cascade model.
However, \citet{khanna2014influence} showed that, in undirected graphs, the greedy algorithm performs substantially better: a $(1-(1-1/k)^k + c)$ approximation for constant $c > 0$.
Our results show that, surprisingly, no such improvement occurs in the linear threshold model.

Finally, we show that, under the linear threshold model, the approximation ratio $(1 - (1 - 1/k)^k)$ is tight if 1) the graph is directed or 2) the vertices are weighted. In other words, under either of these two settings, the greedy algorithm cannot achieve a $(1 - (1 - 1/k)^k + f(k))$-approximation for any positive function $f(k)$. The result in setting 2) is again in a sharp contrast to Khanna and Lucier’s $(1 - (1 - 1/k)^k + c)$-approximation result for the independent cascade model, where the $(1 - (1 - 1/k)^k + c)$ approximation guarantee can be extended to the setting where vertices are weighted.

We also discuss extensions to more generalized settings including those with edge-weighted graphs.
\end{abstract}

\section{Introduction}
Viral marketing is an advertising strategy that gives the company's product to a certain number of users (the seeds) for free such that the product can be promoted through a cascade process in which the product is recommended to these users' friends, their friends' friends, and so on.  The \emph{influence maximization problem} (\infmax) is an optimization problem which asks which seeds one should give the product to; that is, given a graph, a \emph{diffusion model} defining how each node is infected by its neighbors, and a limited budget $k$, how to pick $k$ seeds such that the total number of infected vertices in this graph at the end of the cascade is maximized.  For \infmax, nearly all the known algorithms are based on a greedy algorithm  which iteratively picks the seed that has the largest marginal influence.
Some of them improve the running time of the original greedy algorithm by skipping vertices that are known to be suboptimal~\cite{leskovec2007cost,goyal2011celf++}, while the others improve the scalability of the greedy algorithm by using more scalable algorithms to approximate the expected total influence~\cite{borgs2014maximizing,tang2014influence,tang2015influence,cheng2013staticgreedy,ohsaka2014fast} or computing a score of the seeds that is closely related to the expected total influence~\cite{chen2009efficient,chen2010scalable,ChenYZ10,goyal2011simpath,jung2012irie,galhotra2016holistic,schoenebeck2019influence}.
Therefore, improving the approximation guarantee of the standard greedy algorithm improves the approximation guarantees of most \infmax algorithms in the literature in one shot!

Two diffusion models that have been studied almost exclusively are \emph{the linear threshold model} and \emph{the independent cascade model}, which were proposed by~\citet{KempeKT03}.
In the independent cascade model, a newly-infected vertex (or seed) $u$ infects each of its not-yet-infected neighbors $v$ with a fixed probability independently.
In the linear threshold model for unweighted graphs\footnote{The linear threshold model can be defined for general weighted directed graphs. However, if the graph is undirected, the linear threshold model is normally defined with the edges unweighted. Since this paper mainly deals with undirected graphs, we will adopt the definition of the linear threshold model for unweighted graphs.}, each non-seed vertex has a threshold sampled uniformly and independently from the interval $[0, 1]$, and becomes infected when the fraction of its infected neighbors exceeds this threshold.

Both models were shown to be \emph{submodular} (see Theorem~\ref{thm:submodular} for details) even in the case with directed graphs~\cite{KempeKT03}, which implies that the greedy algorithm achieves a $(1-(1-1/k)^k)$-approximation for the \infmax problem, or, a $(1-1/e)$-approximation for any $k$.
A natural and important question is, can we show that the greedy algorithm can perform better than a $(1-(1-1/k)^k)$-approximation through a more careful analysis?

To answer this question, it is helpful to notice that \infmax is a special case of the \maxc problem: given a collection of subsets of a set of elements and a positive integer $k$, find $k$ subsets that cover maximum number of elements (see details in Sect.~\ref{sect:premaxc}).
For \maxc, it is well known that the greedy algorithm cannot overcome the $(1-(1-1/k)^k)$ barrier: for any positive function $f(k)$ which may be infinitesimal, there exists a \maxc instance where the greedy algorithm cannot achieve $(1-(1-1/k)^k+f(k))$-approximation.
Thus, to hope that the greedy algorithm can overcome this barrier for \infmax, we need to find out what makes \infmax more special and exploit those \infmax features that are not in \maxc.

Unfortunately, \infmax with the independent cascade model for general directed graphs is nothing more special than \maxc, as it can simulate any \maxc instance: set the probability that $u$ infects $v$ to be $1$ for all edges $(u,v)$ (i.e., a vertex will be infected if it contains an infected in-neighbor); use a vertex to represent a subset in the \maxc instance, and use a clique of size $m$ to represent an element; create a directed edge from the vertex representing the subset to an arbitrary vertex in the clique representing the element if this subset contains this element.
It is easy to see that this simulates a \maxc instance if $m$ is sufficiently large.
Therefore, the greedy algorithm cannot achieve a $(1-(1-1/k)^k+f(k))$-approximation for any positive function $f(k)$.  This implies we must use properties beyond mere submodularity (a property shared by \maxc) to improve the algorithmic analysis.

\citet{khanna2014influence} showed that the  $(1-(1-1/k)^k)$ barrier can be overcome if we restrict the graphs to be undirected in the independent cascade model.  They proved that the greedy algorithm for \infmax with the independent cascade model for undirected graphs achieves a $(1-(1-1/k)^k+c)$-approximation for some constant $c>0$ that does not even depend on $k$.\footnote{\citet{khanna2014influence} only claimed that the greedy algorithm achieves a $(1-1/e+c)$-approximation. However, $c$ being a constant implies that there exists $k_0$ such that $1-(1-1/k)^{k}<1-1/e+c/2$ for all $k\geq k_0$ (notice that $(1-(1-1/k)^k)$ is decreasing and approaches to $1-1/e$); the greedy algorithm will then achieve a $(1-(1-1/k)^k+c/2)$-approximation for $k\geq k_0$.
}
This means the greedy algorithm produces a $(1-1/e+c)$-approximation for any $k$.
Moreover, this result holds for the more general setting where 1) there is a prescribed set of vertices $V'\subseteq V$ as a part of input to the \infmax instance such that the seeds can only be chosen among vertices in $V'$ and 2) a positive weight is assigned to each vertex such that the objective is to maximize the total weight of infected vertices (instead of the total number of infected vertices).
This result is remarkable, as many of the social networks in our daily life are undirected by their nature (for example, friendship, co-authorship, etc.).
Knowing that the $(1-(1-1/k)^k)$ barrier can be overcome for the independent cascade model, a natural question is, what is the story for the linear threshold model?

\subsection{Our Results}
We show that Khanna and Lucier’s result on the independent cascade model can only be partially extended to the linear threshold model.
Our first result is an example showing that the greedy algorithm can obtain at most a $(1- (1 - 1/k)^k + O(1/k^{0.2}))$-approximation for \infmax on undirected graphs under the linear threshold model. This shows that, up to lower order terms, the approximation guarantee $1- (1 - 1/k)^k$ is tight.  In particular, no analogue of Khanna and Lucier’s $(1 - 1/e + c)$ result is possible if $c > 0$ is a constant.
For the greedy algorithm, we define \emph{the approximation surplus at $k$} be the additive term after $1-(1-1/k)^k$ in the approximation ratio.
Our result can then be equivalently stated as the approximation surplus at $k$ for the linear threshold model is $O(1/k^{0.2})$.

For our second result, we prove that the greedy algorithm does achieve a $(1 - (1 - 1/k)^k + \Omega(1/k^3))$-approximation under the same setting (the linear threshold model with undirected graphs). This indicates that the greedy algorithm can overcome the $(1 - (1 - 1/k)^k)$ barrier by a lower order term.
In particular, the barrier is overcome for constant $k$.
We remark that the approximation surplus $\Omega(1/k^3)$ does not depend on the number of vertices/edges in the graph, so this improvement is not diminishing as the size of the graph grows.

Finally, we extend our results to other \infmax settings.   Firstly, we show that the approximation ratio $(1 - (1 - 1/k)^k)$ is tight if we consider general directed graphs. That is, the greedy algorithm cannot achieve a $(1 - (1 - 1/k)^k + f(k))$-approximation for any positive function $f(k)$. Secondly, while still considering undirected graphs, we consider the two generalizations considered by~\citet{khanna2014influence}. We show that our result that the greedy algorithm achieves a $(1 - (1 - 1/k)^k + \Omega(1/k^3))$-approximation can be extended to the setting where the seeds can only be picked from a prescribed vertex set.  However, it cannot be extended to the setting where the vertices are weighted, in which case the approximation ratio of $(1 - (1 - 1/k)^k)$ is tight, as it is in directed graphs.
These results, as well as the corresponding result for the independent cascade model by \citet{khanna2014influence}, are summarized in Table~\ref{tab:generalize}.

\begin{table}
    \centering
    \begin{tabular}{|c|c|c|c|c|}
        \hline
            & \multicolumn{2}{c|}{Linear Threshold} & \multicolumn{2}{c|}{Independent Cascade}\\
        \hline
        Approximation & at least $\Omega(1/k^3))$  & less than $f(k)$  &  at least some  & less than $f(k)$ \\
        Surplus &  at most $O(1/k^{0.2}))$ &  for any $f(k)>0$ & constant $c>0$  &  for any $f(k)>0$\\
        \hline
        Directed & & \multirow{2}{*}{$\checkmark$} & & \multirow{2}{*}{$\checkmark$} \\
        Graph & & & & \\
        \hline
        Undirected &  \multirow{2}{*}{$\checkmark$} & & \multirow{2}{*}{$\checkmark$} & \\
        Graph & & & & \\
        \hline
        Undirected & & \multirow{4}{*}{$\checkmark$} &   \multirow{4}{*}{$\checkmark$} & \\
        Graph with & & & & \\
        Weighted & & & & \\
        Vertices & & & & \\
        \hline
        Undirected & \multirow{4}{*}{$\checkmark$} &  &   \multirow{4}{*}{$\checkmark$} & \\
        Graph with & & & & \\
        Prescribed & & & & \\
        Seed Set & & & & \\
        \hline
    \end{tabular}
    \caption{Approximation surplus of the greedy algorithm under different settings.}
    \label{tab:generalize}
\end{table}

We have defined the linear threshold model for \emph{unweighted}, undirected graphs where all the incoming edges of a vertex have the same weight.
We discuss alternative versions and extensions of the linear threshold model to edge-weighted graphs, and discuss how our results extend to these settings.

\subsection{Related Work}
The influence maximization problem was initially posed by Domingos and Richardson~\cite{DomingosR01,RichardsonD02}.
\citet{KempeKT03} showed
the linear threshold model and the independent cascade model are submodular, so the greedy algorithm achieves a $(1-(1-1/k)^k)$-approximation.
This result was later generalized to all diffusion models that are locally submodular~\cite{KempeKT05,MosselR10}.
As mentioned earlier, for the independent cascade model with undirected graphs, \citet{khanna2014influence} showed that the greedy algorithm achieves a $(1-(1-1/k)^k+c)$-approximation for some constant $c>0$.

On the hardness or inapproximability side, \citet{KempeKT03} showed that \infmax on both the linear threshold model and the independent cascade model is NP-hard.
For the independent cascade model with directed graphs, \citet{KempeKT03} showed a reduction from \maxc preserving the approximation factor.
Since \citet{feige1998threshold} showed that \maxc is NP-hard to approximated within factor $(1-(1-1/k)^k+\varepsilon)$ for any constant $\varepsilon>0$, the same inapproximability factor holds for the independent cascade \infmax.
Therefore, up to lower order terms, the gap between the upper bound and the lower bound for the independent cascade (on directed graphs) \infmax is closed.
If undirected graphs are considered, \citet{schoenebeck2019influence} showed that, for both the linear threshold model and the independent cascade model, \infmax is NP-hard to approximate to within factor $(1-\tau)$ for some constant $\tau>0$.

If the diffusion model can be nonsubmodular, \citet{KempeKT03} showed that \infmax is NP-hard to approximate to within a factor of $N^{1 - \varepsilon}$ for any $\varepsilon > 0$.
Many works after this \cite{chen2009approximability,li2017influence,schoenebeck2017beyond,toct2019beyond,Horel16} showed that strong inapproximability results extend to even very specific nonsubmodular models.

\infmax has also been studied in the adaptive setting, where the seeds are selected iteratively, and the seed-picker can observe the cascade of the previous seeds before choosing the next one~\cite{golovin2011adaptive,chen2019adaptivity,peng2019adaptive}.
Due to its iterative nature, the greedy algorithm can be easily generalized to an adaptive version~\cite{han2018efficient,chen2019adaptive}.

As mentioned in the introduction section, there was extensive work on designing implementations that are more efficient and scalable~\cite{leskovec2007cost,goyal2011celf++,borgs2014maximizing,tang2014influence,tang2015influence,cheng2013staticgreedy,ohsaka2014fast,chen2009efficient,chen2010scalable,goyal2011simpath,jung2012irie,galhotra2016holistic}.
These algorithms speedup the greedy algorithm by either disregarding those seed candidates that are identified to be clearly suboptimal or finding smart ways to approximate the expected number of infected vertices.
\citet{arora2017debunking} benchmark most of the aforementioned variants of the greedy algorithms.
We remark that there do exist \infmax algorithms that are not based on greedy~\cite{BKS07,GL13,angell2016don,schoenebeck2017beyond,toct2019beyond,schoenebeck2019think}, but they are typically for nonsubmodular diffusion models.

\section{Preliminaries}
\label{sect:pre}
\subsection{Influence Maximization with Linear Threshold Model}
\label{sect:preinfmax}
Throughout this paper, we use $G=(V,E)$ to represent the graph which may or may not be directed.
We use $S$ to denote the set of seeds,
$k$ to denote $|S|$.
Let $\deg(v)$ be the degree of $v$ when $G$ is undirected and the \emph{in-degree} of vertex $v$ otherwise.
For each $v\in V$, let $\Gamma(v)=\{u:(u,v)\in E\}$ be the set of \emph{(in-)neighbors} of vertex $v$.

\begin{definition}\label{def:LTM}
The \emph{linear threshold model} $LT_G$ is defined by a directed graph $G=(V,E)$.
On input seed set $S\subseteq V$, $LT_G(S)$ outputs a set of infected vertices as follows:
\begin{enumerate}
    \item Initially, only vertices in $S$ are infected, and for each vertex $v$ a \emph{threshold} $\theta_v\in\Z^+$ is sampled uniformly at random from $\{1,2,\ldots,\deg(v)\}$ independently. If $\deg(v)=0$, set $\theta_v=\infty$.
    \item In each subsequent iteration, a vertex $v$ becomes infected if $v$ has at least $\theta_v$ infected in-neighbors.
    \item After an iteration where there are no additional infected vertices, $LT_G(S)$ outputs the set of infected vertices.
\end{enumerate}
\end{definition}

In this paper, we mostly deal with undirected graphs.
When we restrict our attention to undirected graphs, the undirected graph is viewed as a special directed graph with each undirected edge of the graph being viewed as two anti-parallel directed edges.

Although the linear threshold model can be defined for general edge-weighted graphs, we will adopt the special case for the unweighted graphs as defined in Definition~\ref{def:LTM}, which is most common in the past literature when \emph{undirected graphs} are considered.
In particular, there are some subtle difficulties to define the linear threshold model on graphs that are both edge-weighted and undirected.
We discuss these in details in Append.~\ref{sect:generalizeLTM}.

Previous work showed that the linear threshold model has  \emph{live-edge interpretation} as stated in the theorem below.

\begin{theorem}[Claim~2.6 in~\cite{KempeKT03}]\label{thm:LT_live}
Let $\widehat{LT}_G(S)\subseteq V$ be the set of vertices that are reachable from $S$ when each vertex $v$ picks exactly one of its incoming edges uniformly at random to be included in the graph and vertices pick their incoming edges independently.
Then $\widehat{LT}_G(S)$ and $LT_G(S)$ have the same distribution.
Those picked edges are called ``live edges''.
\end{theorem}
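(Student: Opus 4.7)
The plan is to prove the distributional equivalence $\widehat{LT}_G(S) \stackrel{d}{=} LT_G(S)$ via the principle of deferred decisions, by describing both processes as the same sequential stochastic process. The key insight is that in either model, whether a vertex $v$ becomes infected (resp.\ reachable) depends only on whether one of its in-neighbors ``triggers'' it, and this trigger event can be decided on the fly as in-neighbors are activated one at a time.

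First I would reformulate $LT_G(S)$ by deferring the threshold draws. Rather than sampling all $\theta_v$ up front, I would run the LT dynamics and, when the $j$-th in-neighbor of $v$ becomes infected (for $j=1,2,\ldots$), ask whether $\theta_v = j$ conditional on the already-known fact that $\theta_v > j-1$. Since $\theta_v$ is uniform on $\{1,\ldots,\deg(v)\}$, this conditional probability equals $1/(\deg(v)-j+1)$; if the answer is yes, $v$ becomes infected at this step. The principle of deferred decisions guarantees that this lazy process has the same distribution over infected sets as the original.

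Second I would reformulate $\widehat{LT}_G(S)$ analogously. Starting from $S$, I grow the reachable set $R$ one vertex at a time (by BFS); when the $j$-th in-neighbor of $v$ enters $R$, I ask whether $v$'s live edge comes from this in-neighbor, conditional on its not coming from any of the previous $j-1$ reachable in-neighbors. Since the live edge is uniform over the $\deg(v)$ in-edges of $v$, this conditional probability is also $1/(\deg(v)-j+1)$, and $v$ enters $R$ iff the answer is yes. The two reformulations are then the \emph{same} sequential process---same trigger events, same transition probabilities---so they induce identical distributions on the final infected/reachable set, proving the theorem.

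The main subtlety I anticipate is that the LT dynamics are specified in synchronous rounds in which several in-neighbors of a given $v$ may activate simultaneously, whereas the deferred-decision description processes triggers one in-neighbor at a time. I would resolve this by fixing an arbitrary canonical ordering on vertices and serializing the simultaneous activations within each round according to this order; because the LT-infected set is a monotone function of previously infected vertices and does not depend on the order in which threshold checks fire, this sequentialization is without loss of generality. The corresponding bookkeeping for live-edge reachability is easier since BFS naturally reveals one in-edge at a time.
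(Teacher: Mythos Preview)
Your proposal is correct and is essentially the standard deferred-decisions argument from \cite{KempeKT03}. Note, however, that the present paper does not actually prove this theorem: it is quoted as Claim~2.6 of \cite{KempeKT03} and only a one-paragraph intuition is given, namely that if $\IN(v)\subseteq\Gamma(v)$ is the current set of infected in-neighbors of $v$, then both (i) the LT probability that $v$ becomes infected and (ii) the probability that $v$'s live incoming edge originates in $\IN(v)$ equal $|\IN(v)|/\deg(v)$.

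Your write-up refines that batch intuition into a rigorous sequential coupling: you serialize the activations, and at each step match the conditional probability $\Pr(\theta_v=j\mid\theta_v>j-1)=1/(\deg(v)-j+1)$ against the conditional probability that the live edge comes from the $j$-th revealed in-neighbor given it did not come from the first $j-1$. Both compute to $1/(\deg(v)-j+1)$, so the two processes are step-by-step identical. Your handling of the synchronous-round subtlety via a fixed tie-breaking order and monotonicity of the infected set is exactly what is needed to make the argument airtight, and goes beyond what the paper itself supplies.
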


The intuition of this interpretation is as follows: consider a not-yet-infected vertex $v$ and a set of its infected in-neighbors $\IN(v)\subseteq\Gamma(v)$.  By the definition of the linear threshold model, $v$ will be infected by vertices in $\IN(v)$ with probability $|\IN(v)|/\deg(v)$.  On the other hand, the live edge coming into $v$ will be from the set $\IN(v)$ with probability $|\IN(v)|/\deg(v)$.

Once again, when considering undirected graphs, those live edges in Theorem~\ref{thm:LT_live} are still directed.
Whenever we mention a live edge in the remaining part of this paper, it should always be clear that this edge is directed.

\begin{remark}\label{remark:uniquePath}
Since each vertex can choose only one incoming edge as being live, \emph{if a vertex $v$ is reachable from a vertex $u$ after sampling all the live edges, then there exists a unique simple path consisting of live edges connecting $u$ to $v$.}
\end{remark}

\begin{remark}\label{remark:RR}
When considering the probability that a given vertex $v$ will be infected by a given seed set $S$, we can consider a ``reverse random walk without repetition'' process.
The random walk starts at $v$, and it chooses one of its neighbors (in-neighbors for directed graphs) uniformly at random and moves to it. The random walk terminates when it reaches a vertex that has already been visited or when it reaches a seed.
Each move in the reverse random walk is analogous to selecting one incoming live edge. Theorem~\ref{thm:LT_live} implies that the probability that this random walk reaches a seed is exactly the probability that $v$ will be infected by seeds in $S$.
\end{remark}

Given a set of vertices $A$ and a vertex $v$, let $A\rightarrow v$ be the event that $v$ is reachable from $A$ after sampling live edges.
Alternatively, this means that the reverse random walk from $v$ described in Remark~\ref{remark:RR} reaches a vertex in $A$.
If $A$ is the set of seeds, then $\Pr(A\rightarrow v)$ is exactly the probability that $v$ will be infected.
Intuitively, $A\rightarrow v$ can be seen as the event that ``$A$ infects $v$''.
We set $\Pr(A\rightarrow v)=1$ if $v\in A$.
In this paper, we mean $A\rightarrow v$ when we say \emph{$v$ reversely walks to $A$} or \emph{$v$ is reachable from $A$}.
In particular, the reachability is in terms of the live edges, not the original edges.

Given a set of vertices $A$, a vertex $v$, and a set of vertices $B$, let $A\infect{B}v$ be the event that the reverse random walk from $v$ reaches a vertex in $A$ and the vertices on the live path from $v$ to $A$, excluding $v$ and the reached vertex in $A$, do not contain any vertex in $B$.
By definition, $A\infect{B}v$ is the same as $A\rightarrow v$ if $B=\emptyset$, and $\Pr(A\infect{B}v)=1$ for any $B$ if $v\in A$.

Let $\sigma(S)$ be the \emph{expected} total number of infected vertices due to the influence of $S$, $\sigma(S)=\E[|LT_G(S)|]$, where the expectation is taken over the samplings of thresholds of all vertices, or equivalently, over the choices of incoming live edges of all vertices.
By the linearity of expectation, we have $\sigma(S)=\sum_{v\in V}\Pr(S\rightarrow v)$.
It is known that computing $\sigma(S)$ or $\Pr(A\rightarrow v)$ for the linear threshold model is $\#$P-hard~\cite{chen2010scalable}.\footnote{Computing $\sigma(S)$ and $\Pr(S\rightarrow v)$ are also $\#$P-hard for the independent cascade model~\cite{chen2010scalable2}.}
On the other hand, a simple Monte Carlo sampling can approximate $\sigma(S)$ arbitrarily close with probability arbitrarily close to 1.
In this paper, we adopt the standard assumption $\sigma(\cdot)$ can be accessed by an oracle.

\begin{definition}
The \infmax problem is an optimization problem which takes as inputs $G=(V,E)$ and a positive integer $k$, and outputs
$\argmax_{S\subseteq V:|S|=k}\sigma(S)$,
a seed set of size $k$ that maximizes the expected number of infected vertices.
\end{definition}

The \emph{greedy algorithm} consists of $k$ iterations; in each iteration $i$, it includes the seed $s_i$ into the seed set $S$ (i.e., $S\leftarrow S\cup\{s_i\}$) with the highest marginal increment to $\sigma(\cdot)$: $s_i\in\argmax_{s\in V\setminus S}(\sigma(S\cup\{s\})-\sigma(S))$.
Under the linear threshold model, the objective function $\sigma(\cdot)$ is monotone and \emph{submodular} (see Theorem~\ref{thm:submodular}), which implies that the greedy algorithm achieves a $(1-(1-1/k)^k)$-approximation~\cite{Nemhauser78,KempeKT03}.
Notice that this approximation ratio becomes $1-1/e$ when $k$ tends to infinity, and $1-(1-1/k)^k>1-1/e$ for all positive $k$.

\begin{theorem}[\cite{KempeKT03}]\label{thm:submodular}
Consider \infmax with the linear threshold model. For any two sets of vertices $A,B$ with $A\subsetneq B$ and any vertex $v\notin B$, we have $\sigma(A\cup\{v\})-\sigma(A)\geq\sigma(B\cup\{v\})-\sigma(B)$,
and for any vertex $u\notin B\cup\{v\}$,
$\Pr\left(A\cup\{v\}\rightarrow u\right)-\Pr\left(A\rightarrow u\right)\geq\Pr\left(B\cup\{v\}\rightarrow u\right)-\Pr\left(B\rightarrow u\right)$.
\end{theorem}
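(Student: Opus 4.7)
The plan is to reduce the claim to a deterministic statement about reachability in a fixed live-edge realization, and then average. By Theorem~\ref{thm:LT_live}, the whole diffusion process is equivalent to the following two-stage experiment: first, each vertex independently samples one of its incoming edges uniformly at random to mark \emph{live}; then, for any seed set $S$, the infected set is exactly the set of vertices reachable from $S$ along live edges. For a fixed live-edge realization $\ell$, write $R_\ell(S)$ for the set of vertices reachable from $S$, so that $|LT_G(S)|$ has the same distribution as $|R_\ell(S)|$ and $\mathbbm{1}[S\rightarrow u]$ equals $\mathbbm{1}[u\in R_\ell(S)]$ in distribution.

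I would prove the second (per-vertex) inequality first, since the first inequality follows by linearity of expectation. Fix any $u\notin B\cup\{v\}$ and any live-edge realization $\ell$, and consider the key indicator
\[
\Delta_\ell(S,v,u)\;:=\;\mathbbm{1}[u\in R_\ell(S\cup\{v\})]-\mathbbm{1}[u\in R_\ell(S)]\in\{0,1\}.
\]
The goal is to show pointwise that $\Delta_\ell(A,v,u)\geq \Delta_\ell(B,v,u)$. Assume $\Delta_\ell(B,v,u)=1$, i.e., $u$ is reachable from $B\cup\{v\}$ but not from $B$. By Remark~\ref{remark:uniquePath}, the live-edge subgraph admits at most one simple directed path into $u$ from any source, so the unique live path witnessing $u\in R_\ell(B\cup\{v\})$ must originate at $v$ (otherwise it would originate in $B$, contradicting $u\notin R_\ell(B)$). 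That same path still exists in the realization $\ell$ independently of whether the starting vertex lies in $A$ or $B$, so $u\in R_\ell(A\cup\{v\})$; and since $A\subseteq B$ with $u\notin R_\ell(B)$, also $u\notin R_\ell(A)$. Hence $\Delta_\ell(A,v,u)=1$, which establishes the pointwise inequality.

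Taking expectation over $\ell$ and using Theorem~\ref{thm:LT_live} converts this into
\[
\Pr(A\cup\{v\}\rightarrow u)-\Pr(A\rightarrow u)\;\geq\;\Pr(B\cup\{v\}\rightarrow u)-\Pr(B\rightarrow u),
\]
which is the second displayed inequality. For the first, I would write $\sigma(S)=\sum_{u\in V}\Pr(S\rightarrow u)$, split the sum into the trivial contributions from $u\in B\cup\{v\}$ (which satisfy the inequality directly because reachability is monotone and the relevant marginals are at most $1$ and at least $0$ appropriately) and the remaining terms $u\notin B\cup\{v\}$, to which the per-vertex inequality just proved applies term by term. I expect the main (mild) subtlety to be this uniqueness-of-live-path step: one must invoke Remark~\ref{remark:uniquePath} to rule out the possibility that $u$ receives contributions from multiple elements of $B\cup\{v\}$ in the live-edge realization, which is precisely the feature of the linear threshold model that makes the coverage-style submodularity argument go through.
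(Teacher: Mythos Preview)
The paper does not supply its own proof of Theorem~\ref{thm:submodular}; it is quoted from \cite{KempeKT03}. Your argument is correct and is precisely the standard proof from that reference: pass to a fixed live-edge realization via Theorem~\ref{thm:LT_live}, observe that in any such realization the set of vertices that can reach $u$ forms a single backward chain (Remark~\ref{remark:uniquePath}), so $\mathbbm{1}[u\in R_\ell(\cdot)]$ is a coverage-type indicator and the marginal gain from adding $v$ can only shrink as the seed set grows; then average. One stylistic note: your phrase ``the unique live path witnessing $u\in R_\ell(B\cup\{v\})$'' is slightly loose, since a priori several vertices of $B\cup\{v\}$ could reach $u$; what actually makes the step work is that the set of vertices with a live path to $u$ is exactly the backward chain from $u$, and $u\notin R_\ell(B)$ forces that chain to avoid $B$ entirely, hence to contain $v$ and no element of $A\subseteq B$. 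With that tightening, the per-vertex inequality and the summation to $\sigma$ (including the easy cases $u\in B\setminus A$ and $u=v$ you flag) go through exactly as you wrote.
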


Remark~\ref{remark:RR} straightforwardly implies the following lemma, which describes a negative correlation between the event that $\{u\}$ infects $v$ and the event that $u$ is infected by another seed set.
Some other properties for the linear threshold are presented in Sect.~\ref{sect:prooflowerbound}.
We introduce Lemma~\ref{lemma:negativecorrelation} in the preliminary section because this negative correlation property is a signature property that makes the linear threshold model quite different from the independent cascade model.
In the independent cascade model, knowing the existence of certain connections between vertices only makes it more likely that another pair of vertices are connected.
Intuitively, this is because, in the independent cascade model, each vertex does not ``choose'' one of its incoming edges, but rather, each incoming edge is included with a certain probability independently.
In addition, Lemma~\ref{lemma:negativecorrelation} holds for directed graphs, while all the lemmas in Sect.~\ref{sect:prooflowerbound} hold only for undirected graphs.

\begin{lemma}\label{lemma:negativecorrelation}
For any three sets of vertices $A,B_1,B_2$ and any two different vertices $u,v$, we have
$\Pr(A\infect{B_1}u)\geq\Pr(A\infect{B_1}u\mid \{u\}\infect{A\cup B_2}v)$.
\end{lemma}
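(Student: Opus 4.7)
The plan is to use the live-edge interpretation (Theorem~\ref{thm:LT_live}) together with the unique-live-path observation (Remark~\ref{remark:uniquePath}) to write both probabilities as sums over simple directed paths, and then obtain the inequality by proving that pairs of paths which jointly realize both events have essentially disjoint ``in-edge footprints,'' so the joint probability factors.

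Concretely, I would let $\mathcal{P}_1$ denote the family of simple directed paths $u = p_0 \to p_1 \to \cdots \to p_\ell$ with $p_\ell \in A$ and $p_1,\dots,p_{\ell-1} \notin B_1$, and $\mathcal{P}_2$ the family of simple paths $v = q_0 \to q_1 \to \cdots \to q_m = u$ with $q_1,\dots,q_{m-1} \notin A \cup B_2$. By Remark~\ref{remark:uniquePath}, each live-edge realization yields at most one path of each kind, and
\[
\Pr(A \infect{B_1} u) = \sum_{\pi_1 \in \mathcal{P}_1} \prod_{i=0}^{\ell-1}\frac{1}{\deg(p_i)}, \qquad \Pr(\{u\}\infect{A\cup B_2} v) = \sum_{\pi_2 \in \mathcal{P}_2} \prod_{j=0}^{m-1}\frac{1}{\deg(q_j)},
\]
since $\pi_1$ being live is exactly the event that each of $u,p_1,\dots,p_{\ell-1}$ independently picks the next vertex on the path as its incoming live edge, and similarly for $\pi_2$. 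The structural core of the argument is: whenever a realization makes both $\pi_1$ and $\pi_2$ live, the vertex sets $\{u,p_1,\dots,p_{\ell-1}\}$ and $\{v,q_1,\dots,q_{m-1}\}$ must be disjoint. Indeed, if some $p_i$ coincides with some $q_j$, then consistency of the in-edge assignments forces $p_{i+r}=q_{j+r}$ for every valid $r$, and chasing this forward gives a contradiction in every case: either $p_\ell\in A$ lands strictly inside the interior of $\pi_2$ (which must avoid $A$), or $u=q_m$ is forced into the interior of $\pi_1$ (contradicting simplicity of $\pi_1$), or $u=p_\ell \in A$ (handled separately, since then $\Pr(A\infect{B_1} u)=1$ by convention and the lemma is trivial).

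Because these in-edge footprints are disjoint, the corresponding coordinates of the live-edge sampling are independent, giving $\Pr(\pi_1 \text{ and } \pi_2 \text{ both live}) = \Pr(\pi_1 \text{ live})\cdot\Pr(\pi_2 \text{ live})$. Summing over compatible pairs and then relaxing to the full product of sums,
\[
\Pr\bigl(A\infect{B_1} u \,\wedge\, \{u\}\infect{A\cup B_2} v\bigr) \;\leq\; \sum_{\pi_1\in\mathcal{P}_1}\sum_{\pi_2\in\mathcal{P}_2}\Pr(\pi_1)\Pr(\pi_2) \;=\; \Pr(A\infect{B_1} u)\cdot\Pr(\{u\}\infect{A\cup B_2} v),
\]
and dividing by the second factor (with the conditional statement being vacuous if it vanishes) yields the claim. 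The main obstacle I expect is the bookkeeping in the disjointness step: precisely which vertex contributes which in-edge constraint, and verifying that endpoint coincidences (for instance $p_\ell = v$ when $v\in A$, or the shared endpoint $q_m = u$) never amount to a double-specification of some vertex's incoming live edge, so that the factorization is genuine and no compatible pair is over-counted on the right-hand side.
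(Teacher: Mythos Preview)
Your argument is correct and is essentially the paper's: both decompose the conditioning event $\{u\}\infect{A\cup B_2}v$ over its realizing live paths $p$ and use that, given $p$ is live, any live path witnessing $A\infect{B_1}u$ must have in-edge footprint disjoint from that of $p$ (the paper phrases this as ``the reverse walk from $u$ would loop back to $u$ if it ever touched $p$''), so the joint probability factors and summing yields the inequality. One small correction: in your definition of $\mathcal{P}_1$ you should also require $p_1,\dots,p_{\ell-1}\notin A$ (the reverse walk stops at the \emph{first} vertex of $A$ it meets), so that the events ``$\pi_1$ is live'' for $\pi_1\in\mathcal{P}_1$ genuinely partition $A\infect{B_1}u$.
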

\begin{proof}
Consider any simple path $p$ from $u$ to $v$.
If $u\infect{A\cup B_2}v$ happens with all edges in $p$ being live, then $\Pr(A\infect{B_1}u)\geq \Pr(A\infect{B_1}u\mid p\text{ is live})$.
This is apparent by noticing Remark~\ref{remark:RR}: if $p$ is already live, then the reverse random walk starting from $u$ should reach $A$ without touching any vertices on $p$ (if the random walk touches a vertex in $p$, it will follow the reverse direction of $p$ and eventually go back to $u$), which obviously happens with less probability compared to the case without restricting that the random walk cannot touch vertices on $p$.

Noticing this, the remaining part of the proof is trivial:
$$\Pr\left(A\infect{B_1}u\mid u\infect{A\cup B_2}v\right)=\sum_{p}\frac{\Pr(A\infect{B_1}u\mid p\mbox{ is live})\Pr(p\mbox{ is live})}{\Pr(u\infect{A\cup B_2}v)}$$
$$\qquad\leq\Pr(A\infect{B_1}u)\sum_p\frac{\Pr(p\mbox{ is live})}{\Pr(\{u\}\infect{A\cup B_2}v)}=\Pr\left(A\infect{B_1}u\right),$$
where the summation is over all simple paths $p$ connecting $u$ to $v$ without touching any vertices in $A\cup B_2$, and Remark~\ref{remark:uniquePath} ensures that the events ``$p$ is live'' over all possible such $p$'s form a partition of the event $u\infect{A\cup B_2}v$.
\end{proof}

\subsection{Influence Maximization Is A Special Case of \maxc}
\label{sect:premaxc}
In this section, we establish that linear threshold \infmax is a special case of the well-studied \maxc problem, a folklore that is widely known in the \infmax literature.
This section also introduces some key intuitions that will be used throughout the paper.
We will only discuss the linear threshold model for the purpose of this paper, although submodular \infmax in general can also be viewed as a special case of \maxc.

\begin{definition}
The \maxc problem is an optimization problem which takes as input a universe of elements $ U=\{e_1,\ldots,e_N\}$ , a collection of subsets $\M=\{S_1,\ldots,S_M:S_i\subseteq U\}$ and an positive integer $k$, and outputs a collection of $k$ subsets that maximizes the total number of covered elements: $\displaystyle\calS\in\argmax_{\calS\subseteq\M,|\calS|=k}\left|\bigcup_{S\in\calS}S\right|$.
Given $\calS\subseteq\M$, we denote $\displaystyle\val(\calS)=\left|\bigcup_{S\in\calS}S\right|$.
\end{definition}

It is well-known that the greedy algorithm (that iteratively selects a subset that maximizes the marginal increment of $\val(\cdot)$) achieves a $(1-(1-1/k)^k)$-approximation for \maxc (in Sect.~\ref{sect:propertiesmaxc}, we prove a more general statement stated in Lemma~\ref{lem:maxc_kl}).
On the other hand, this approximation guarantee is tight: for any positive function $f(k)>0$ which may be infinitesimal, there exists a \maxc instance such that the greedy algorithm cannot achieve a $(1-(1-1/k)^k+f(k))$-approximation.\footnote{Our result in Sect.~\ref{sect:generalize} says that the greedy algorithm cannot achieve a $(1-(1-1/k)^k+f(k))$-approximation for the linear threshold \infmax with \emph{directed} graphs, which provides a proof of this, since, as we will see soon, \infmax is a special case of \maxc.}
We will review some properties of \maxc in Sect.~\ref{sect:propertiesmaxc} that will be used in our analysis for \infmax.

\infmax with the linear threshold model can be viewed as a special case of \maxc in that an instance of \infmax can be transformed into an instance of \maxc.
Given an instance of \infmax $(G = (V, E), k)$, let $H$ be the set of all possible live-edge samplings.  That is, $H$ is the set of directed graphs on $V$ that are subgraphs of $G$ where each vertex has in-degree equal to 1.  In particular, $|H| = \prod_{v\in V}\deg(v)$.\footnote{Of course, vertices with in-degree $0$ should be excluded from this product. Whenever we write this product, we always refer to the one excluding vertices with in-degree $0$.} We create an instance of \maxc by letting the universe of elements be $V \times H$, i.e., pairs of vertices and live-edge samplings, $(v, g)$, where $v \in V$ and $g \in H$.  We then create a subset for each vertex $v \in V$. The subset corresponding to $v \in V$ contains $(u, g)$ if $u$ is reachable from $v$ in $g$.
Since
$\sigma(S)=\sum_{v\in V}\Pr(S\rightarrow v)=\sum_{v\in V}\frac{|\{g:\text{ }v\text{ is reachable from }S\text{ under }g\}|}{\prod_{w\in V}\deg(w)}=\frac{|\{(v,g):\text{ }v\text{ is reachable from }S\text{ under }g\}|}{\prod_{w\in V}\deg(w)}$, $\sigma(S)$ equals to the total number of elements covered by ``subsets'' in $S$, divided by $\prod_{v\in V}\deg(v)$.
As a result, $\sigma(S)$ is proportional to the total number of covered elements if viewing $S$ as a collection of subsets.
This establishes that \infmax is a special case of \maxc.
We denote by $\Sigma(S)=\{(u,g):u\mbox{ is reachable from }S\mbox{ under }g\}$ the set of ``elements'' that the ``subsets'' in $S$ cover, and we have $\sigma(S)=|\Sigma(S)|/\prod_{v\in V}\deg(v)$ as discussed above.

Having established the connection between \infmax and \maxc, we take a closer look at the intersection, union and difference of two subsets.
Let $S_1,S_2$ be two seed sets.
$\Sigma(S_1)\cup\Sigma(S_2)$ contains all those $(u,g)$ such that $u$ is reachable from either $S_1$ or $S_2$ under $g$.
Clearly, $\sigma(S_1\cup S_2)=|\Sigma(S_1\cup S_2)|/\prod_{v\in V}\deg(v)=|\Sigma(S_1)\cup\Sigma(S_2)|/\prod_{v\in V}\deg(v)$.
The first equality holds by definition which holds for set intersection and set difference as well.
The last equality, however, does not hold for set intersection and set difference.

$\Sigma(S_1)\cap\Sigma(S_2)$ contains all those $(u,g)$ such that $u$ is reachable from both $S_1$ and $S_2$ under $g$.
We have $|\Sigma(S_1)\cap\Sigma(S_2)|/\prod_{v\in V}\deg(v)=\sum_{v\in V}\Pr((S_1\rightarrow v)\land(S_2\rightarrow v))$.
For the special case where $S_1=\{u_1\}$ and $S_2=\{u_2\}$, by Remark~\ref{remark:uniquePath}, the event $(S_1\rightarrow v)\land(S_2\rightarrow v)$ can be partitioned into two disjoint events: 1) $v$ reaches $u_2$ before $u_1$ in the reverse random walk, $(\{u_1\}\infect{\{v\}} u_2)\land(\{u_2\}\infect{\{u_1\}}v)$, and 2) $v$ reaches $u_1$ before $u_2$ in the reverse random walk, $(\{u_2\}\infect{\{v\}} u_1)\land(\{u_1\}\infect{\{u_2\}}v)$.
For general $S_1$, $S_2$ with $S_1\cap S_2=\emptyset$, the event $(S_1\rightarrow v)\land(S_2\rightarrow v)$ can be partitioned into two disjoint events depending on whether $v$ reversely reaches $S_1$ or $S_2$ first.

Similarly, $\Sigma(S_1)\setminus\Sigma(S_2)$ contains all those $(u,g)$ such that $u$ is reachable from $S_1$ but not from $S_2$ under $g$, we have $|\Sigma(S_1)\setminus\Sigma(S_2)|/\prod_{v\in V}\deg(v)=\sum_{v\in V}\Pr((S_1\rightarrow v)\land\neg(S_2\rightarrow v))$.

\section{Upper Bound on Approximation Guarantee}
\label{sect:upperbound}
In this section, we show that the approximation guarantee for the greedy algorithm on \infmax is at most $(1- (1 - 1/k)^k + O(1/k^{0.2}))$ with the linear threshold model on undirected graphs.
In other words, the approximation surplus is $O(1/k^{0.2})$.
This shows that the approximation guarantee $(1-1/e)$ cannot be asymptotically improved, even if undirected graphs are considered.

Before we prove our main theorem in this section, we need the following lemma characterizing the cascade of a single seed on a complete graph which is interesting on its own.

\begin{lemma}\label{lem:clique}
Let $G$ be a complete graph with $n$ vertices, and let $S$ be a set containing a single vertex.
We have $\sigma(S)<3\sqrt{n}$.
\end{lemma}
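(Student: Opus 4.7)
The plan is to exploit the vertex-transitive symmetry of $K_n$ together with the reverse-random-walk characterization in Remark~\ref{remark:RR}. Let $s$ be the single seed. By symmetry, every non-seed vertex $v$ has the same infection probability $p := \Pr(\{s\}\to v)$, so $\sigma(S) = 1 + (n-1)p$ and it suffices to show that $p$ is $O(1/\sqrt{n})$ with a small enough constant.

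First I would write $p$ down explicitly. In $K_n$ every vertex has all other $n-1$ vertices as in-neighbors, so a reverse random walk from $v$ picks uniformly among $n-1$ candidates at every step. If the walk has taken $t$ steps without stopping and without yet reaching $s$, then it has visited $t+1$ distinct non-seed vertices; hence among the next $n-1$ candidates, exactly $t$ are previously visited (walk terminates), exactly one is $s$ (walk succeeds), and $n-2-t$ are fresh non-seed vertices (walk continues). This yields
\[
p \;=\; \sum_{t=0}^{n-2}\frac{1}{n-1}\prod_{i=0}^{t-1}\frac{n-2-i}{n-1}.
\]

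The core step is bounding this sum. Using $1-x\le e^{-x}$,
\[
\prod_{i=0}^{t-1}\frac{n-2-i}{n-1} \;=\; \prod_{i=0}^{t-1}\!\left(1-\frac{i+1}{n-1}\right) \;\le\; \exp\!\left(-\frac{t(t+1)}{2(n-1)}\right) \;\le\; \exp\!\left(-\frac{t^2}{2(n-1)}\right),
\]
and comparing the tail sum to a Gaussian integral gives
\[
\sum_{t=0}^{\infty}\exp\!\left(-\frac{t^2}{2(n-1)}\right) \;\le\; 1 + \int_0^\infty \exp\!\left(-\frac{x^2}{2(n-1)}\right)dx \;=\; 1 + \sqrt{\pi(n-1)/2}.
\]
Combining, $p \le \bigl(1+\sqrt{\pi(n-1)/2}\bigr)/(n-1)$, hence $\sigma(S) \le 2 + \sqrt{\pi(n-1)/2}$. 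Since $\sqrt{\pi/2}<1.26$, this is at most $2 + 1.26\sqrt{n}$, which is strictly less than $3\sqrt{n}$ whenever $\sqrt{n}>2/1.74$, i.e.\ for all $n\ge 2$; the case $n=1$ is trivial as $\sigma(S)=1<3$.

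I do not anticipate any real obstacle; the analysis essentially reduces to the birthday-style observation that the reverse walk dies in $O(\sqrt{n})$ steps. The only minor care needed is to verify the absolute constant $3$ for all $n$ rather than merely asymptotically, which is immediate once the additive constant $2$ is compared against $\sqrt{n}$.
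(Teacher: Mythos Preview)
Your proof is correct and follows the same reverse-random-walk/birthday-paradox approach as the paper. The only difference is cosmetic: you bound the sum $\sum_t \prod_i \frac{n-2-i}{n-1}$ via $1-x\le e^{-x}$ and a Gaussian integral, whereas the paper splits the sum at $\lceil\sqrt{n}\rceil$ and controls the tail by a geometric series; your route even gives a slightly sharper leading constant ($\sqrt{\pi/2}\approx1.25$ versus the paper's $2$).
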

The proof of Lemma~\ref{lem:clique} is in Appendix~\ref{append:proof_lem:clique}.
The intuition behind this lemma is simply the birthday paradox.
Consider the reverse random walk starting from any particular vertex $v$ with seed set $\{u\}$.  At each step, the walk chooses a random vertex other than the current vertex.  
By the birthday paradox, the expected time for the walk to reach a previously visited vertex is $\Theta(\sqrt{n})$.
The probability $v$ is infected is the probability that the random walk reaches the seed $\{u\}$ before reaching a previously visited vertex.  This is approximately $1-(1-1/n)^{\sqrt{n}}\approx1/\sqrt{n}$.
Finally, by the linearity of expectation, the total number of infected vertices is about $\sqrt{n}$.


The remainder of this section proves the following theorem.
\begin{theorem}\label{thm:upperbound}
Consider \infmax on undirected graphs with the linear threshold model. There exists an instance where the greedy algorithm only achieves a $(1- (1 - 1/k)^k + O(1/k^{0.2}))$-approximation.
\end{theorem}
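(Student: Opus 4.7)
The plan is to construct an undirected graph that embeds the classical tight Max-$k$-Coverage instance using cliques as influence-limiting gadgets, exploiting Lemma~\ref{lem:clique} to cap per-seed cascade within a clique at $O(\sqrt n)$ and thereby expose the Max-$k$-Coverage structure through the LT influence function.

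Starting point: fix the standard tight Max-$k$-Coverage instance with $k$ ``optimal'' partition sets $O_1,\dots,O_k$ of equal size $s$ and $k$ ``greedy-decoy'' sets $G_1,\dots,G_k$ whose marginal gains decrease geometrically so that $|\bigcup_i G_i|=N(1-(1-1/k)^k)$, while each element lies in exactly one $O_j$ and at most one $G_i$. I convert this into an LT instance by replacing each element $e$ with a clique $C_e$ of size $n$, and for every set $S\ni e$ attaching an external ``seed'' vertex (one of $o_j,g_i$) to a designated port vertex of $C_e$. In addition, I hang a small private gadget (a star of degree-$1$ leaves) off each $g_i$, whose size is chosen just large enough to make $g_i$'s individual marginal strictly exceed every $o_j$'s marginal at every round.

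The influence analysis uses Lemma~\ref{lem:clique} together with the reverse random-walk viewpoint of Remark~\ref{remark:RR}: a walk starting in a clique of size $n$ visits $\Theta(\sqrt n)$ distinct vertices before colliding, and conditional on passing through the port (probability $\Theta(1/n)$), it has a further $\Theta(1/n)$ chance of jumping to the attached external seed, so each seed-clique attachment contributes expected influence $\Theta(1/\sqrt n)$ and multiple attachments combine nearly additively so long as the number of attached seeds is $o(\sqrt n)$. Consequently $\sigma(\{g_1,\dots,g_k\})\approx k+\text{(bonuses)}+|\bigcup_i G_i|/\sqrt n$ and $\sigma(\{o_1,\dots,o_k\})\approx k+N/\sqrt n$. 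Writing $A=N/\sqrt n$, the ratio is $\bigl(k+\text{bonuses}+A(1-(1-1/k)^k)\bigr)/(k+A)$, which (after bookkeeping the bonuses) takes the form $1-(1-1/k)^k+\Theta(k/A)$. Taking $A=\Theta(k^{1.2})$ gives the $O(1/k^{0.2})$ surplus claimed.

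The main obstacle is step~4, forcing greedy to commit to $g_1,\dots,g_k$ throughout all $k$ rounds. Because LT influence is nearly additive over disjoint cliques, $o_j$'s marginal does \emph{not} diminish as greedy picks more decoys --- unlike in Max-$k$-Coverage --- so without the private bonus gadgets greedy would simply pick an $o_j$ at round~$2$ and the intended hard ratio would collapse. The bonuses must be calibrated to beat every $o_j$'s marginal round by round yet not be so large that they push greedy's influence up to that of OPT. The resulting joint constraint on $n$, $N$, and the bonus sizes, combined with the $\sqrt n$ scaling from Lemma~\ref{lem:clique}, is precisely what optimizes to $A=\Theta(k^{1.2})$ and produces the exponent $0.2=1/5$ in the theorem.
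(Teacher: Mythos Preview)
Your construction has a genuine gap, and it is exactly at the point you flag as ``the main obstacle.'' In your clique-per-element scheme, a seed attached to $C_e$ infects only $\Theta(1/\sqrt n)$ expected vertices of $C_e$, so seeding the decoys $g_1,\dots,g_{i-1}$ leaves every clique essentially untouched and $o_j$'s marginal stays at $\approx A/k$ through all $k$ rounds. To make greedy prefer $g_i$ at round $i$ you therefore need $\text{bonus}_i + |G_i|/\sqrt n \gtrsim A/k$. With the standard geometrically shrinking decoys $|G_i|=(N/k)(1-1/k)^{i-1}$, summing gives
\[
\sum_{i=1}^k \text{bonus}_i \;\gtrsim\; \sum_{i=1}^k \frac{A}{k}\Bigl(1-(1-1/k)^{i-1}\Bigr)
\;=\; A - A\bigl(1-(1-1/k)^k\bigr) \;=\; A(1-1/k)^k \;\approx\; A/e.
\]
So the bonuses are forced to be $\Theta(A)$, and greedy's total becomes $k + \Theta(A) + A(1-(1-1/k)^k) = \Theta(k+A)$, which matches OPT. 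The ``joint constraint on $n$, $N$, and the bonus sizes'' that you claim optimizes to $A=\Theta(k^{1.2})$ is in fact infeasible: no choice of parameters makes greedy lose by a $(1-1/k)^k$ factor while the bonuses stay lower order. The underlying reason is that in your embedding a decoy seed does \emph{not} ``cover'' its elements in any meaningful sense (it barely touches them), so the Max-$k$-Coverage dynamics---where picking a decoy genuinely erodes the optimal sets' marginals---are simply not reproduced.

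The paper's construction inverts your roles. The cliques are attached to the \emph{optimal} seeds $u_i$, not to the elements; their purpose (via Lemma~\ref{lem:clique}) is to make each $u_i$ hard to infect from the $v$-side, simulating directed edges $u_i\to v_j$. The ``elements'' are stars $D_j$ whose centers $v_j$ are both the greedy decoys and the targets of the optimal seeds (every $u_i$ is joined to every $v_j$). Because seeding $v_j$ covers all of $D_j$ outright, the marginal of each $u_i$ after greedy has taken $v_1,\dots,v_t$ drops to roughly $\tfrac1k\sum_{j>t}|D_j|$, and the star sizes $|D_j|\approx k^{0.8}(1-1/k)^{j-1}$ can be calibrated so that $|D_{t+1}|$ just exceeds this---no separate bonus gadgets are needed. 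The $O(1/k^{0.2})$ term then arises from the unavoidable leakage: the optimal seeds cover each $v_j$ only with probability $k/(k+|D_j|)\approx 1-|D_j|/k$, and making $|D_j|\le k^{0.8}$ bounds this loss by $O(1/k^{0.2})$ while keeping $|D_j|$ large enough to beat $u_i$'s marginal.
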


The \infmax instance mentioned in Theorem~\ref{thm:upperbound} is shown below.

\begin{example}\label{example}
The example is illustrated in Fig.~\ref{fig:example}.
Given the number of seeds $k$, we construct the undirected graph $G=(V,E)$ with $k\lceil k^{1.2}\rceil+\lfloor(1-\frac{100}{k^{0.2}})k^{1.8}\rfloor$ vertices as follows.
Firstly, construct $k$ cliques $C_1,\ldots,C_k$ of size $\lceil k^{1.2}\rceil$, and in each clique $C_i$ label an arbitrary vertex $u_i$ .
Secondly, construct $k$ vertices $v_1,\ldots,v_k$.
For each $i=1,\ldots,k$, create $\lceil k^{0.8}(1-1/k)^{i-1}\rceil-1$ vertices and connect them to $v_i$.
For each $i$, those $\lceil k^{0.8}(1-1/k)^{i-1}\rceil-1$ vertices combined with $v_i$ form a star of size $\lceil k^{0.8}(1-1/k)^{i-1}\rceil$, and we will use $D_i$ to denote the $i$-th star.
Thirdly, we continue creating $\ell$ of these kinds of stars $D_{k+1},\ldots,D_{k+\ell}$ centered at $v_{k+1},\ldots,v_{k+\ell}$ such that $|D_{k+1}|=\cdots=|D_{k+\ell-1}|=\lceil k^{0.8}(1-1/k)^{k}\rceil, |D_{k+\ell}|\leq\lceil k^{0.8}(1-1/k)^{k}\rceil$, and $\sum_{i=1}^{k+\ell}|D_i|=\lfloor(1-\frac{100}{k^{0.2}})k^{1.8}\rfloor$. In other words, we keep creating stars of the same size $\lceil k^{0.8}(1-1/k)^{k}\rceil$ until we reach the point where the total number of vertices in all those stars is $\lfloor(1-\frac{100}{k^{0.2}})k^{1.8}\rfloor$ (we assume $k$ is sufficiently large), where the last star created may be ``partial'' and have a size smaller than $\lceil k^{0.8}(1-1/k)^{k}\rceil$.
Notice that $|D_1|\geq|D_2|\geq\cdots\geq|D_k|\geq|D_{k+1}|=\cdots=|D_{k+\ell-1}|\geq|D_{k+\ell}|=\Theta(k^{0.8})$.\footnote{These inequalities may not be strict. In fact, $|D_1|$ may be equal to $|D_2|$ as $k^{0.8}-k^{0.8}(1-1/k)=1/k^{0.2}<1$.}
Finally, create $k\times(k+\ell)$ edges $\{(u_i,v_j):i=1,\ldots,k;j=1,\ldots,k+\ell\}$.
\end{example}

\begin{figure}
    \centering
    \includegraphics[width=0.8\textwidth]{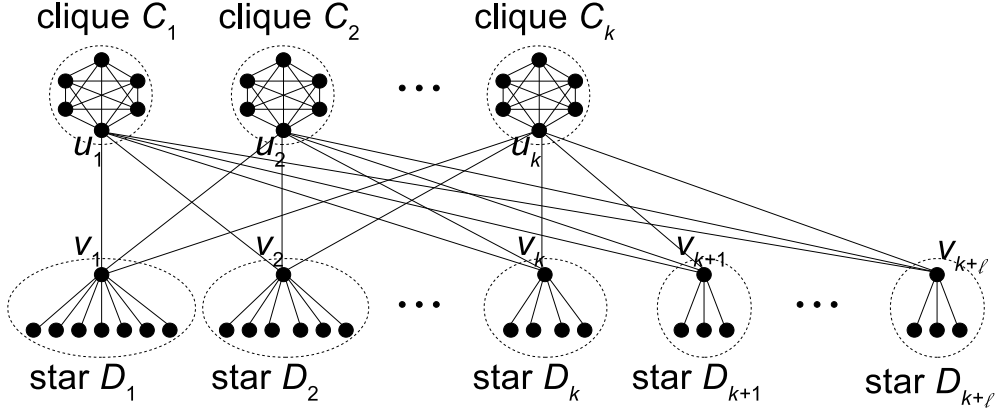}
    \caption{The tight example.}
    \label{fig:example}
\end{figure}


\paragraph{Proof Sketch of Theorem~\ref{thm:upperbound}}
We want that the greedy algorithm picks the seeds $v_1,\ldots,v_k$, while the optimal seeds are $u_1,\ldots,u_k$.
The purpose of constructing a clique $C_i$ for each $u_i$ is to simulate directed edges $(u_i,v_j)$ (such that, as mentioned earlier, each $u_i$ will be infected with $o(1)$ probability even if all of $v_1,\ldots,v_{k+\ell}$ are infected, and the total number of infections among the cliques is negligible so that the ``gadget'' itself is not ``heavy'').
In the optimal seeding strategy, each $v_i$ will be infected with probability $1-o(1)$, as the number of edges connecting to the seeds $u_1,\ldots,u_k$ is $k$, which is significantly more than the number of edges inside $D_i$ (which is at most $\lceil k^{0.8}\rceil$).
Therefore, $\sigma(\{u_1,\ldots,u_k\})\approx\sum_{i=1}^{k+\ell}|D_i|=\lfloor(1-\frac{100}{k^{0.2}})k^{1.8}\rfloor$, which is slightly less than $k^{1.8}$.
Moreover, each $\sigma(\{u_i\})$ is approximately $\frac1k$ of $\sigma(\{u_1,\ldots,u_k\})$, which is slightly less than $k^{0.8}$

The greedy algorithm would pick $v_1$ as the first seed, as $\sigma(v_1)$ is at least $\lceil k^{0.8}\rceil$ (by only accounting for the infected vertices in $D_1$) which is slightly larger than each $\sigma(\{u_i\})$.
After picking $v_1$ as the first seed, the marginal increment of $\sigma(\cdot)$ by choosing each of $u_1,\ldots,u_k$ becomes approximately $\frac1k\sum_{i=2}^{k+\ell}|D_i|=\frac1k(-|D_1|+\sum_{i=1}^{k+\ell}|D_i|)$, which is slightly less than $\frac1k(-\lceil k^{0.8}\rceil +k^{1.8})\approx |D_2|$.
On the other hand, noticing that $v_1$ infects each of $u_1,\ldots,u_k$ as well as $v_2$ with probability $o(1)$, the marginal increment of $\sigma(\cdot)$ by choosing $v_2$ is approximately $|D_2|$, which is slightly larger than the marginal increment by choosing any $u_i$ based on our calculation above.
Thus, the greedy algorithm will continue to pick $v_2$.
In general, we have designed the sizes of $D_1,D_2,\ldots,D_k$ such that they are just large enough to make sure the greedy algorithm will pick $v_1,v_2,\ldots,v_k$ one by one.

Our construction of cliques $C_1,\ldots,C_k$ makes sure that each of $u_1,\ldots,u_k$ will be infected with $o(1)$ probability even if all of $v_1,\ldots,v_k$ are seeded.
Therefore, $\sigma(\{v_1,\ldots,v_k\})\approx\sum_{i=1}^k|D_i|=\sum_{i=1}^k\lceil k^{0.8}(1-1/k)^{i-1}\rceil\leq k+\sum_{i=1}^kk^{0.8}(1-1/k)^{i-1}=k+k^{1.8}(1-(1-1/k)^k)$.
On the other hand, we have seen that $\sigma(\{u_1,\ldots,u_k\})$ is just slightly less than $k^{1.8}$.
To be more accurate, $\sigma(\{u_1,\ldots,u_k\})\approx (1-\frac{100}{k^{0.2}})k^{1.8}$.
Dividing $\sigma(\{v_1,\ldots,v_k\})$ by $\sigma(\{u_1,\ldots,u_k\})$ gives us the desired upper bound on the approximation ratio in Theorem~\ref{thm:upperbound}.
The numbers $0.2,0.8,1.2$ on the exponent of $k$ are optimized for getting the tightest bound while ensuring that the greedy algorithm still picks $v_1,\ldots,v_k$.

The remainder of this section aims to make the arguments above rigorous, and to derive the exact bound $(1- (1 - 1/k)^k + O(1/k^{0.2}))$.

Before we move on, we examine some of the properties of Example~\ref{example} which will be used later.

\begin{proposition}\label{prop:properties}
The followings are true.
\begin{enumerate}
    \item $\ell\leq k$;
    \item $\sigma(\{u_1\})=\cdots=\sigma(\{u_k\})$;
    \item $\sigma(\{v_1\})\geq\cdots\geq\sigma(\{v_{k+\ell}\})$;
    \item The greedy algorithm will never pick any vertices in $V\setminus\{u_1,\ldots,u_k,v_1,\ldots,v_k\}$;
    \item For any $i=1,\ldots,k$ and $j=1,\ldots,k+\ell$, we have $\Pr(u_i\rightarrow v_j)<\frac1k+\frac3{k^{1.2}}$;
    \item For any $i,j\in\{1,\ldots,k\}$ with $i\neq j$, we have $\Pr(u_i\rightarrow u_j)<\frac{2k}{k^{1.2}+2k-1}(\frac1k+\frac3{k^{1.2}})$.
\end{enumerate}
\end{proposition}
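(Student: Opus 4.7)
I will prove each of the six claims in turn, using two recurring structural features of the graph: (i)~the leaves of any star $D_m$ have degree one, so a reverse random walk (Remark~\ref{remark:RR}) starting outside $D_m$ cannot visit a leaf of $D_m$ before visiting $v_m$ itself; and (ii)~every clique $C_i$ communicates with the rest of the graph only through its connector $u_i$, so any reverse walk that enters $C_i$ at a non-connector vertex, or that enters $C_i$ via $u_i$ from outside, cannot leave $C_i$ again without revisiting an already-visited vertex. I will also exploit the graph's natural symmetries (permuting clique labels; permuting star labels among equally-sized stars) and the live-edge interpretation (Theorem~\ref{thm:LT_live}).

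For (1), I plan to substitute the explicit lower bounds $|D_i|\geq k^{0.8}(1-1/k)^{i-1}$ for $i\leq k$ and $|D_i|\geq k^{0.8}(1-1/k)^{k}$ for $i>k$ into the identity $\sum_{i=1}^{k+\ell}|D_i|=\lfloor(1-100/k^{0.2})k^{1.8}\rfloor$. The geometric sum of the first $k$ terms is at least $k^{1.8}(1-(1-1/k)^{k})$, leaving a residual of at most $k^{1.8}[(1-1/k)^{k}-100/k^{0.2}]$ for the last $\ell$ terms; dividing by $k^{0.8}(1-1/k)^{k}$ yields $\ell\leq k[1-100/(k^{0.2}(1-1/k)^k)]<k$ for $k$ sufficiently large. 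For (2), the permutation swapping $(u_i,C_i)$ with $(u_j,C_j)$ while fixing every $v_m$ is a graph automorphism, so $\sigma(\{u_i\})=\sigma(\{u_j\})$. For (3), using structural fact~(i) I will write $\sigma(\{v_i\})=|D_i|+p_i$ where $p_i:=\sum_{w\notin D_i}\Pr(v_i\to w)$ does \emph{not} depend on $|D_i|$ itself (the reverse walk from $w\notin D_i$ never touches a leaf of $D_i$ before reaching $v_i$), and by the $v_i\leftrightarrow v_{i'}$ graph symmetry $p_i$ depends only on the multiset $\{|D_m|\}_{m\neq i}$. I will then show that $p$ is strictly $1$-Lipschitz in each coordinate via a coupling: adding one new leaf $\ell^{*}$ to $D_{i'}$ contributes $\Pr(v_i\to\ell^{*})<1$ (the leaf must survive dilution at $v_{i'}$) and can only weakly decrease every other $\Pr(v_i\to w)$, since the extra dilution at $v_{i'}$ makes reverse walks through $v_{i'}$ die slightly more often. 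Combining Lipschitzness with $|D_i|\geq|D_{i'}|$ yields $\sigma(\{v_i\})\geq\sigma(\{v_{i'}\})$.

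Item (4) is the main obstacle and splits by vertex type. For a leaf $\ell\in D_m$: if $v_m\in S$ then $\ell$ is already deterministically infected (its live edge is forced), so its marginal is zero; otherwise the identity $\sigma(S\cup\{\ell,v_m\})=\sigma(S\cup\{v_m\})$ together with submodularity (Theorem~\ref{thm:submodular}) gives marginal$(\ell\mid S)\leq$ marginal$(v_m\mid S)$. For a non-connector clique vertex $c\in C_i\setminus\{u_i\}$, I plan to upper bound its marginal by $\sigma(\{c\})$, which is itself controlled by a Lemma~\ref{lem:clique}-style calculation: cascades initiated at $c$ are essentially confined to $C_i$ (by structural fact~(ii)) and reach only $O(\sqrt{|C_i|})=O(k^{0.6})$ vertices in expectation, with only a lower-order external contribution through the small probability of reaching $u_i$; simultaneously I will lower bound the marginal of some unused $v_m$ with $m\leq k$ by $\Omega(k^{0.8})$, since its $|D_m|-1$ deterministically infected leaves are untouched by any seed set that does not yet contain $v_m$. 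For an extra center $v_{k+j}$ with $j\geq 1$, I plan to upgrade the swap + $1$-Lipschitz argument of (3) to marginals, relying on the pigeonhole observation that some $v_m$ with $m\leq k$ and $|D_m|\geq|D_{k+j}|$ is always still available at every step (since $|S|<k$). The main technical difficulty will be ensuring the $1$-Lipschitz coupling remains valid in the presence of the arbitrary seed set $S$, which requires tracking how $S$-reachability interacts with the swap of $D_{k+j}$ and $D_m$.

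For (5) I plan to enumerate the odd-length alternating paths $v_j\to u_{m_1}\to v_{n_1}\to\cdots\to u_i$ taken by the reverse walk (by structural fact~(ii), detours into a clique $C_m$ contribute zero since they are dead ends). The direct path contributes $\frac{1}{k+|D_j|-1}\leq\frac{1}{k}$; a length-$(2t+1)$ path contributes a product of factors at most $\frac{1}{k}$ (per $v$-transition) and $\frac{1}{|C_m|-1+k+\ell}\leq\frac{1}{k^{1.2}}$ (per $u$-transition), and there are at most $(k-1)(k+\ell-1)\leq 2k^{2}$ choices for each successive $(m,n)$ pair. Using $k+\ell\leq 2k$ from~(1), the contribution from lengths $\geq 3$ forms a geometric series with ratio $O(1/k^{0.2})$ and sum $O(1/k^{1.2})$; tightening the leading constant yields the bound $\tfrac{1}{k}+\tfrac{3}{k^{1.2}}$. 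For (6), I condition on the first step of the reverse walk from $u_j$: the $|C_j|-1$ options entering $C_j$ are dead ends, and each of the $k+\ell$ options landing at some $v_m$ reduces (after marking $u_j$ as visited, which can only kill additional trajectories) to the bound from (5). Summing and using $k+\ell\leq 2k$ in the numerator together with $|C_j|-1+k+\ell\geq k^{1.2}+2k-1$ in the denominator produces the claimed factor $\tfrac{2k}{k^{1.2}+2k-1}(\tfrac{1}{k}+\tfrac{3}{k^{1.2}})$.
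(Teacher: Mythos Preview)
Your plan for items 1, 2, 5, and 6 tracks the paper's argument closely and is correct. For item 3 you are considerably more careful than the paper (which simply calls it ``trivial''); your swap-plus-$1$-Lipschitz reasoning is a legitimate way to handle the fact that the external contribution $p_i$ depends on the multiset $\{|D_m|\}_{m\neq i}$, and the paper's one-word dismissal glosses over exactly this dependence.

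There is, however, a genuine gap in your treatment of item 4 for non-connector clique vertices $c\in C_i\setminus\{u_i\}$. You propose to bound the marginal of $c$ by $\sigma(\{c\})=O(k^{0.6})$ and then exhibit an unused center $v_m$ ($m\le k$) with marginal $\Omega(k^{0.8})$. The second bound does not hold in general. Item 4 must cover every seed set $S$ the greedy algorithm could have accumulated, and inductively $S$ may already contain many (up to $k-1$) of the connectors $u_1,\dots,u_k$. In that regime the reverse walk from $v_m$ hits a seeded $u_j$ on its very first step with probability at least $(k-1)/\deg(v_m)=(k-1)/(|D_m|-1+k)=1-O(k^{-0.2})$, so $v_m$ and all its leaves are already infected with probability $1-O(k^{-0.2})$; the leaf contribution to the marginal of $v_m$ is then only $(1-\Pr(S\to v_m))(|D_m|-1)=O(k^{-0.2})\cdot O(k^{0.8})=O(k^{0.6})$, not $\Omega(k^{0.8})$. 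Your phrase ``untouched by any seed set that does not yet contain $v_m$'' conflates ``not directly seeded'' with ``not infected'', and the comparison with $\sigma(\{c\})$ collapses. The paper avoids this by comparing $c$ not to a $v_m$ but to a connector: if $u_i\notin S$ one argues directly that $u_i$ dominates $c$ (every reverse walk from outside $C_i$ must pass through $u_i$ before reaching $c$, and inside $C_i$ the targets $u_i$ and $c$ are symmetric from the walker's point of view); if $u_i\in S$, one compares to an unseeded $u_j$ via submodularity and the $C_i\leftrightarrow C_j$ automorphism. You should replace the $v_m$-comparison by this connector-comparison.

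One minor arithmetic point in item 6: you justify the prefactor $\tfrac{2k}{k^{1.2}+2k-1}$ by bounding numerator and denominator separately, asserting $|C_j|-1+k+\ell\ge k^{1.2}+2k-1$. That denominator bound would require $\ell\ge k$, whereas item 1 gives only $\ell\le k$. The desired inequality $\frac{k+\ell}{|C_j|-1+k+\ell}\le\frac{2k}{k^{1.2}+2k-1}$ is still true, but it follows from the monotonicity of $x\mapsto x/(a+x)$ (take $x=k+\ell\le 2k$ and $a=|C_j|-1\ge k^{1.2}-1$), not from separate bounds on numerator and denominator.
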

\begin{proof}
To show 1, suppose $\ell> k$, we will have
\begin{align*}
    \sum_{i=1}^{k+\ell}|D_i|&\geq\sum_{i=1}^k\left\lceil k^{0.8}\left(1-\frac1k\right)^{i-1}\right\rceil+\sum_{i=k+1}^{k+\ell-1}\left\lceil k^{0.8}\left(1-\frac1k\right)^{k}\right\rceil\\
    &\geq k^{1.8}\cdot \left(1-\left(1-\frac1k\right)^k\right)+k\cdot k^{0.8}\left(1-\frac1k\right)^{k}\\
    &=k^{1.8}>\left\lfloor\left(1-\frac{100}{k^{0.2}}\right)k^{1.8}\right\rfloor,
\end{align*}
which violates our construction.

2 follows immediately by symmetry, and 3 is trivial.
As for 4, choosing a seed in $D_i\setminus\{v_i\}$ is clearly sub-optimal, as choose $v_i$ as a seed will make all the remaining vertices in $D_i$ infected with probability $1$.
Choosing a seed in $C_i\setminus\{u_i\}$ is also sub-optimal.
If $u_i$ is not seeded, seeding $u_i$ is clearly better.
Otherwise, seeding any vertices from $u_1,\ldots,u_{i-1},u_{i+1},\ldots,u_k$ is better (notice that we have a total of $k$ seeds, so there are unseeded vertices among these).
This is due to the submodularity: if a clique $C_i$ already contains a seed, putting another seed in the same clique is no better than putting a seed in a new clique that does not contain a seed yet.
Finally, given that the greedy algorithm will choose seeds in $u_1,\ldots,u_k,v_1,\ldots,v_{k+\ell}$, choosing seeds in $\{v_{k+1},\ldots,v_{k+\ell}\}$ is clearly sub-optimal: we should first seed all of $v_1,\ldots,v_k$ before seeding any of $v_{k+1},\ldots,v_{k+\ell}$, but we have a total of only $k$ seeds.

To see 5, consider the reverse random walk starting from $v_j$.
Since $\deg(v_j)=k+|D_j|-1>k$, it will reach $u_i$ in one step with probability less than $1/k$.
It is easy to see that the walk will never reach $u_i$ if it ever reaches any vertex in $V\setminus\{u_1,\ldots,u_k,v_1,\ldots,v_{k+\ell}\}$.
Therefore, the only possibility of the walk reaching $u_i$ is to alternate between $\{u_1,\ldots,u_k\}$ and $\{v_1,\ldots,v_{k+\ell}\}$.
When it reaches a vertex on the $u$-side, it will move to a vertex on the $v$-side with probability $(k+\ell)/(k^{1.2}-1+k+\ell)$.
When it reaches a vertex on the $v$-side, it will move to exactly $u_i$ with probability less than $1/k$, as all vertices in $\{v_1,\ldots,v_{k+\ell}\}$ have degrees more than $k$.
If we disregard the scenario where the random walk visits a vertex that has already been visited (which can only increase the probability that the random walk reaches $u_i$), the random walk reaches $u_i$ at Step~3 with probability less than $\frac1k\cdot\frac{k+\ell}{k^{1.2}-1+k+\ell}$, it reaches $u_i$ at Step~5 with probability less than $\frac1k\cdot(\frac{k+\ell}{k^{1.2}-1+k+\ell})^2$, and so on.
Putting these analyses together,
$$\Pr(u_i\rightarrow v_j)<\sum_{t=0}^\infty\frac1k\cdot\left(\frac{k+\ell}{k^{1.2}-1+k+\ell}\right)^t=\frac1k\cdot\frac{k^{1.2}-1+k+\ell}{k^{1.2}-1}\leq \frac1k+\frac{2k}{k(k^{1.2}-1)}<\frac1k+\frac3{k^{1.2}},$$
where the penultimate inequality uses property 1.

To see 6, the reverse random walk starting from $u_j$ will reach the $v$-side with probability $(k+\ell)/(k^{1.2}-1+k+\ell)\leq2k/(k^{1.2}+2k-1)$ (since $\ell\leq k$ by 1).
Noticing this, property 5 and Lemma~\ref{lemma:negativecorrelation} conclude 6 immediately.
\end{proof}

\begin{proposition}
Given $G$ constructed in Example~\ref{example}, the greedy algorithm will iteratively pick $v_1,\ldots,v_k$.
\end{proposition}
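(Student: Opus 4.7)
The plan is to prove the claim by induction on $i$, with the hypothesis that after $i-1$ greedy iterations the seed set is exactly $\{v_1,\ldots,v_{i-1}\}$. Set $A=\{v_1,\ldots,v_{i-1}\}$. By Proposition~\ref{prop:properties}(4) the candidates at iteration $i$ are confined to $\{u_1,\ldots,u_k\}\cup\{v_i,\ldots,v_k\}$; by the symmetry in Proposition~\ref{prop:properties}(2) (which extends to marginal gains over the $A$-symmetric seed set) all $u_j$ have equal marginal gain over $A$, and by the monotonicity $|D_i|\geq\cdots\geq|D_k|$ together with an argument analogous to the one below the largest $v$-marginal gain is attained at $v_i$. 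Hence it suffices to show that the marginal gain of $v_i$ over $A$ strictly exceeds that of $u_1$ over $A$.

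\emph{Lower bound on the $v_i$ marginal gain.} Every $w\in D_i$ is infected with probability $1$ once $v_i$ is seeded, since any leaf of $D_i$ has $v_i$ as its only neighbor and $v_i$ itself is the seed. The event that $w$ was already infected by $A$ is the event that the reverse random walk from $w$, after arriving at $v_i$, alternates between the $u$- and $v$-sides and reaches some $v_m\in A$; mimicking the geometric-series bound in the proof of Proposition~\ref{prop:properties}(5), this probability is at most $O(1/k^{0.2})$. Hence the $D_i$ contribution alone yields marginal gain of $v_i$ at least $|D_i|(1-O(1/k^{0.2}))\geq k^{0.8}(1-1/k)^{i-1}-O(k^{0.6})$.

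\emph{Upper bound on the $u_1$ marginal gain.} I decompose by region. (i) \emph{Inside $C_1$}: the walk from any $w\in C_1\setminus\{u_1\}$ stays inside $C_1$, so this contribution is at most $3k^{0.6}$ by Lemma~\ref{lem:clique} applied to the clique $C_1$. (ii) \emph{$v$-side}: for $m<i$ the star $D_m$ is fully infected by $A$ (zero marginal contribution), and for $m\geq i$ each $w\in D_m$ satisfies $\Pr(u_1\rightarrow w)\leq 1/k+3/k^{1.2}$ by Proposition~\ref{prop:properties}(5) (leaves of $D_m$ admit exactly the same bound by the equivalence of the reverse walk after its first deterministic step into $v_m$); summing and using the construction's bound $\sum_{m\geq i}|D_m|\leq k^{1.8}[(1-1/k)^{i-1}-100/k^{0.2}]$ (which follows from $\sum_{m=1}^{i-1}|D_m|\geq k^{1.8}(1-(1-1/k)^{i-1})$), this part is at most $k^{0.8}(1-1/k)^{i-1}-97k^{0.6}$. (iii) \emph{Other cliques $C_{j'}$ with $j'\neq 1$}: the event $\{u_1\rightarrow w\}$ for $w\in C_{j'}$ forces $\{u_1\rightarrow u_{j'}\}$, since $C_{j'}$ is attached to the rest of the graph only through $u_{j'}$; conditional on this event, $u_{j'}$'s live incoming edge must come from the $v$-side, while the live edges inside $C_{j'}\setminus\{u_{j'}\}$ remain independent of the conditioning, so the conditional expected number of $w\in C_{j'}$ with $\{u_1\rightarrow w\}$ equals the expected descendant count of $u_{j'}$ in the isolated clique, at most $3k^{0.6}$ by Lemma~\ref{lem:clique}. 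Combining with Proposition~\ref{prop:properties}(6), the contribution per clique is at most $\Pr(u_1\rightarrow u_{j'})\cdot 3k^{0.6}=O(1/k^{0.6})$, for a total of $O(k^{0.4})$. Summing (i)--(iii), the marginal gain of $u_1$ is at most $k^{0.8}(1-1/k)^{i-1}-94k^{0.6}+O(k^{0.4})$.

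Subtracting the two estimates, the marginal gain of $v_i$ exceeds that of $u_1$ by at least $94k^{0.6}-O(k^{0.6})$, which is strictly positive for $k$ sufficiently large, completing the induction. I expect step (iii) of the upper bound to be the main obstacle: the naive estimate $\sum_{w\in C_{j'}}\Pr(u_1\rightarrow w)\leq|C_{j'}|\cdot\Pr(u_1\rightarrow u_{j'})$ yields only an $O(k)$ total contribution across the $k-1$ other cliques, which would overwhelm the $94k^{0.6}$ margin; the conditional-independence trick that replaces one factor of $|C_{j'}|=k^{1.2}$ by $3\sqrt{k^{1.2}}=3k^{0.6}$ via Lemma~\ref{lem:clique} is essential, and the constant $100$ in the $100/k^{0.2}$ slack built into Example~\ref{example} is calibrated precisely so that the $v$-side deficit in the $u_1$-marginal dominates all accumulated $O(k^{0.6})$ error terms.
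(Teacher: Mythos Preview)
Your proof is correct and follows essentially the same route as the paper's: induction on the iteration index, a lower bound on the $v_i$-marginal coming solely from $D_i$ (the paper gets $k^{0.8}(1-1/k)^{i-1}-2k^{0.6}$), and an upper bound on the $u_1$-marginal obtained by decomposing into the three regions and using Lemma~\ref{lem:clique} for the cliques together with Proposition~\ref{prop:properties}(5),(6). Two small cleanups would make the argument watertight: first, the step where you replace the marginal contribution of $u_1$ at a vertex $w$ by $\Pr(u_1\rightarrow w)$ silently uses submodularity (Theorem~\ref{thm:submodular}), which the paper cites explicitly; second, your concluding ``$94k^{0.6}-O(k^{0.6})>0$'' is not a valid inference from $O$-notation alone, so you should track the constant in the $v_i$ lower bound (it is $2$, exactly as in the paper, since $\Pr(A\rightarrow v_i)\le (k+\ell)/(k^{1.2}-1+k+\ell)\le 2/k^{0.2}$), giving a gap of at least $92k^{0.6}-O(k^{0.4})$.
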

\begin{proof}
By 4 in Proposition~\ref{prop:properties}, we will only consider seeds in $\{u_1,\ldots,u_k,v_1,\ldots,v_k\}$.
We will prove this proposition by induction.

For the base step, since choosing $v_1$ is more beneficial than choosing any of $v_2,\ldots,v_k$, we only need to compare $\sigma(\{v_1\})$ to each of $\sigma(\{u_1\}),\ldots,\sigma(\{u_k\})$.
Since $\sigma(\{u_1\})=\cdots=\sigma(\{u_k\})$, we consider $\sigma(\{u_1\})$ without loss of generality.
We aim to find an upper bound for $\sigma(\{u_1\})$ by upper-bounding the probability that each vertex in the graph is infected given a single seed $u_1$.

Firstly, the expected number of infected vertices in $C_1$ is at most $3k^{0.6}$ by Lemma~\ref{lem:clique}.
Next, by 6 in Proposition~\ref{prop:properties}, each of $u_2,\ldots,u_k$ will be infected with probability less than $\Pr(u_i\rightarrow u_j)<\frac{2k}{k^{1.2}+2k-1}(\frac1k+\frac3{k^{1.2}})$.
Moreover, if $u_i$ is not infected, all the remaining vertices in $C_i$ will not be infected.
If $u_i$ is infected, the total number of infected vertices in $C_i$ is at most $3k^{0.6}$ by Lemma~\ref{lem:clique}.
Finally, each vertex $v_1,\ldots,v_k$ will be infected with probability less than $\frac1k+\frac3{k^{1.2}}$ by 5 of Proposition~\ref{prop:properties}.
In addition, if certain $v_i$ is infected, then all vertices in $D_i$ will be infected.
Putting together, we have
\begin{align*}
    \sigma\left(\{u_1\}\right)&\leq 3k^{0.6}+(k-1)\cdot \frac{2k}{k^{1.2}+2k-1}\left(\frac1k+\frac3{k^{1.2}}\right)\cdot3k^{0.6}+\left(\frac1k+\frac3{k^{1.2}}\right)\left\lfloor\left(1-\frac{100}{k^{0.2}}\right)k^{1.8}\right\rfloor\tag{$\dag$}\\
    &\leq 3k^{0.6}+k\cdot\frac{2k}{k^{1.2}}\frac2k\cdot3k^{0.6}+\frac1k\left(1-\frac{100}{k^{0.2}}\right)k^{1.8}+\frac3{k^{1.2}}\left(1-\frac{100}{k^{0.2}}\right)k^{1.8}\\
    &<3k^{0.6}+12k^{0.4}+k^{0.8}-100k^{0.6}+3k^{0.6}\tag{$\ddag$}\\
    &<k^{0.8}.
\end{align*}

On the other hand, we have $\sigma(\{v_1\})\geq|D_1|=\lceil k^{0.8}\rceil>\sigma(\{u_1\})$.
Therefore, the first seed that the greedy algorithm will pick is $v_1$, which concludes the base step of the induction.

For the inductive step, suppose $v_1,\ldots,v_t$ have been chosen by the greedy algorithm in the first $t$ iterations.
We aim to show that the greedy algorithm will pick $v_{t+1}$ next.
By symmetry, with $v_1,\ldots,v_t$ being seeded, the marginal increment of $\sigma(\cdot)$ by seeding each of $u_1,\ldots,u_k$ is the same.
Thus, we only need to show that $\sigma(\{v_1,\ldots,v_{t+1}\})-\sigma(\{v_1,\ldots,v_t\})>\sigma(\{v_1,\ldots,v_{t},u_1\})-\sigma(\{v_1,\ldots,v_t\})$.

To calculate a lower bound for $\sigma(\{v_1,\ldots,v_{t+1}\})-\sigma(\{v_1,\ldots,v_t\})$, we first evaluate the probability $\Pr(\{v_1,\ldots,v_t\}\rightarrow v_{t+1})$.
In order for the reverse random walk starting from $v_{t+1}$ to reach one of $v_1,\ldots,v_t$, it must reach one of $u_1,\ldots,u_k$ in the first step, and then ``escape'' from the clique in the second step.
The probability that the walk escapes from the clique,  $(k+\ell)/(k^{1.2}-1+k+\ell)$, is clearly an upper bound of $\Pr(\{v_1,\ldots,v_t\}\rightarrow v_{t+1})$.
Therefore, with seeds $v_1,\ldots,v_t$, the expected number of infected vertices in $D_{t+1}$ is at most $(k+\ell)/(k^{1.2}-1+k+\ell)\times|D_{t+1}|$.
On the other hand, when $v_{t+1}$ is further seeded, all vertices in $D_{t+1}$ will be infected.
By only considering the marginal gain on the expected number of infected vertices in $D_{t+1}$, we have
$$\sigma(\{v_1,\ldots,v_{t+1}\})-\sigma(\{v_1,\ldots,v_t\})>\left(1-\frac{k+\ell}{k^{1.2}-1+k+\ell}\right)\cdot\left\lceil k^{0.8}\left(1-\frac1k\right)^t\right\rceil$$
$$\qquad>\left(1-\frac{2k}{k^{1.2}}\right)k^{0.8}\left(1-\frac1k\right)^t>k^{0.8}\left(1-\frac1k\right)^t-2k^{0.6}.$$

To find an upper bound for $\sigma(\{v_1,\ldots,v_{t},u_1\})-\sigma(\{v_1,\ldots,v_t\})$. We note that all vertices in $D_1,\ldots,D_t$ are infected with probability $1$ with seeds $v_1,\ldots,v_t$, and we have
\begin{align*}
    &\sigma(\{v_1,\ldots,v_{t},u_1\})-\sigma(\{v_1,\ldots,v_t\})\\ =&\sum_{w\in V\setminus(D_1\cup\cdots\cup D_t)}\left(\Pr(\{v_1,\ldots,v_{t},u_1\}\rightarrow w)-\Pr(\{v_1,\ldots,v_{t}\}\rightarrow w)\right)\\
    <&\sum_{w\in V\setminus(D_1\cup\cdots\cup D_t)}\Pr(\{u_1\}\rightarrow w)\tag{By Theorem~\ref{thm:submodular}},
\end{align*}
so we only need to consider the expected number of infected vertices with the graph containing only one seed $u_1$ and with vertices in $D_1\cup\cdots\cup D_t$ disregarded.

Therefore, if we split $\sigma(\{v_1,\ldots,v_{t},u_1\})-\sigma(\{v_1,\ldots,v_t\})$ into three terms as it is in ($\dag$), the first two terms regarding the expected number of infections on the $k$ cliques are the same as they appeared in ($\dag$), which are less than $3k^{0.6}+12k^{0.4}$ as computed at step ($\ddag$).
By excluding $D_1,\ldots,D_t$ for the third term, we have
\begin{align*}
    &\sigma(\{v_1,\ldots,v_{t},u_1\})-\sigma(\{v_1,\ldots,v_t\})\\
    <&3k^{0.6}+12k^{0.4}+\left(\frac1k+\frac3{k^{1.2}}\right)\left(\left\lfloor\left(1-\frac{100}{k^{0.2}}\right)k^{1.8}\right\rfloor-\sum_{i=1}^{t}\left\lceil k^{0.8}\left(1-\frac1k\right)^i\right\rceil\right)\\
    \leq&3k^{0.6}+12k^{0.4}+\left(\frac1k+\frac3{k^{1.2}}\right)\left(\left(1-\frac{100}{k^{0.2}}\right)k^{1.8}-\sum_{i=1}^{t} k^{0.8}\left(1-\frac1k\right)^i\right)\\
    \leq&3k^{0.6}+12k^{0.4}+k^{0.8}\left(1-\frac{100}{k^{0.2}}\right)-k^{0.8}\left(1-\left(1-\frac1k\right)^t\right)\\
    &+\frac3{k^{1.2}}\cdot \left(\left(1-\frac{100}{k^{0.2}}\right)k^{1.8}-\sum_{i=1}^{t} k^{0.8}\left(1-\frac1k\right)^i\right)\tag{since $\sum_{i=1}^{t} k^{0.8}\left(1-\frac1k\right)^i=k^{1.8}\left(1-\frac1k\right)^t$}\\
    \leq&3k^{0.6}+12k^{0.4}+k^{0.8}\left(1-\frac1k\right)^t-100k^{0.6}+\frac{3}{k^{1.2}}\cdot k^{1.8}\tag{since $\left(\left(1-\frac{100}{k^{0.2}}\right)k^{1.8}-\sum_{i=1}^{t} k^{0.8}\left(1-\frac1k\right)^i\right)\leq k^{1.8}$}\\
    <&k^{0.8}\left(1-\frac1k\right)^t-50k^{0.6}\\
    <&\sigma(\{v_1,\ldots,v_{t+1}\})-\sigma(\{v_1,\ldots,v_t\}),
\end{align*}
which concludes the inductive step.
\end{proof}

We are now ready to prove Theorem~\ref{thm:upperbound}.
Let $S=\{v_1,\ldots,v_k\}$ be the set of seeds selected by the greedy algorithm, and let $S^\ast=\{u_1,\ldots,u_k\}$. 

By only considering infected vertices in $D_1,\ldots,D_{k+\ell}$, we have
$$\sigma(S^\ast)>\frac{k}{k^{0.8}+k}\left\lfloor\left(1-\frac{100}{k^{0.2}}\right)k^{1.8}\right\rfloor>\frac{k}{k^{0.8}+k}\left(k^{1.8}-50k^{1.6}\right),$$
since each of $v_1,\ldots,v_{k+\ell}$ will be infected with probability at least $\frac{k}{k^{0.8}+k}$ (notice that even $v_1$, with the highest degree among $v_1,\ldots,v_{k+\ell}$, has degree only $k^{0.8}+k-1$).

Now consider $\sigma(S)$.
Given seed set $S$, each of $u_1,\ldots,u_k$ will be infected with probability $\frac{k}{k^{1.2}+k-1}$, and each of $v_{k+1},\ldots,v_{k+\ell}$ will be infected with probability at most $\frac{k}{k^{1.2}+k-1}$, as the reverse random walk starting from any of $v_{k+1},\ldots,v_{k+\ell}$ needs to reach one of $u_1,\ldots,u_k$ before reaching a seed in $S$.
Therefore,
\begin{align*}
    \sigma(S)&\leq \sum_{i=1}^{k}\left\lceil k^{0.8}\left(1-\frac1k\right)^i\right\rceil + k\cdot \frac{k}{k^{1.2}+k-1}\cdot3k^{0.6}+\ell\cdot \frac{k}{k^{1.2}+k-1}\left\lceil k^{0.8}\left(1-\frac1k\right)^k\right\rceil\\
    &\leq\sum_{i=1}^{k}k^{0.8}\left(1-\frac1k\right)^i+k+\frac{3k^{2.6}}{k^{1.2}}+k\cdot \frac{k}{k^{1.2}}k^{0.8}\tag{since $\lceil x\rceil\leq x+1$ and $\ell\leq k$}\\
    &=k^{1.8}\left(1-\left(1-\frac1k\right)^k\right)+k+3k^{1.4}+k^{1.6}.
\end{align*}

Finally, the approximation guarantee of the greedy algorithm on the instance described in Example~\ref{example} is at most
\begin{align*}
\frac{\sigma(S)}{\sigma(S^\ast)}&\leq\frac{k^{1.8}\left(1-\left(1-\frac1k\right)^k\right)+k+3k^{1.4}+k^{1.6}}{\frac{k}{k^{0.8}+k}\left(k^{1.8}-50k^{1.6}\right)}\\
&\leq \left(1-\left(1-\frac1k\right)^k\right)\frac{k^{1.8}(k^{0.8}+k)}{k(k^{1.8}-50k^{1.6})}+\frac{k+3k^{1.4}+k^{1.6}}{\frac12\times\frac12k^{1.8}}\tag{since $\frac{k}{k^{0.8}+k}>\frac12$ and $\frac12k^{1.8}\gg 50k^{1.6}$}\\
&=\left(1-\left(1-\frac1k\right)^k\right)\left(1+\frac{51k^{0.8}}{k-50k^{0.8}}\right)+\frac{4+12k^{0.4}+4k^{0.6}}{k^{0.8}}\\
&\leq\left(1-\left(1-\frac1k\right)^k\right)+O\left(\frac1{k^{0.2}}\right),
\end{align*}
which concludes Theorem~\ref{thm:upperbound}.

\section{Lower Bound on Approximation Guarantee}
\label{sect:lowerbound}
In this section, we prove that the greedy algorithm can obtain at least a $(1-(1-1/k)^k+\Omega(1/k^3))$-approximation to $\max_{S\subseteq V:|S|=k}\sigma(S)$, stated in Theorem~\ref{thm:lowerbound}.
That is, the approximation surplus is $\Omega(1/k^3)$.
This indicates that the barrier $1-(1-1/k)^k$ can be overcome if $k$ is a constant.
We have seen that \infmax is a special case of \maxc in Sect.~\ref{sect:premaxc}, and it is known that the greedy algorithm cannot overcome the barrier $1-(1-1/k)^k$ in \maxc.
Theorem~\ref{thm:lowerbound} shows that \infmax with the linear threshold model on undirected graphs has additional structure.
To prove Theorem~\ref{thm:lowerbound}, we first review in Sect.~\ref{sect:propertiesmaxc} some properties of \maxc that are useful to our analysis, and then we prove Theorem~\ref{thm:lowerbound} in Sect.~\ref{sect:prooflowerbound} by exploiting some special properties of \infmax that are not satisfied in \maxc.

\begin{theorem}\label{thm:lowerbound}
Consider \infmax on undirected graphs with the linear threshold model.
The greedy algorithm achieves a $(1-(1-1/k)^k+\Omega(1/k^3))$-approximation.
\end{theorem}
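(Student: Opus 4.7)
Write $s_1,\dots,s_k$ for the greedy picks, $S_i:=\{s_1,\dots,s_i\}$, $G_i:=\sigma(S_i)$, $S^\ast=\{o_1,\dots,o_k\}$ the optimum with $\mathrm{OPT}:=\sigma(S^\ast)$, and $\Delta_i:=G_{i+1}-G_i$. The textbook submodularity + greedy inequality chain gives
\[
\mathrm{OPT}-G_i\,\le\,\sigma(S_i\cup S^\ast)-\sigma(S_i)\,\le\,\sum_{j=1}^k[\sigma(S_i\cup\{o_j\})-\sigma(S_i)]\,\le\,k\Delta_i,
\]
which unrolls to the baseline $G_k\ge(1-(1-1/k)^k)\mathrm{OPT}$. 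My plan is to quantify the per-step slack $\delta_i:=k\Delta_i-(\mathrm{OPT}-G_i)\ge 0$ and show that the weighted total $\sum_{i=0}^{k-1}(1-1/k)^{k-1-i}\delta_i$ is at least $\Omega(\mathrm{OPT}/k^2)$; plugging the strengthened step inequality $\mathrm{OPT}-G_{i+1}\le(1-1/k)(\mathrm{OPT}-G_i)-\delta_i/k$ and unrolling then converts this into the promised additive surplus of $\Omega(1/k^3)$.

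Decompose $\delta_i=\delta^{\mathrm{sub}}_i+\delta^{\mathrm{max}}_i$, where $\delta^{\mathrm{sub}}_i:=\sum_j[\sigma(S_i\cup\{o_j\})-\sigma(S_i)]-[\sigma(S_i\cup S^\ast)-\sigma(S_i)]$ is the pairwise submodular overlap among the remaining optimal seeds above background $S_i$, and $\delta^{\mathrm{max}}_i:=k\Delta_i-\sum_j[\sigma(S_i\cup\{o_j\})-\sigma(S_i)]$ is the gap between the greedy marginal and the average optimal marginal. The argument is a case split: if some greedy pick is already in $S^\ast$, or if at some $i$ we have $\delta^{\mathrm{max}}_i\ge\Omega(\mathrm{OPT}/k^3)$, or if some pair of optimal seeds already overlaps by $\Omega(\mathrm{OPT}/k^3)$, we are done. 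Otherwise we are in a ``tight regime'' in which the greedy set is disjoint from $S^\ast$, the greedy marginal and every optimal-seed marginal agree up to $1\pm o(1)$ at every step, and all pairwise overlaps are negligible; this is precisely the case where the undirected linear threshold structure must be exploited.

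The central technical claim (to be proved in Section~\ref{sect:prooflowerbound}) is a reciprocity lemma: on an undirected graph, if two distinct non-background vertices $u$ and $v$ each have marginal influence of order $\Delta$ on top of a background $A$, then at least $\Omega(\Delta/k)$ of their marginals must coincide. Its intuition is pure live-edge combinatorics. By Remark~\ref{remark:uniquePath}, any vertex $w$ commonly infected by $\{u\}$ and $\{v\}$ forces one of $u,v$ to lie on the other's live reverse path, so the overlap decomposes (via Theorem~\ref{thm:LT_live}) as a sum over simple $u$-to-$v$ paths. On an undirected graph the probability that a fixed simple path $p$ becomes live in the $u\to v$ direction versus the $v\to u$ direction differs only by the degree ratio $\deg(u)/\deg(v)$; combining this detailed-balance-type identity with Lemma~\ref{lemma:negativecorrelation} lets me trade a portion of the $\{u\}\to w$ marginal against $A$ for an $\{v\}\to w$ overlap. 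Summing the resulting pairwise overlap over the $\binom{k}{2}$ pairs of optimal seeds yields $\delta^{\mathrm{sub}}_i=\Omega(\mathrm{OPT}/k^2)$ at any tight step, which together with the recurrence above gives the theorem.

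The main obstacle is the reciprocity lemma itself. The degree-ratio correction becomes problematic when $u$ and $v$ have very unequal degrees, so I expect to need a short preliminary case-split: if one of the two candidate seeds has atypically large degree, that vertex alone is already a dominant contributor to $\mathrm{OPT}$ and exploitable directly; otherwise an averaging/symmetrization over the pair controls the ratio. The secondary complication is that the negative-correlation bound of Lemma~\ref{lemma:negativecorrelation} must be applied path-by-path while the events ``$p$ is live'' are only disjoint thanks to Remark~\ref{remark:uniquePath}; I will use this disjointness to aggregate the per-path overlap estimates into a clean global overlap bound without double counting.
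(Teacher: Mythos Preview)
Your framework (accumulate per-step slack $\delta_i$ and unroll) is fine, but the central reciprocity lemma is false as stated, and the detailed-balance heuristic does not rescue it. Take the paper's own tight construction (Example~\ref{example}): the optimal seeds $u_1,\dots,u_k$ each have marginal $\Delta\approx k^{0.8}$, yet $\Pr(u_i\to u_j)=O(k^{-1.2})$ (Proposition~\ref{prop:properties}(6)), so by the path decomposition and Lemma~\ref{lemma:negativecorrelation} the pairwise overlap is $O(k^{-0.4})$, which is $o(\Delta/k)=o(k^{-0.2})$. Thus all $\binom{k}{2}$ pairwise overlaps together still give $\delta_i^{\mathrm{sub}}=o(\mathrm{OPT}/k^2)$. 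Your detailed-balance identity $\Pr(u\to w)\deg(w)=\Pr(w\to u)\deg(u)$ is correct, but it only relates the two directions of a \emph{fixed} pair; it does not force $\Pr(u\to v)$ to be large merely because $\sigma(\{u\})$ and $\sigma(\{v\})$ are comparable, and two high-influence vertices can sit in essentially disjoint regions of the graph.

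The paper's proof does not try to produce overlap \emph{among} optimal seeds. Instead it studies the first greedy pick $v_1$ against the whole optimal set $S^\ast$: by Lemma~\ref{lem:maxc_firstintersection}, $|\Sigma(\{v_1\})\cap\Sigma(S^\ast)|$ must be close to $\tfrac1k|\Sigma(S^\ast)|$, and this intersection is split into three pieces $C_1,C_2,C_3$ according to whether the reverse walk from $w$ hits $v_1$ before $S^\ast$, hits a unique $u_i\in S^\ast$ before $v_1$, or hits two optimal seeds before $v_1$. The undirected-graph ingredient is not reciprocity but Lemma~\ref{lem:|A|2}: $\Pr(S^\ast\to v_1\mid \{v_1\}\infect{S^\ast}w)\le \tfrac{k}{k+1}$, proved by a path-reversal injection (Lemma~\ref{lem:|A|1}). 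This converts a large $C_1$ (or $C_2$, via Lemma~\ref{lem:deg+1}) into a large $|\Sigma(\{v_1\})\setminus\Sigma(S^\ast)|$, which Lemma~\ref{lem:maxc_outside} turns into surplus; a large $C_3$ is exactly overlap within $S^\ast$ and is handled by Lemma~\ref{lem:maxc_disjoint}. So the slack ultimately comes from the greedy seed covering mass \emph{outside} $\Sigma(S^\ast)$, not from optimal seeds covering each other. If you want to repair your outline, replace the reciprocity lemma by a statement of the form ``if $\{v_1\}\infect{S^\ast}w$, then with probability at least $1/(k+1)$ the walk from $v_1$ misses $S^\ast$ entirely,'' and route the slack through $\delta^{\mathrm{max}}$ and Lemma~\ref{lem:maxc_outside} rather than through $\delta^{\mathrm{sub}}$.
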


\subsection{Some Properties of \maxc}
\label{sect:propertiesmaxc}
In this section, we list some of the properties of \maxc which will be used in proving Theorem~\ref{thm:lowerbound}.
The proofs of the lemmas in this section are all standard, and are deferred to the appendix.
For all the lemmas in this section, we are considering a \maxc instance $(U,\M,k)$, where $\calS=\{S_1,\ldots,S_k\}$ denotes the $k$ subsets output by the greedy algorithm and $\calS^\ast=\{S_1^\ast,\ldots,S_k^\ast\}$ denotes the optimal solution.

\begin{restatable}{lemma}{maxcgo}\label{lem:maxc_g=o}
 If $S_1\in\calS^\ast$, then $\val(\calS)\geq(1-(1-\frac1k)^k+\frac1{4k^2})\val(\calS^\ast)$.
\end{restatable}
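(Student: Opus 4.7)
The plan is to exploit two independent gains that the hypothesis ``$S_1 \in \calS^\ast$'' provides at once. First, because greedy's initial pick has maximum cardinality, every set in $\calS^\ast$ has size at most $|S_1|$, so pigeonhole gives $|S_1| \geq \val(\calS^\ast)/k$. Second, after $S_1$ is committed only $k-1$ witnesses $\calS^\ast \setminus \{S_1\}$ remain, which lets us replace the usual per-step decay factor $(1-1/k)$ by $(1-1/(k-1))$ for the remaining greedy steps.

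Quantitatively, I would track the shortfall $y_i := \val(\calS^\ast) - \val(\{S_1,\dots,S_i\})$. The first observation yields $y_1 \leq (1-1/k)\val(\calS^\ast)$. For each $i \geq 1$, submodularity and monotonicity imply that the $k-1$ sets of $\calS^\ast \setminus \{S_1\}$ still cover at least $y_i$ currently-uncovered elements, so some set has marginal gain at least $y_i/(k-1)$, and greedy satisfies $y_{i+1} \leq (1-1/(k-1))\,y_i$. Iterating gives
\begin{equation*}
\val(\calS) \;\geq\; \Bigl(1 - \bigl(1-\tfrac{1}{k-1}\bigr)^{k-1}\bigl(1-\tfrac{1}{k}\bigr)\Bigr)\val(\calS^\ast).
\end{equation*}

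It then remains to check the purely numerical inequality
\begin{equation*}
\bigl(1-\tfrac{1}{k}\bigr)^k - \bigl(1-\tfrac{1}{k-1}\bigr)^{k-1}\bigl(1-\tfrac{1}{k}\bigr) \;\geq\; \tfrac{1}{4k^2}
\end{equation*}
for $k \geq 2$. Factoring $(1-1/k)$ out of the left side and applying the mean value theorem to $f(x) = x^{k-1}$ on the interval $[\,1-1/(k-1),\,1-1/k\,]$ of length $1/(k(k-1))$ gives $f(1-1/k) - f(1-1/(k-1)) = \xi^{k-2}/k$ for some $\xi$ in this interval. Since $\xi \geq 1 - 1/(k-1)$ and $(1-1/n)^n \geq 1/4$ for all $n \geq 2$, we have $\xi^{k-2} \geq 1/4$ whenever $k \geq 3$, so the left side is at least $(1-1/k)/(4k) \geq 1/(8k)$, which comfortably exceeds $1/(4k^2)$. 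The case $k=2$ is immediate since $(1-1/(k-1))^{k-1} = 0$ and greedy in fact recovers OPT exactly.

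The main obstacle is conceptual rather than technical: one must notice that the two gains above are genuinely independent and must both be deployed. Using only the head-start $|S_1| \geq \val(\calS^\ast)/k$ inside the standard $(1-1/k)$-recursion merely returns the vanilla $1 - (1-1/k)^k$ bound; the surplus comes from the sharper $(1-1/(k-1))$ decay available on the post-$S_1$ residual instance, which is unlocked precisely because one OPT witness has already been spent.
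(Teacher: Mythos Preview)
Your proof is correct and follows essentially the same approach as the paper: both use $|S_1|\geq\val(\calS^\ast)/k$ for the first step and then exploit that only $k-1$ optimal sets remain to get a $(1-1/(k-1))$ per-step decay on the residual, arriving at the identical bound $\val(\calS)\geq\bigl(1-(1-\tfrac{1}{k-1})^{k-1}(1-\tfrac{1}{k})\bigr)\val(\calS^\ast)$. The only difference is cosmetic, in the verification of the numerical inequality: the paper factors via the identity $(1-\tfrac{1}{k})(1-\tfrac{1}{(k-1)^2})=1-\tfrac{1}{k-1}$, whereas you use the mean value theorem---both routes comfortably beat $1/(4k^2)$.
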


\begin{restatable}{lemma}{maxcfirstintersection}\label{lem:maxc_firstintersection}
If $\frac{|S_1\cap(\bigcup_{i=1}^kS_i^\ast)|}{\val(\calS^\ast)}\notin[\frac1k-\varepsilon,\frac1k+\varepsilon]$ for some $\varepsilon>0$ which may depend on $k$, then $\val(\calS)\geq(1-(1-1/k)^k+\varepsilon/4)\val(\calS^\ast)$.
\end{restatable}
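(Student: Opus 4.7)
The plan is to refine the standard Nemhauser--Wolsey analysis so that it tracks, step by step, how many \emph{non-OPT} elements (elements outside $B := \bigcup_{i=1}^k S_i^\ast$) the greedy algorithm covers, and then to show that whenever the hypothesis $|S_1 \cap B|/\val(\calS^\ast) \notin [1/k - \varepsilon, 1/k + \varepsilon]$ holds, a surplus of at least $\varepsilon/4$ is produced in either direction of the interval.

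First I will set up the following notation: let $A_i = \bigcup_{j \leq i} S_j$, let $w_i = |B| - |A_i \cap B|$ denote the uncovered OPT mass after step $i$, and let $c_i = |(S_i \setminus B) \setminus A_{i-1}|$ denote the new non-OPT elements gained at step $i$. Partitioning $A_k$ into its intersection with $B$ and its complement gives the identity $\val(\calS) = (|B| - w_k) + \sum_{i=1}^k c_i$. The usual pigeonhole argument yields, at each step, an optimal set $S_j^\ast$ with $|S_j^\ast \setminus A_{i-1}| \geq w_{i-1}/k$, so greedy's marginal gain $(w_{i-1} - w_i) + c_i$ is at least $w_{i-1}/k$, which gives the augmented recursion $w_i \leq (1 - 1/k)\,w_{i-1} + c_i$. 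Unrolling this and substituting into the identity will produce
\[
  \val(\calS) \geq \bigl(1 - (1 - 1/k)^k\bigr) |B| + \sum_{i=1}^k \bigl(1 - (1 - 1/k)^{k-i}\bigr) c_i,
\]
namely the usual $(1-(1-1/k)^k)$ guarantee plus a nonnegative bonus weighted by the non-OPT elements.

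Next I will split on the hypothesis. In the ``large'' case $|S_1 \cap B| > (1/k + \varepsilon)|B|$, I will strengthen the first step of the recursion by starting from $w_1 < (1 - 1/k - \varepsilon)|B|$ instead of $w_0 = |B|$; this injects an extra $\varepsilon (1 - 1/k)^{k-1}|B|$ into the final bound, and since $(1 - 1/k)^{k-1} \geq 1/e > 1/4$ for all $k \geq 2$, the surplus exceeds $\varepsilon|B|/4$. In the ``small'' case $|S_1 \cap B| < (1/k - \varepsilon)|B|$, I will use the fact that $S_1$ is the single largest set in $\M$ and that some $S_j^\ast$ has size at least $|B|/k$, so $|S_1| \geq |B|/k$ forces $c_1 = |S_1 \setminus B| > \varepsilon|B|$. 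The bonus term at $i = 1$ is then $(1 - (1 - 1/k)^{k-1}) c_1 \geq c_1/2 > \varepsilon|B|/2$ for $k \geq 2$, again beating $\varepsilon|B|/4$. The edge case $k = 1$ is trivial because greedy automatically attains ratio $1$.

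The main obstacle will be purely a matter of bookkeeping: unrolling the augmented recursion cleanly and verifying the two elementary inequalities $(1 - 1/k)^{k-1} \geq 1/e$ and $1 - (1 - 1/k)^{k-1} \geq 1/2$ for $k \geq 2$, both of which follow by monotonicity in $k$ together with the familiar limits of $(1-1/k)^{k-1}$. There is no substantively new idea beyond the observation that non-OPT elements covered by greedy contribute in two ways --- directly to $\val(\calS)$ and implicitly by weakening the greedy recursion for $w_i$ --- and that these two effects reinforce each other to yield a surplus in each sub-case.
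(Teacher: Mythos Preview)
Your proof is correct and is essentially the paper's argument unrolled: both split on whether $|S_1\cap B|/\val(\calS^\ast)$ lies above or below $1/k$, and both ultimately rely on the bound $|(\bigcup_{i\ge 2}S_i)\setminus S_1|\ge (1-(1-1/k)^{k-1})\,|B\setminus S_1|$ together with the elementary estimates $(1-1/k)^{k-1}>1/4$ and $1-(1-1/k)^{k-1}\ge 1/2$. The only difference is packaging: the paper obtains that bound by restricting to $U\setminus S_1$ and invoking Lemma~\ref{lem:maxc_kl} as a black box, whereas you re-derive it by explicitly carrying the non-OPT increments $c_i$ through the Nemhauser--Wolsey recursion; this is a cosmetic reorganization rather than a different idea.
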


\begin{restatable}{lemma}{maxcdisjoint}\label{lem:maxc_disjoint}
If $\sum_{i=1}^k|S_i^\ast|>(1+\varepsilon)\val(\calS^\ast)$ for some $\varepsilon>0$ which may depend on $k$, then $\val(\calS)\geq(1-(1-\frac1k)^k+\frac\varepsilon{8k})\val(\calS^\ast)$.
\end{restatable}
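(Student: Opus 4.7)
The plan is to leverage the overlap hypothesis to squeeze extra progress out of the very first greedy step, and then carry this surplus through the standard $(1-1/k)$-per-step residual decay. The intuition is that the hypothesis $\sum_i |S_i^\ast| > (1+\varepsilon)\val(\calS^\ast)$ forces the optimal subsets to be collectively larger (counted with multiplicity) than their union, so by pigeonhole the largest of them is strictly bigger than a $1/k$ fraction of $\val(\calS^\ast)$; and the greedy's first pick is at least as large as the largest optimal subset.

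More concretely, let $\calS_i = \{S_1,\ldots,S_i\}$ and let $R_i = \val(\calS^\ast) - \val(\calS_i)$ denote the residual. I would begin by recalling the usual averaging argument: for every $i$ there is some $S_j^\ast$ whose marginal coverage on top of $\calS_{i-1}$ is at least $R_{i-1}/k$, so the greedy pick satisfies $\val(\calS_i) - \val(\calS_{i-1}) \geq R_{i-1}/k$ and thus $R_i \leq R_{i-1}(1-1/k)$.

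Next I would sharpen the bound for the very first step, which is where the hypothesis buys something extra. Since $S_1$ maximizes $|S_1|$ over all candidate subsets in $\M$, in particular $|S_1| \geq \max_j |S_j^\ast| \geq \frac{1}{k}\sum_{j=1}^k |S_j^\ast| > \frac{1+\varepsilon}{k}\val(\calS^\ast)$, where the last inequality is precisely the hypothesis of the lemma. This gives $R_1 < \bigl(1 - \frac{1+\varepsilon}{k}\bigr)\val(\calS^\ast)$. Compounding the $(1-1/k)$-decay over the remaining $k-1$ steps then yields
\[ R_k < \left(1 - \frac{1+\varepsilon}{k}\right)\left(1 - \frac{1}{k}\right)^{k-1}\val(\calS^\ast) = \left[\left(1-\frac{1}{k}\right)^k - \frac{\varepsilon}{k}\left(1-\frac{1}{k}\right)^{k-1}\right]\val(\calS^\ast). \]

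To finish, I would note that $(1-1/k)^{k-1}$ decreases monotonically toward $1/e$ from above, so $(1-1/k)^{k-1} \geq 1/e > 1/8$ for all $k \geq 2$ (and the hypothesis is vacuous for $k=1$, since then $\sum_i |S_i^\ast| = |S_1^\ast| = \val(\calS^\ast)$, contradicting $\varepsilon > 0$). Consequently the surplus term is at least $\varepsilon/(8k)$, and $\val(\calS) = \val(\calS^\ast) - R_k$ delivers the claimed approximation ratio. I do not foresee a real obstacle: the argument is a one-step augmentation of the classical greedy analysis, and the constant $1/8$ is loose on purpose to match the statement cleanly.
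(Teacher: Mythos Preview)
Your proof is correct and follows essentially the same approach as the paper: both observe by pigeonhole that the hypothesis forces $|S_1|>\frac{1+\varepsilon}{k}\val(\calS^\ast)$ and then convert this first-step surplus into the stated approximation bound. The paper packages the second step by invoking its auxiliary Lemma~\ref{lem:maxc_first}, whereas you carry out the residual-decay argument directly; your route is slightly cleaner and in fact yields the better constant $(1-1/k)^{k-1}>1/e$ in place of $1/8$.
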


\begin{restatable}{lemma}{maxcoutside}\label{lem:maxc_outside}
If $|S_1\setminus(\bigcup_{i=1}^kS_i^\ast)|>\varepsilon\val(\calS^\ast)$ for some $\varepsilon>0$ which may depend on $k$, then $\val(\calS)\geq(1-(1-1/k)^k+\varepsilon/16)\val(\calS^\ast)$.
\end{restatable}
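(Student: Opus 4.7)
The plan is a direct greedy-analysis argument that exploits the ``bonus'' provided by the elements of $S_1$ lying outside the optimum's support, elements that do not compete with anything OPT does. Let $A=\bigcup_{i=1}^k S_i^\ast$, $v=|A|=\val(\calS^\ast)$, and split $|S_1|=a+b$ with $a=|S_1\cap A|$ and $b=|S_1\setminus A|>\varepsilon v$. Because greedy chooses the largest set and the average $|S_i^\ast|$ is at least $v/k$, I will also carry along the crucial inequality $a+b\geq v/k$.

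The main step is a refined ``OPT on the remainder'' bound. Writing $g_i$ for the $i$-th greedy marginal, I will show that for every $i\geq 2$,
\[
g_i\;\geq\;\tfrac{1}{k}\bigl(v-a-g_2-\cdots-g_{i-1}\bigr).
\]
The reason is that the elements of $A$ covered after steps $1,\ldots,i-1$ total at most $a+g_2+\cdots+g_{i-1}$ (each subsequent marginal contributes at most $g_j$ new elements to $A$, while $S_1$ contributes exactly $a$), so $|A\setminus(S_1\cup\cdots\cup S_{i-1})|\geq v-a-g_2-\cdots-g_{i-1}$, and averaging the marginals of the $k$ optimal sets over this uncovered portion of $A$ exhibits a candidate set whose marginal greedy beats. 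Setting $G'_i:=g_2+\cdots+g_i$ with $G'_1=0$, the inequality becomes $G'_i\geq(1-1/k)G'_{i-1}+(v-a)/k$, which telescopes to $G'_k\geq (v-a)\alpha$, where $\alpha:=1-(1-1/k)^{k-1}$.

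Combining $g_1=a+b$ with this, $\val(\calS)\geq(a+b)+(v-a)\alpha=v\alpha+a(1-\alpha)+b$. Using the algebraic identity $v\bigl(1-(1-1/k)^k\bigr)=v\alpha+v(1-\alpha)/k$, the surplus simplifies to
\[
\val(\calS)-v\bigl(1-(1-1/k)^k\bigr)\;\geq\;(1-\alpha)(a-v/k)+b.
\]

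The final step, and the only place where a little care is needed, is a short case split. If $a\geq v/k$, the correction is nonnegative and the surplus is already $\geq b>\varepsilon v$. If $a<v/k$, the bound $a+b\geq v/k$ forces $b\geq v/k-a$, and so the surplus is at least $b-(1-\alpha)(v/k-a)\geq b-(1-\alpha)b=\alpha b>\alpha\varepsilon v$. Since $(1-1/k)^{k-1}$ is decreasing in $k$ with value $1/2$ at $k=2$, we have $\alpha\geq 1/2$ for all $k\geq 2$ (and $k=1$ is trivial, as greedy is then optimal), so in either case the surplus is at least $\varepsilon v/2$, comfortably more than the stated $\varepsilon v/16$. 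The only mild obstacle is remembering to invoke $g_1\geq v/k$ precisely in the regime where the correction term turns negative; everything else is routine bookkeeping.
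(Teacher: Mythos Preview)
Your proof is correct (for $k\geq 2$) but takes a genuinely different route from the paper. The paper argues by reduction to two auxiliary lemmas: if $|S_1\cap A|<(\tfrac1k-\tfrac\varepsilon2)v$ it invokes Lemma~\ref{lem:maxc_firstintersection}, and otherwise $|S_1|\geq(\tfrac1k+\tfrac\varepsilon2)v$ so it invokes the ``large first set'' lemma (their Lemma stating that $|S_1|\geq(\tfrac1k+\varepsilon)v$ implies surplus $\varepsilon/8$). Your argument is self-contained: you run the greedy recursion directly, separating the contribution of $S_1$ inside $A$ (the $a$ term) from the bonus outside $A$ (the $b$ term), and arrive at the clean surplus bound $(1-\alpha)(a-v/k)+b$. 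Your case split on $a\gtrless v/k$ then plays the same structural role as the paper's case split on $a\gtrless(\tfrac1k-\tfrac\varepsilon2)v$, but without routing through the other lemmas. A pleasant byproduct is that you obtain surplus $\geq \varepsilon v/2$, substantially better than the stated $\varepsilon v/16$; the paper's constant is weakened by the chain of reductions.

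One small caveat: your ``$k=1$ is trivial, as greedy is then optimal'' does not actually settle the case. Optimality only gives $\val(\calS)=\val(\calS^\ast)$, whereas the conclusion at $k=1$ reads $\val(\calS)\geq(1+\varepsilon/16)\val(\calS^\ast)$, which is false whenever two distinct maximum-cardinality sets exist and greedy picks one not equal to $S_1^\ast$. The paper is equally loose here (it asserts in the proof of Lemma~\ref{lem:maxc_firstintersection} that the premise cannot hold when $k=1$ because ``$\calS^\ast=\{S_1\}$''), and in the paper's application the case $S_1\in\calS^\ast$ is handled separately anyway, so this is a shared wart rather than a defect specific to your argument.
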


\begin{restatable}{lemma}{maxcbalance}\label{lem:maxc_balance}
If there exists $S_i^\ast\in\calS^\ast$ such that $|S_i^\ast|<(\frac1k-\varepsilon)\val(\calS^\ast)$ for some $\varepsilon>0$ which may depend on $k$, then $\val(\calS)\geq(1-(1-\frac1k)^k+\frac\varepsilon{8k})\val(\calS^\ast)$.
\end{restatable}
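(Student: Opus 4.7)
The plan is to sharpen the greedy analysis at just the very first iteration and then propagate that one-step improvement through the standard Nemhauser--Wolsey telescoping argument. Write $O := \val(\calS^\ast)$ and $g_i := \val(\{S_1,\ldots,S_i\})$, with $g_0 := 0$; the goal is to lower-bound $g_k$.

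First I would establish a stronger bound on $g_1$ than the usual ``$g_1 \geq O/k$'' estimate by using the small optimal set as a ``pigeonhole'' witness. Since $g_1 = |S_1| \geq \max_j |S_j^\ast|$ and $\sum_j |S_j^\ast| \geq |\bigcup_j S_j^\ast| = O$, but by hypothesis one summand $|S_{j_0}^\ast|$ is at most $(1/k - \varepsilon)O$, the remaining $k-1$ optimal sets must satisfy $\sum_{j \neq j_0} |S_j^\ast| > (1 - 1/k + \varepsilon)O$. Averaging gives
\[
g_1 \;\geq\; \max_{j \neq j_0} |S_j^\ast| \;\geq\; \left(\tfrac{1}{k} + \tfrac{\varepsilon}{k-1}\right)O,
\]
so the slack after one step is $O - g_1 \leq (1 - 1/k - \varepsilon/(k-1))\,O$.

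Next I would invoke the standard inequality $O - g_{i+1} \leq (1 - 1/k)(O - g_i)$ (which follows, as in the proof of Lemma~\ref{lem:maxc_kl}, by noting that some $S_j^\ast \setminus \bigcup_{\ell \leq i} S_\ell$ has size at least $(O-g_i)/k$) and iterate it through the remaining $k-1$ rounds:
\[
O - g_k \;\leq\; \left(1 - \tfrac{1}{k}\right)^{\!k-1}(O - g_1) \;\leq\; \left(1 - \tfrac{1}{k}\right)^{\!k} O \;-\; \frac{(1 - 1/k)^{k-1}}{k-1}\,\varepsilon\,O.
\]
Finally, using $(1 - 1/k)^{k-1} \geq 1/e$ together with the elementary bound $1/(e(k-1)) \geq 1/(8k)$ (valid for all $k \geq 2$; the case $k=1$ is vacuous, since then $|S_1^\ast| = O$ rules out any $\varepsilon > 0$), we read off $g_k \geq (1 - (1-1/k)^k + \varepsilon/(8k))\,O$, which is the desired inequality.

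The argument is essentially routine; the only real piece of care is bookkeeping of constants. The main ``obstacle'' is verifying that the $\varepsilon/(k-1)$ gained at the first step survives being damped by the subsequent geometric factor $(1 - 1/k)^{k-1}$, and choosing the denominator $8$ conservatively enough that the resulting inequality $\varepsilon/(8k) \leq (1-1/k)^{k-1}\varepsilon/(k-1)$ holds uniformly in $k \geq 2$ without any further case split.
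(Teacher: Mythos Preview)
Your proof is correct. Both you and the paper establish the same first-step bound $|S_1|>(1/k+\varepsilon/(k-1))\val(\calS^\ast)$ via the same pigeonhole argument; the only difference is in the second half: the paper weakens $\varepsilon/(k-1)$ to $\varepsilon/k$ and then invokes Lemma~\ref{lem:maxc_first} (which itself branches on the size of $S_1\cap\bigcup_i S_i^\ast$ and appeals to Lemma~\ref{lem:maxc_firstintersection}), whereas you propagate the first-step gain directly through the standard telescoping $O-g_{i+1}\le(1-1/k)(O-g_i)$. Your route is slightly more self-contained since it avoids the case split hidden inside Lemma~\ref{lem:maxc_first}, at the cost of redoing the Nemhauser--Wolsey recursion rather than citing a lemma already on hand; the two arguments are otherwise the same in spirit and yield the same constant.
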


\subsection{Proof of Theorem~\ref{thm:lowerbound}}
\label{sect:prooflowerbound}
We begin by proving some properties that are exclusively for \infmax.

\begin{lemma}\label{lem:|A|1}
Given a subset of vertices $A\subseteq V$, a vertex $v\notin A$ and a neighbor $u\in\Gamma(v)$ of $v$, with probability at most $\frac{|A|}{|A|+1}$, there is a simple live path from a vertex in $A$ to vertex $v$ such that the last vertex in the path before reaching $v$ is not $u$.
\end{lemma}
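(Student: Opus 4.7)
My plan is to strengthen the lemma and prove it by induction, using the reverse random walk interpretation of Remark~\ref{remark:RR}. By that remark together with the uniqueness of simple live paths in Remark~\ref{remark:uniquePath}, the live in-neighbor of $v$ is precisely the last vertex of the (unique, if it exists) simple live path from $A$ to $v$, so the event in the lemma is equivalent to: the reverse random walk from $v$ reaches $A$ and its first step is not $u$.

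The strengthened statement allows an arbitrary initial visited set. Concretely, I claim that for any $B \subseteq V$ with $v \in B$ and $A \cap B = \emptyset$, and any $u \in \Gamma(v)$, the probability that the reverse random walk from $v$ with initial visited set $B$ reaches $A$ with first step $\neq u$ is at most $|A|/(|A|+1)$. Taking $B = \{v\}$ recovers the original lemma. I will induct on $|V \setminus B|$; the base case is $\Gamma(v) \subseteq B$ (in particular $B = V$), where every first step is a revisit and the probability is $0$.

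For the inductive step I condition on the first step $w_1$, which is uniform on $\Gamma(v)$, and split over $w_1 \in \Gamma(v)\setminus\{u\}$ into three disjoint cases: (a) $w_1 \in B$ (revisit, contribution $0$); (b) $w_1 \in A \setminus B$ (walk reaches $A$ immediately, contribution $1$); (c) $w_1 \notin A \cup B$, in which case the walk continues from $w_1$ with visited set $B \cup \{w_1\}$. The crucial observation for case (c) is that $v \in B \cup \{w_1\}$, so if the sub-walk's first step were $v$ it would revisit and fail to reach $A$. Hence $v$ plays exactly the role of ``$u$'' for the sub-walk, and applying the strengthened inductive hypothesis to the sub-walk with $(w_1, B \cup \{w_1\}, v)$ in place of $(v, B, u)$, which is valid since $|V \setminus (B \cup \{w_1\})| = |V \setminus B| - 1$, bounds the continuation probability by $|A|/(|A|+1)$.

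Combining the three contributions and letting $a' = |(A\setminus B)\cap(\Gamma(v)\setminus\{u\})|$, $b' = |B \cap (\Gamma(v)\setminus\{u\})|$, and $c' = \deg(v)-1-a'-b'$, the event's probability is at most $\frac{1}{\deg(v)}\bigl(a' + c'\cdot\frac{|A|}{|A|+1}\bigr)$, and a short rearrangement shows this is at most $|A|/(|A|+1)$ iff $a' \leq |A|(1+b')$, which is immediate from $a' \leq |A|$. The main obstacle is to find the correct strengthening: naive induction on $|A|$ or on $|V|$ does not close, and the key insight is to recognize that in the strengthened statement an ``already visited'' vertex in the continuation plays exactly the same role as the forbidden first-step $u$ in the original statement. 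Once this symmetry is identified, the inductive step reduces to a one-line arithmetic check.
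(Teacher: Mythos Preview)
Your proof is correct and takes a genuinely different route from the paper's. The paper gives a direct (non-inductive) injection argument: for each ``positive'' reverse walk $v\leftarrow w_1\leftarrow\cdots\leftarrow w_\ell\leftarrow a$ (with $a\in A$, $w_1\neq u$, $w_1,\ldots,w_\ell\notin A$) it assigns the ``negative'' walk $v\leftarrow w_1\leftarrow\cdots\leftarrow w_\ell\leftarrow w_{\ell-1}$ obtained by redirecting the last step back to the previously visited vertex, and it sends each single-step walk $v\leftarrow a$ to $v\leftarrow u$. Because each vertex chooses its incoming live edge uniformly, the image has the same probability as the preimage; because the last step can land in at most $|A|$ vertices of $A$, at most $|A|$ positive walks share an image; and because every image lies in the complement of the event in question (either a revisit or first step $u$), one gets $P\le|A|(1-P)$, hence $P\le|A|/(|A|+1)$. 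Your approach instead strengthens the statement to an arbitrary initial visited set $B\ni v$ and inducts on $|V\setminus B|$, the key observation being that the already-visited vertex $v\in B\cup\{w_1\}$ serves as a legitimate ``forbidden first step'' for the continuation from $w_1$, so the hypothesis applies verbatim. The paper's argument is shorter and more transparent once one sees the back-step map; your induction packages the same intuition into a reusable stronger statement (the bound holds for walks started with any set of previously visited vertices), which could be handy elsewhere. One minor quibble: the base of your induction on $|V\setminus B|$ is really $B=V$, which under $A\cap B=\emptyset$ forces $A=\emptyset$ (and then the bound is $0\le 0$); the situation $\Gamma(v)\subseteq B$ with $A\neq\emptyset$ is not the bottom of the induction but an instance where $c'=0$ and your arithmetic check $a'\le|A|(1+b')$ applies without invoking the hypothesis. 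This does not affect correctness.
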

\begin{proof}
We consider all possible reverse random walks starting from $v$, and define a mapping from those walks that eventually reach $A$ to those that do not.
For each reverse random walk that reaches a vertex $a\in A$, $v\leftarrow w_1\leftarrow\cdots\leftarrow w_{\ell-1}\leftarrow w_\ell\leftarrow a$ (with $w_1,\ldots,w_\ell\notin A$), we map it to the random walk $v\leftarrow w_1\leftarrow\cdots\leftarrow w_{\ell-1}\leftarrow w_\ell\leftarrow w_{\ell-1}$, i.e., the one with the last step moving back.
Notice that the latter reverse random walk visits $w_{\ell-1}$ more than once, and thus will not reach $A$.
Specifically, for those reverse random walks that reach $A$ in one single step $v\leftarrow a$ (in the case $v$ is adjacent to $a\in A$), we map it to the reverse random walk $v\leftarrow u$, which are excluded from the event that ``there is a simple live path from a vertex in $A$ to vertex $v$ such that the last vertex in the path before reaching $v$ is not $u$'' (if $v\leftarrow u$, then every path that reaches $v$ should then reach $u$ in the penultimate step).

It is easy to see that at most $|A|$ different reverse random walks that reach $A$ can be mapped to a same random walk that does not reach $A$.
In order to make different reverse random walks have the same image in the mapping, they must share the same path $v\leftarrow w_1\leftarrow\cdots\leftarrow w_\ell$ except for the last step.
The last step, which moves to a vertex in $A$, can only have $|A|$ different choices.
For the special reverse random walks that move to $A$ in one step, there are at most $|A|$ of them, which are mapped to the random walk $v\leftarrow u$.

It is also easy to see that each random walk happens with the same probability as its image does.
This is because $w_\ell$ chooses its incoming edges uniformly, so choosing $a$ happens with the same chance as choosing $w_\ell$.
Specifically, $v$ chooses its incoming edge $(a,v)$ with the same probability as $(u,v)$.

Since we have defined a mapping that maps at most $|A|$ disjoint sub-events in the positive case to a sub-event in the negative case with the same probability, the lemma follows.
\end{proof}

\begin{lemma}\label{lem:|A|2}
Given a subset of vertices $A\subseteq V$ and two different vertices $u,v\notin A$, we have $\Pr(A\rightarrow u\mid\{u\}\infect{A}v)\leq\frac{|A|}{|A|+1}$.
\end{lemma}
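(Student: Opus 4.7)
The plan is to generalize the mapping argument from Lemma~\ref{lem:|A|1}. By Remark~\ref{remark:uniquePath}, conditional on $\{u\}\infect{A}v$, there exists a unique simple live path $P=(u=w_0,w_1,\ldots,w_\ell=v)$ whose interior vertices are disjoint from $A$. By averaging, it will suffice to show $\Pr(A\rightarrow u\mid P\text{ is live})\leq |A|/(|A|+1)$ for each such fixed $P$.

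The first step is a cycling observation. Conditioning on $P$ being live fixes the incoming live edge of each $w_i$ ($i\geq 1$) to be $(w_{i-1},w_i)$, while leaving the incoming live edge of $u$ (and of every vertex outside $P$) uniform and independent. Consequently, if the reverse random walk from $u$ ever reaches some $w_i$ with $i\geq 1$, its subsequent steps are forced to trace $w_i\to w_{i-1}\to\cdots\to w_0=u$, revisiting $u$ and terminating without reaching $A$. Therefore, under the conditioning, the event $A\rightarrow u$ is equivalent to ``the reverse walk from $u$ reaches $A$ while avoiding $\{w_1,\ldots,w_\ell\}$''.

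Next, I will exhibit an at-most-$|A|$-to-one mapping from ``good'' realizations (walk from $u$ reaches some $a\in A$) to ``bad'' ones (walk fails to reach $A$), with paired realizations having identical probability. Given a good realization in which the walk is $u=z_0,z_1,\ldots,z_m=a$ with $z_1,\ldots,z_{m-1}\notin A\cup\{w_1,\ldots,w_\ell\}$, I modify exactly one incoming-live-edge choice: if $m\geq 2$, switch $z_{m-1}$'s choice from $a$ to $z_{m-2}$, which forces the walk to revisit $z_{m-2}$ at the next step; if $m=1$, switch $u$'s choice from $a$ to $w_1$, after which the fixed edge $(u,w_1)$ at $w_1$ sends the walk back to $u$. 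Since the graph is undirected, $z_{m-2}\in\Gamma(z_{m-1})$ and $w_1\in\Gamma(u)$, so both substitutions are legitimate.

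To count preimages, I will note that two good realizations mapping to the same bad realization must agree on every incoming live edge except the modified one; hence their walks coincide vertex-by-vertex up to $z_{m-1}$, and the only freedom lies in the choice of $a\in A\cap\Gamma(z_{m-1})$ (or $A\cap\Gamma(u)$ when $m=1$), yielding at most $|A|$ preimages. I will also check that images corresponding to different values of $m$ are distinguishable---in each bad realization the modified step appears as the unique revisit in the walk from $u$, so the preimage can be recovered unambiguously. Concluding as in Lemma~\ref{lem:|A|1}, equal probabilities combined with the $|A|$-to-one property give $\Pr(\text{good}\mid P)\leq |A|\cdot\Pr(\text{bad}\mid P)\leq |A|\cdot(1-\Pr(\text{good}\mid P))$, whence $\Pr(\text{good}\mid P)\leq |A|/(|A|+1)$. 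I expect the bookkeeping around the $m=1$ boundary case (where the conditioning on $w_1$'s edge must be invoked to make the image a legitimately bad realization) to require the most care.
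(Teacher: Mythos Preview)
Your argument is correct. Conditioning on the full live path $P=(u=w_0,w_1,\ldots,w_\ell=v)$, the cycling observation is valid (once the walk from $u$ hits any $w_i$ with $i\ge1$, the fixed edges force it back to $u$), and the $|A|$-to-one mapping goes through: for $m\ge2$ you modify $z_{m-1}\notin\{w_1,\ldots,w_\ell\}$, so this vertex's incoming edge is not pinned by the conditioning; for $m=1$ you use $w_1$ (which exists since $u\neq v$), and the pinned edge $(u,w_1)$ at $w_1$ makes the image a genuine failure. The images from different $m$ cannot collide because the walk in the image determines the modified vertex uniquely.

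The paper takes a more modular route. It conditions only on the \emph{last} step of the path---which neighbor $w_i\in\Gamma(u)\setminus A$ the reverse walk from $v$ uses just before reaching $u$---and then invokes Lemma~\ref{lemma:negativecorrelation} to drop the conditioning, reducing to $\Pr(A\infect{\{w_i\}}u)$, which is bounded by Lemma~\ref{lem:|A|1}. Your proof instead conditions on all of $P$, which makes the cycling observation do the work of Lemma~\ref{lemma:negativecorrelation} internally, and then you re-run the mapping of Lemma~\ref{lem:|A|1} in situ with $w_1$ as the designated neighbor. So you have essentially inlined the two lemmas into a single self-contained argument. The paper's version is shorter because it reuses existing statements; yours is more direct and avoids the slightly informal footnote the paper needs to justify applying Lemma~\ref{lemma:negativecorrelation} to the refined events $E_i$.
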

\begin{proof}
Let $w_1,\ldots,w_t$ enumerate all the neighbors of $u$ that are not in $A$.
For each $i=1,\ldots,t$, let $E_i$ be the event that the reverse random walk starting from $v$ reaches $u$ without touching $A$ and its last step before reaching $u$ is at $w_i$.
Clearly, $\{E_1,\ldots,E_t\}$ is a partition of $\{u\}\infect{A}v$.
Conditioning on the event $E_i$, if $A\rightarrow u$ happens, the reverse random walk from $u$ to $A$ cannot touch $w_i$, since $w_i$ has already chosen its incoming edge $(u,w_i)$ in the case $E_i$ happens.
Therefore, by Lemma~\ref{lemma:negativecorrelation} and Lemma~\ref{lem:|A|1}, $\Pr(A\rightarrow u\mid E_i)=\Pr(A\infect{\{w_i\}}u\mid E_i)\leq\Pr(A\infect{\{w_i\}}u)\leq\frac{|A|}{|A|+1}$.\footnote{Rigorously speaking, the statement of Lemma~\ref{lemma:negativecorrelation} does not directly imply $\Pr(A\infect{\{w_i\}}u\mid E_i)\leq\Pr(A\infect{\{w_i\}}u)$. However, the proof of Lemma~\ref{lemma:negativecorrelation} can be adapted to show this. Instead of summing over all simple paths $p$ from $u$ to $v$ in the summation of the last inequality in the proof, we sum over all simple paths from $u$ to $v$ \emph{such that $u$ first moves to $w_i$}. The remaining part of the proof is the same. The idea here is that, the event $v$ reversely walks to $u$ is negatively correlated to the event that $u$ reversely walks to $A$, as the latter walk cannot hit the vertices on the path $u\rightarrow v$ if there is already a path from $u$ to $v$.}
We have
$$\Pr(A\rightarrow u\mid\{u\}\infect{A}v)=\frac{\sum_{i=1}^t\Pr(A\rightarrow u\mid E_i)\Pr(E_i)}{\Pr(\{u\}\infect{A}v)}\leq\frac{|A|}{|A|+1}\frac{\sum_{i=1}^t\Pr(E_i)}{\Pr(\{u\}\infect{A}v)}=\frac{|A|}{|A|+1},$$
which concludes this lemma.
\end{proof}

Finally, we need the following lemma which is due to \citet{lim2015simple}, while a more generalized version is proved by \citet{schoenebeck2019influence}.

\begin{lemma}[\citet{lim2015simple}]\label{lem:deg+1}
For any $v\in V$, we have $\sigma(\{v\})\leq\deg(v)+1$.
\end{lemma}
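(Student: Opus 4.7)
The plan is to prove Lemma~\ref{lem:deg+1} by induction on $|V|$. The base case $|V|=1$ is immediate since then $\sigma(\{v\})=1=\deg(v)+1$. For the inductive step, I would use the live-edge interpretation of Theorem~\ref{thm:LT_live} together with the reverse random walk viewpoint of Remark~\ref{remark:RR} to expand the reachability probabilities as sums over simple paths with weights given by products of inverse degrees. Concretely,
\begin{equation*}
\sigma(\{v\}) \;=\; 1 + \sum_{u\neq v}\Pr(v\to u) \;=\; 1 + \sum_{u\neq v}\ \sum_{\substack{\text{simple paths}\\ u=u_0,u_1,\ldots,u_\ell=v}} \prod_{i=0}^{\ell-1}\frac{1}{\deg(u_i)},
\end{equation*}
since the reverse random walk starting at $u$ follows a given simple path to $v$ with probability equal to the product of inverse degrees along it.

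The next step is to group each path by its penultimate vertex $w := u_{\ell-1}\in\Gamma(v)$ and factor out $1/\deg(w)$, giving
\begin{equation*}
\sigma(\{v\}) - 1 \;=\; \sum_{w\in\Gamma(v)} \frac{G_w}{\deg(w)},
\qquad
G_w \;:=\; \sum_{u\neq v}\ \sum_{\substack{\text{simple paths}\\ u\to w\text{ in }G\setminus\{v\}}} \prod_{i=0}^{\ell-2}\frac{1}{\deg_G(u_i)},
\end{equation*}
where the inner sum includes the trivial length-zero path $u=w$ (contributing $1$). It thus suffices to prove $G_w \leq \deg_G(w)$ for each $w\in\Gamma(v)$; combined, these give $\sigma(\{v\})-1 \leq |\Gamma(v)| = \deg(v)$, as desired.

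To bound $G_w$ I compare it with $\sigma_{G'}(\{w\})$ in the vertex-deleted subgraph $G' := G\setminus\{v\}$. Both quantities are sums over exactly the same set of simple paths in $G'$, but $G_w$ uses weights $1/\deg_G(u_i)$ while $\sigma_{G'}(\{w\})$ uses the larger weights $1/\deg_{G'}(u_i)$. Hence $G_w \leq \sigma_{G'}(\{w\})$. Since $G'$ has one fewer vertex, the inductive hypothesis gives $\sigma_{G'}(\{w\}) \leq \deg_{G'}(w) + 1$; and because $w$ is a neighbor of $v$, removing $v$ decreases $w$'s degree by exactly one, so $\deg_{G'}(w)+1 = \deg_G(w)$, closing the induction.

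The main obstacle is executing the weight comparison cleanly: one must verify that the simple paths appearing in $G_w$ and in $\sigma_{G'}(\{w\})$ are precisely the same set (both are simple paths in the common graph $G\setminus\{v\}=G'$) and that every intermediate vertex's degree is weakly larger in the ambient graph, so that $1/\deg_G(u_i) \leq 1/\deg_{G'}(u_i)$ pointwise. A second subtle point is that the $-1$ coming from $\deg_{G'}(w) = \deg_G(w)-1$ exactly cancels the $+1$ in the inductive hypothesis; this cancellation only occurs for $w\in\Gamma(v)$, which is why the grouping by penultimate vertex in the previous step is essential.
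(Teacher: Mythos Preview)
Your argument is correct and is a genuinely different route from the paper's. The paper does not induct on $|V|$; instead it first proves the bound for trees by an iterated contraction argument (collapsing bottom subtrees until the tree becomes a star), then defines the \emph{lift} $\widehat{G}_v$ of $G$ at $v$ --- a tree whose non-root vertices correspond to simple paths in $\calP_v$ --- and shows via a coupling of the live-edge revelations that $\sigma_G(\{v\})\le\sigma_{\widehat{G}_v}(\{v\})$. Since $v$ has the same degree in $G$ and in $\widehat{G}_v$, the tree case finishes the proof.

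Your approach is more elementary: the path expansion plus the pointwise inequality $1/\deg_G(u_i)\le 1/\deg_{G'}(u_i)$ replaces the lift-and-couple machinery, and the cancellation $\deg_{G'}(w)+1=\deg_G(w)$ for $w\in\Gamma(v)$ is exactly the observation that drives the induction. Both arguments ultimately rest on the same live-edge path decomposition, but yours avoids introducing the auxiliary tree and the coupling. The paper's route, on the other hand, makes the extension to the ``slackness'' model (each vertex $v$ selecting no live incoming edge with probability $1-\vartheta_v$) transparent; your induction also extends there --- one simply factors out $\vartheta_w/\deg_G(w)$ instead of $1/\deg_G(w)$ and bounds $\sum_{w\in\Gamma(v)}\vartheta_w\le\deg(v)$ --- but you would need to state this if you want to match the generality the paper claims in Appendix~\ref{append:extension_slackness}.
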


A proof of a more generalized version of the lemma above, which extends this lemma to the linear threshold model \emph{with slackness} (see Append.~\ref{sect:generalizeLTM} for definition of this model), is included in Appendix~\ref{append:extension_slackness} for completeness.
The proof is mostly identical to the proof by \citet{schoenebeck2019influence}.

Now we are ready to show Theorem~\ref{thm:lowerbound}.
In the remaining part of this section, we use $S=\{v_1,\ldots,v_k\}$ and $S^\ast=\{u_1,\ldots,u_k\}$ to denote the seed sets output by the greedy algorithm and the optimal seed set respectively.
Recall that we have established that \infmax is a special case of \maxc in Sect.~\ref{sect:premaxc}, and $v_1,\ldots,v_k,u_1,\ldots,u_k$ can be viewed as subsets in \maxc.
Thus, the lemmas in Sect.~\ref{sect:propertiesmaxc} can be applied here.

First of all, if $v_1\in S^\ast$, Lemma~\ref{lem:maxc_g=o} implies Theorem~\ref{thm:lowerbound} already.
In particular, Lemma~\ref{lem:maxc_g=o} implies that $|\Sigma(S)|\geq(1-(1-1/k)^k+1/4k^2)|\Sigma(S^\ast)|$ (refer to Sect.~\ref{sect:premaxc} for the definition of $\Sigma(\cdot)$), which implies $\sigma(S)\geq(1-(1-1/k)^k+1/4k^2)\sigma(S^\ast)$ by dividing $\prod_{w\in V}\deg(w)$ on both side of the inequality.
Therefore, we assume $v_1\notin S^\ast$ from now on.

Next, we analyze the intersection between $\Sigma(\{v_1\})$ and $\Sigma(S^\ast)$.
As an overview of the remaining part of our proof, suppose the barrier $1-(1-1/k)^k$ cannot be overcome, Lemma~\ref{lem:maxc_disjoint} and Lemma~\ref{lem:maxc_balance} imply that $\Sigma(\{u_1\}),\ldots,\Sigma(\{u_k\})$ must be almost disjoint and almost balanced, Lemma~\ref{lem:maxc_firstintersection} implies that $\Sigma(\{v_1\})$ must intersect approximately $1/k$ fraction of $\Sigma(S^\ast)$, and Lemma~\ref{lem:maxc_outside} implies that $\Sigma(\{v_1\})\setminus\Sigma(S^\ast)$ should not be large.
We will prove that these conditions cannot be satisfied at the same time.

The intersection $\Sigma(\{v_1\})\cap\Sigma(S^\ast)$ consists of all the tuples $(w,g)$ such that $w$ is reachable from both $v_1$ and $S^\ast$ under the live-edge realization $g$.
Consider the reverse random walk starting from $w$.
There are three different disjoint cases: 1) $w$ reaches $v_1$ first, and then reaches a vertex in $S^\ast$; 2) $w$ reaches a vertex in $S^\ast$, and then reaches $v_1$; 3) $w$ visits more than one vertex in $S^\ast$, and then reaches $v_1$.
The three terms in the following equation, which are named $C_1,C_2,C_3$, correspond to these three cases respectively.
\begin{align*}
    \frac{|\Sigma(\{v_1\})\cap\Sigma(S^\ast)|}{\prod_{w\in V}\deg(w)}=&\sum_{w\in V}\Pr\left( \left(S^\ast\rightarrow v_1\right)\land\left(\{v_1\}\infect{S^\ast}w\right)\right)\tag{$C_1$}\\
    +&\sum_{w\in V}\sum_{i=1}^k\Pr\left( \left(\{v_1\}\infect{S^\ast}u_i\right)\land \left(\{u_i\}\infect{S^\ast}w\right)\right)\tag{$C_2$}\\
    +&\sum_{w\in V}\sum_{i\neq j}\Pr\left(\left(\{v_1\}\rightarrow u_j\right)\land\left(\{u_j\}\infect{S^\ast}u_i\right)\land\left(\{u_i\}\infect{S^\ast}w\right)\right)\tag{$C_3$}
\end{align*}
Notice that this decomposition assumes $v_1\notin S^\ast$.

Firstly, we show that $C_1$ cannot be too large if the barrier $1-(1-1/k)^k$ is not overcome.
Intuitively, $C_1$ describes those $w$ that first reversely reaches $v_1$ and then reversely reaches a vertex in $S^\ast$.
Lemma~\ref{lem:|A|2} tells us that $v_1$ will reversely reach $S^\ast$ with at most probability $k/(k+1)$ conditioning on $w$ reversely reaching $v_1$.
This implies that, if $w$ reversely reaches $v_1$, $v_1$ will not reversely reach $S^\ast$ with probability at least $1/(k+1)$, which is at least $1/k$ of the probability that $v_1$ reversely reaches $S^\ast$.
Therefore, whenever we have a certain number of elements in $\Sigma(\{v_1\})\cap\Sigma(S^\ast)$ that corresponds to $C_1$, we have at least $1/k$ fraction of this number in $\Sigma(\{v_1\})\setminus\Sigma(S^\ast)$.
Lemma~\ref{lem:maxc_outside} implies that the $1-(1-1/k)^k$ barrier can be overcome if $|\Sigma(\{v_1\})\setminus\Sigma(S^\ast)|$ is large.
\begin{proposition}\label{prop:C1}
If $C_1>\frac{9}{10k}\cdot\sigma(S^\ast)$, then $\sigma(S)\geq(1-(1-\frac1k)^k+\frac1{640k^2})\cdot\sigma(S^\ast)$.
\end{proposition}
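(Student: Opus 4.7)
The plan is to convert the hypothesis $C_1>\frac{9}{10k}\sigma(S^\ast)$ into a lower bound on the surplus $|\Sigma(\{v_1\})\setminus\Sigma(S^\ast)|/\prod_v\deg(v)$, and then invoke Lemma~\ref{lem:maxc_outside} through the \maxc reduction of Sect.~\ref{sect:premaxc}. I may assume $v_1\notin S^\ast$ throughout, since otherwise Lemma~\ref{lem:maxc_g=o} already yields an additive surplus of $1/(4k^2)>1/(640k^2)$ and we are done.

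The crux is a pointwise application of Lemma~\ref{lem:|A|2} with $A=S^\ast$, $u=v_1$, and $v=w$: for every $w\notin S^\ast\cup\{v_1\}$,
\[
\Pr\!\left(S^\ast\rightarrow v_1\mid \{v_1\}\infect{S^\ast}w\right)\leq\frac{k}{k+1},
\]
which rearranges to
\[
\Pr\!\left((S^\ast\to v_1)\land(\{v_1\}\infect{S^\ast}w)\right)\leq k\cdot \Pr\!\left(\neg(S^\ast\to v_1)\land(\{v_1\}\infect{S^\ast}w)\right).
\]
The key structural observation is that, for such $w$, the right-hand event coincides with $(\{v_1\}\to w)\land\neg(S^\ast\to w)$. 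Indeed, by Remark~\ref{remark:uniquePath} the ancestors of $w$ in the live graph lie on a unique reverse chain, and if both $v_1$ and some $s\in S^\ast$ appear on this chain then exactly one lies between the other and $w$; the conjunction $\{v_1\}\infect{S^\ast}w\land\neg(S^\ast\to v_1)$ forbids such an $s$ from lying between $w$ and $v_1$ (by the first event) and from lying upstream of $v_1$ (by the second), forcing no $S^\ast$-vertex to appear on the chain at all. The reverse implication is immediate.

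Summing the pointwise inequality over $w\notin S^\ast\cup\{v_1\}$ and bounding the residual contribution to $C_1$ from $w\in S^\ast\cup\{v_1\}$ crudely (each of the $\leq k+1$ summands is a probability bounded by $1$) yields $C_1\leq (k+1)+k\cdot|\Sigma(\{v_1\})\setminus\Sigma(S^\ast)|/\prod_v\deg(v)$. Since $v_1$ is greedy-optimal, submodularity gives $\sigma(\{v_1\})\geq\sigma(S^\ast)/k$ and trivially $\sigma(S^\ast)\geq k$; a short case split (on whether $\sigma(S^\ast)\geq 20k^2$ or not) absorbs the residual $O(k)$ into a constant-factor loss and produces $|\Sigma(\{v_1\})\setminus\Sigma(S^\ast)|/\prod_v\deg(v)\geq(c/k^2)\sigma(S^\ast)$ for some constant $c$ with $c/16\geq 1/640$. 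Translating to \maxc, this reads $|S_1\setminus(\bigcup_iS_i^\ast)|\geq(c/k^2)\val(\calS^\ast)$, and Lemma~\ref{lem:maxc_outside} with $\varepsilon=c/k^2$ delivers the desired $\sigma(S)\geq(1-(1-1/k)^k+1/(640k^2))\sigma(S^\ast)$. The principal obstacle is the event-identification step, which requires a careful case analysis of the reverse chain via Remark~\ref{remark:uniquePath} and a bit of care with the ``boundary'' $w\in S^\ast\cup\{v_1\}$ to which Lemma~\ref{lem:|A|2} does not directly apply; the remaining steps---the pointwise Lemma~\ref{lem:|A|2}, the summation, and the invocation of Lemma~\ref{lem:maxc_outside}---are mechanical bookkeeping.
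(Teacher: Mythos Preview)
Your approach is the same as the paper's: apply Lemma~\ref{lem:|A|2} pointwise to turn $C_1$ into a lower bound on $|\Sigma(\{v_1\})\setminus\Sigma(S^\ast)|$, then invoke Lemma~\ref{lem:maxc_outside}. You argue the event identification $\bigl(\{v_1\}\infect{S^\ast}w\bigr)\wedge\neg(S^\ast\to v_1)\Longleftrightarrow(\{v_1\}\to w)\wedge\neg(S^\ast\to w)$ more explicitly than the paper does, and you correctly observe that Lemma~\ref{lem:|A|2} is only stated for $w\notin S^\ast$ (the paper silently applies it to all $w\neq v_1$).

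There is, however, a genuine gap in your final step. After bounding the $w\in S^\ast\cup\{v_1\}$ contribution crudely by $k+1$, you assert that a case split on whether $\sigma(S^\ast)\geq 20k^2$ ``absorbs the residual $O(k)$ into a constant-factor loss'' and yields $|\Sigma(\{v_1\})\setminus\Sigma(S^\ast)|/\prod_v\deg(v)\geq(c/k^2)\sigma(S^\ast)$ in either case. The large case is fine, but for $\sigma(S^\ast)<20k^2$ neither of the facts you cite---$\sigma(\{v_1\})\geq\sigma(S^\ast)/k$ or $\sigma(S^\ast)\geq k$---produces any lower bound on $|\Sigma(\{v_1\})\setminus\Sigma(S^\ast)|$ whatsoever, and the trivial bound $\sigma(S)\geq k$ only gives a ratio of at least $k/\sigma(S^\ast)\geq 1/(20k)$, which is far below $1-(1-1/k)^k$. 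So the entire intermediate regime $\Theta(k)\leq\sigma(S^\ast)\leq\Theta(k^2)$ is left uncovered. The paper sidesteps this by separating only the single term $w=v_1$, giving residual $1$; then the threshold $\sigma(S^\ast)\leq\frac{8}{7}k$ for the trivial case meshes with the main argument. To make your version go through you would need either a much sharper bound on the $w\in S^\ast$ contribution to $C_1$ (not simply ``$\leq k$''), or an independent argument for the intermediate range.
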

\begin{proof}
If $w=v_1$, $\{v_1\}\infect{S^\ast}w$ happens automatically, and $\Pr((\{v_1\}\infect{S^\ast}w)\land(S^\ast\rightarrow v_1))=\Pr(S^\ast\rightarrow v_1)$.
Substituting this into $C_1$, we have
\begin{align*}
    C_1&=\Pr(S^\ast\rightarrow v_1)+\sum_{w\in V\setminus \{v_1\}}\Pr\left(\left(S^\ast\rightarrow v_1\right)\land\left(\{v_1\}\infect{S^\ast}w\right)\right)\\
    &\leq1+\sum_{w\in V\setminus \{v_1\}}\Pr\left(\{v_1\}\infect{S^\ast}w\right)\cdot\Pr\left(S^\ast\rightarrow v_1\mid \{v_1\}\infect{S^\ast}w\right)\\
    &\leq1+\sum_{w\in V\setminus \{v_1\}}\Pr\left(\{v_1\}\infect{S^\ast}w\right)\cdot k\Pr\left(\neg(S^\ast\rightarrow v_1)\mid \{v_1\}\infect{S^\ast}w\right)\tag{Lemma~\ref{lem:|A|2}}\\
    &=1+k\sum_{w\in V\setminus\{v_1\}}\Pr\left(\left(\{v_1\}\infect{S^\ast}w\right)\land \neg\left(S^\ast\rightarrow v_1\right)\right),
\end{align*}
where the penultimate step is due to Lemma~\ref{lem:|A|2} from which we have $\Pr(S^\ast\rightarrow v_1\mid \{v_1\}\infect{S^\ast}w)\leq\frac{k}{k+1}$, which implies $\Pr(\neg(S^\ast\rightarrow v_1)\mid \{v_1\}\infect{S^\ast}w)\geq\frac1{k+1}$, which further implies $\Pr(S^\ast\rightarrow v_1\mid \{v_1\}\infect{S^\ast}w)\leq k\cdot \Pr(\neg(S^\ast\rightarrow v_1)\mid \{v_1\}\infect{S^\ast}w)$.

Notice that the summation $\sum_{w\in V\setminus\{v_1\}}\Pr((\{v_1\}\infect{S^\ast}w)\land \neg(S^\ast\rightarrow v_1))$ describes those $(w,g)$ such that $w$ is reachable from $v_1$ but not $S^\ast$ under realization $g$, which corresponds to elements in $\Sigma(\{v_1\})\setminus\Sigma(S^\ast)$.
Therefore, we have
$$\frac{|\Sigma(\{v_1\})\setminus\Sigma(S^\ast)|}{\prod_{w\in V}\deg(w)}\geq\sum_{w\in V\setminus\{v_1\}}\Pr\left(\left(\{v_1\}\infect{S^\ast}w\right)\land \neg\left(S^\ast\rightarrow v_1\right)\right)\geq\frac{C_1-1}k.$$

If $\sigma(S^\ast)\leq\frac87k$, we can see that $\sigma(S)\geq k\geq\frac78\sigma(S^\ast)>(1-(1-\frac1k)^k+\frac1{640k^2})\sigma(S^\ast)$ and the proposition is already implied.
Thus, we assume $\sigma(S^\ast)>\frac87k$ from now on.

If we have $C_1>\frac9{10k}\sigma(S^\ast)$ as given in the proposition statement, we have $C_1-1>\frac9{10k}\sigma(S^\ast)-\frac{7}{8k}\sigma(S^\ast)=\frac1{40k}\sigma(S^\ast)=\frac1{40k}\frac{|\Sigma(S^\ast)|}{\prod_{w\in V}\deg(w)}$.
Putting together,
$$\frac{|\Sigma(\{v_1\})\setminus\Sigma(S^\ast)|}{\prod_{w\in V}\deg(w)}\geq\frac{C_1-1}k>\frac1{40k^2}\frac{|\Sigma(S^\ast)|}{\prod_{w\in V}\deg(w)},$$
which yields $|\Sigma(\{v_1\})\setminus\Sigma(S^\ast)|>\frac1{40k^2}|\Sigma(S^\ast)|$.
Lemma~\ref{lem:maxc_outside} implies $|\Sigma(S)|\geq(1-(1-\frac1k)^k+\frac1{640k^2})|\Sigma(S^\ast)|$, which further implies this proposition.
\end{proof}

Secondly, we show that $C_2$ cannot be too large if the barrier $1-(1-1/k)^k$ is not overcome.
To show this, we first show that there exists $u_i\in S^\ast$ such that $\Pr(\{v_1\}\rightarrow u_i)\geq\frac{C_2}{\sigma(S^\ast)}$, and then show that this implies that $|\Sigma(\{v_1\})\setminus\Sigma(S^\ast)|$ is large by accounting for $v_1$'s influence to $u_i$'s neighbors.

\begin{proposition}\label{prop:C2}
If $C_2>\frac1{100k}\cdot\sigma(S^\ast)$, then $\sigma(S)\geq(1-(1-\frac1k)^k+\frac1{64000k^3})\sigma(S^\ast)$.
\end{proposition}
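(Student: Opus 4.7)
My plan follows the two-step outline given immediately before the proposition: (i) pigeonhole on $C_2$ to find some $u_{i^\ast}\in S^\ast$ with $p^\ast:=\Pr(\{v_1\}\infect{S^\ast}u_{i^\ast})>1/(100k)$, and then (ii) convert this into $|\Sigma(\{v_1\})\setminus\Sigma(S^\ast)|\geq\Omega(|\Sigma(S^\ast)|/k^3)$ and close with Lemma~\ref{lem:maxc_outside}.

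\emph{Step (i).} Let $N_i(g)=|\{w\in V:\{u_i\}\infect{S^\ast}w\}|$; Remark~\ref{remark:uniquePath} gives $\sum_iN_i(g)\leq|LT_G(S^\ast)|(g)$ pointwise, so $\sum_i\E[N_i]\leq\sigma(S^\ast)$. Writing $C_2=\sum_i\E[\mathbf{1}_{\{v_1\}\infect{S^\ast}u_i}N_i]$, the key inequality
\[\E[\mathbf{1}_{\{v_1\}\infect{S^\ast}u_i}N_i]\;\leq\;\Pr(\{v_1\}\infect{S^\ast}u_i)\cdot\E[N_i]\]
is a negative-correlation fact: fixing the (unique, by Remark~\ref{remark:uniquePath}) live path from $v_1$ to $u_i$ pins down in-edges along that path, which can only \emph{deflect} reverse walks from other $w$'s away from $u_i$ (if such a walk touches the path, it is forced back toward $v_1$). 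Formally this is a direct analog of how Lemma~\ref{lemma:negativecorrelation} is used in Proposition~\ref{prop:C1}. Summing and using $\sum_i\E[N_i]\leq\sigma(S^\ast)$ yields $C_2\leq\max_i\Pr(\{v_1\}\infect{S^\ast}u_i)\cdot\sigma(S^\ast)$, so the hypothesis picks out $u_{i^\ast}$ with $p^\ast>1/(100k)$.

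\emph{Step (ii).} The key identity (valid because $v_1\notin S^\ast$) is
\[\frac{|\Sigma(\{v_1\})\setminus\Sigma(S^\ast)|}{\prod_{w\in V}\deg(w)}\;=\;\sum_{w\in V\setminus S^\ast}\Pr\bigl((\{v_1\}\infect{S^\ast}w)\land\neg(S^\ast\to v_1)\bigr),\]
obtained by splitting the unique reverse walk from $w$ at its visit to $v_1$: for the walk to avoid $S^\ast$, both the prefix (from $w$ to $v_1$) and the suffix (from $v_1$ onward) must miss $S^\ast$. Lemma~\ref{lem:|A|2} with $A=S^\ast$ gives $\Pr(\neg(S^\ast\to v_1)\mid\{v_1\}\infect{S^\ast}w)\geq 1/(k+1)$ for any $w\neq v_1$ with $w\notin S^\ast$, yielding
\[\frac{|\Sigma(\{v_1\})\setminus\Sigma(S^\ast)|}{\prod_w\deg(w)}\;\geq\;\frac{1}{k+1}\sum_{w\in V\setminus(S^\ast\cup\{v_1\})}\Pr(\{v_1\}\infect{S^\ast}w).\]
Lower-bounding $T:=\sum_w\Pr(\{v_1\}\infect{S^\ast}w)$ is the technical heart. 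The trivial bound $T\geq p^\ast>1/(100k)$ (take $w=u_{i^\ast}$) already gives the conclusion when $\sigma(S^\ast)=O(k^2)$; in the complementary regime $\sigma(S^\ast)\gg k^2$, one instead uses Lemma~\ref{lem:deg+1} and the greedy inequality $\sigma(\{v_1\})\geq\sigma(S^\ast)/k$ to force $v_1$ to have $\Omega(\sigma(S^\ast)/k)$ neighbors outside $S^\ast$, each contributing $\Omega(1/\deg(\cdot))$ to $T$ via the direct-edge event ``$y$'s live in-edge is $v_1$'', and one aggregates these contributions (together with analogous ones from $u_{i^\ast}$'s neighbors) to obtain $T=\Omega(\sigma(S^\ast)/k^2)$. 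Combining, $|\Sigma(\{v_1\})\setminus\Sigma(S^\ast)|/|\Sigma(S^\ast)|\geq\Omega(1/k^3)$, and Lemma~\ref{lem:maxc_outside} delivers $\sigma(S)\geq(1-(1-1/k)^k+\Omega(1/k^3))\sigma(S^\ast)$, matching the proposition's $1/(64000k^3)$ up to constants.

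The main obstacle is Step (ii), specifically the lower bound on $T$ in the large-$\sigma(S^\ast)$ regime: the naive $T\geq p^\ast$ falls short whenever $\sigma(S^\ast)$ is large compared to $k^2$, so one must exploit the undirected degree structure (via Lemma~\ref{lem:deg+1}) to produce enough cheap contributions. Undirectedness of $G$ is essential both for Lemma~\ref{lem:|A|2} (whose proof relies on an edge-reversal mapping) and for the degree-counting step, consistent with the fact that the analog of Theorem~\ref{thm:lowerbound} fails in the directed setting.
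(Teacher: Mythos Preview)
Your Step~(i) is correct and matches the paper. The gap is in Step~(ii), in your lower bound for $T=\sum_{w\in V\setminus(S^\ast\cup\{v_1\})}\Pr(\{v_1\}\infect{S^\ast}w)$.

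First, the ``trivial bound $T\geq p^\ast$ (take $w=u_{i^\ast}$)'' is invalid: $u_{i^\ast}\in S^\ast$ is excluded from the sum defining $T$, so this term simply is not there. Second, in the complementary regime your argument via $v_1$'s neighbors does not go through as written: Lemma~\ref{lem:deg+1} gives $|\Gamma(v_1)|=\Omega(\sigma(S^\ast)/k)$, but the contribution of a neighbor $y$ is only $1/\deg(y)$, and there is no control on $\sum_{y\in\Gamma(v_1)}1/\deg(y)$ --- it can be $O(1/k)$ if $v_1$'s neighbors all have high degree. Your parenthetical ``together with analogous ones from $u_{i^\ast}$'s neighbors'' gestures at the right object but does not supply the missing inequality.

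The paper avoids the case split entirely with one clean identity you are missing. Decomposing the reverse walk from $u_{i^\ast}$ at its first step (and using $u_{i^\ast}\in S^\ast$, so that step must land outside $S^\ast$) gives
\[
\Pr\bigl(\{v_1\}\infect{S^\ast}u_{i^\ast}\bigr)\;=\;\frac{1}{\deg(u_{i^\ast})}\sum_{w\in\Gamma(u_{i^\ast})\setminus S^\ast}\Pr\bigl(\{v_1\}\infect{S^\ast}w\bigr),
\]
so $\sum_{w\in\Gamma(u_{i^\ast})\setminus S^\ast}\Pr(\{v_1\}\infect{S^\ast}w)=\deg(u_{i^\ast})\cdot p^\ast>\deg(u_{i^\ast})/(100k)$. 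Now Lemma~\ref{lem:deg+1} is applied to $u_{i^\ast}$ (not to $v_1$): $\deg(u_{i^\ast})+1\geq\sigma(\{u_{i^\ast}\})$, and Lemma~\ref{lem:maxc_balance} lets one assume $\sigma(\{u_{i^\ast}\})\geq\tfrac{9}{10k}\sigma(S^\ast)$ (else the conclusion holds directly). This yields the needed $\Omega(\sigma(S^\ast)/k^2)$ lower bound on the restricted sum over $\Gamma(u_{i^\ast})\setminus S^\ast$ with no regime split; combining with Lemma~\ref{lem:|A|2} (handling the edge case $w=v_1$ separately) and Lemma~\ref{lem:maxc_outside} finishes the proof.
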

\begin{proof}
We give an outline of the proof first.
Assume $\displaystyle u_1\in\argmax_{u_i\in S^\ast}\Pr\left(\{v_1\}\infect{S^\ast}u_i\right)$ without loss of generality.
The proof is split into two steps.
\begin{itemize}
    \item Step 1: We will show that $\sum_{w\in\Gamma(u_1)\setminus S^\ast}\Pr(\{v_1\}\infect{S^\ast}w)=\Omega\left(\frac1{k^2}\right)\sigma(S^\ast)$ if we have $C_2>\frac1{100k}\cdot\sigma(S^\ast)$ in the proposition statement.
    Notice that the summation consists of the neighbors of $u_1$ (that are not in $S^\ast$) that reversely reaches $v_1$, which is a lower bound to $\sigma(v_1)$ ($v_1$ may infect more vertices than only the neighbors of $u_1$).
    To show this, we first find an upper bound of $C_2$ in terms of this summation:
    $\frac{C_2}{\sigma(S^\ast)}\leq \frac1{\deg(u_1)}\sum_{w\in\Gamma(u_1)\setminus S^\ast}\Pr(\{v_1\}\infect{S^\ast}w)$.
    This will imply that $\sum_{w\in\Gamma(u_1)\setminus S^\ast}\Pr(\{v_1\}\infect{S^\ast}w)=\Omega\left(\frac1{k^2}\right)\sigma(S^\ast)$ if assuming $C_2>\frac1{100k}\cdot\sigma(S^\ast)$, because $\deg(u_1)$ is (approximately) an upper bound to $\sigma(\{u_1\})$ by Lemma~\ref{lem:deg+1}, and $\sigma(\{u_1\})$ is approximately $\frac1k\sigma(S^\ast)$ (otherwise, the proposition holds directed by Lemma~\ref{lem:maxc_balance}).
    \item Step 2: We will show that $\Pr(\neg(S^\ast\rightarrow v_1)\mid\{v_1\}\infect{S^\ast}w)\geq\frac1{2(k+1)}$ for each $w\in\Gamma(u_1)\setminus S^\ast$.
    This says that, for each of $u_1$'s neighbor $w$, if it reversely reaches $v_1$, it will not reach $S^\ast$ with a reasonably high probability. Correspondingly, a reasonably large fraction of $\Sigma(\{v_1\})$ will not be in $\Sigma(S^\ast)$. By Lemma~\ref{lem:maxc_outside}, this proposition is concluded.
\end{itemize}

\paragraph{Step 1}
Based on the first vertex in $S^\ast$ that $w$ reversely reaches, we can decompose $\sigma(S^\ast)$ as follows:
$$\sigma(S^\ast)=\sum_{w\in V}\sum_{i=1}^k\Pr\left(\{u_i\}\infect{S^\ast}w\right).$$
Next, we have
\begin{align*}
    \frac{C_2}{\sigma(S^\ast)}&=\frac{\sum_{w\in V}\sum_{i=1}^k\Pr\left(\{u_i\}\infect{S^\ast}w\right)\Pr\left(\{v_1\}\infect{S^\ast}u_i\mid\{u_i\}\infect{S^\ast}w \right)}{\sum_{w\in V}\sum_{i=1}^k\Pr\left(\{u_i\}\infect{S^\ast}w\right)}\\
    &\leq\frac{\sum_{w\in V}\sum_{i=1}^k\Pr\left(\{u_i\}\infect{S^\ast}w\right)\Pr\left(\{v_1\}\infect{S^\ast}u_i\right)}{\sum_{w\in V}\sum_{i=1}^k\Pr\left(\{u_i\}\infect{S^\ast}w\right)}\tag{Lemma~\ref{lemma:negativecorrelation}}\\
    &\leq\Pr\left(\{v_1\}\infect{S^\ast}u_1\right)\cdot \frac{\sum_{w\in V}\sum_{i=1}^k\Pr\left(\{u_i\}\infect{S^\ast}w\right)}{\sum_{w\in V}\sum_{i=1}^k\Pr\left(\{u_i\}\infect{S^\ast}w\right)}\\
    &=\Pr\left(\{v_1\}\infect{S^\ast}u_1\right)\\
    &=\frac1{\deg(u_1)}\sum_{w\in\Gamma(u_1)\setminus S^\ast}\Pr\left(\{v_1\}\infect{S^\ast}w\right).
\end{align*}
For the last step, $v_1$ needs to first connect to one of $u_1$'s neighbors before connecting to $u_1$.
Notice that these neighbors may include $v_1$ itself.
In this special case $w=v_1\in\Gamma(u_1)\setminus S^\ast$, we have $\Pr(\{v_1\}\infect{S^\ast}w)=1$ and $u_1$ chooses its incoming live edge to be $(v_1,u_1)$ with probability $\frac1{\deg(u_1)}$, which is also a valid term in the summation above.

If $C_2>\frac1{100k}\cdot\sigma(S^\ast)$ as suggested by the proposition statement, we have
$$\sum_{w\in\Gamma(u_1)\setminus S^\ast}\Pr\left(\{v_1\}\infect{S^\ast}w\right)\geq\frac{\deg(u_1)C_2}{\sigma(S^\ast)}>\frac{\deg(u_1)}{100k}\geq\frac{\deg(u_1)+1}{200k}\geq\frac{\sigma(\{u_1\})}{200k}\geq \frac{9\sigma(S^\ast)}{2000k^2},$$
where the penultimate step is due to Lemma~\ref{lem:deg+1} and the last step is based on the assumption $\sigma(\{u_1\})\geq\frac{9}{10k}\sigma(S^\ast)$.
Notice that we can assume this without loss of generality, as otherwise Lemma~\ref{lem:maxc_balance} implies that $|\Sigma(S)|\geq(1-(1-\frac1k)^k+\frac{1}{80k^2})|\Sigma(S^\ast)|$, which directly implies this proposition.

\paragraph{Step 2}
If $w\neq v_1$, Lemma~\ref{lem:|A|2} implies that $\Pr(\neg(S^\ast\rightarrow v_1)\mid\{v_1\}\infect{S^\ast}w)\geq\frac1{k+1}>\frac1{2(k+1)}$.
If $w=v_1$, then $u_1$ and $v_1$ are adjacent.
Notice that $\deg(v_1)\geq 2$, for otherwise $\sigma(\{u_1\})>\sigma(\{v_1\})$ so $v_1$ cannot be the first seed picked by the greedy algorithm.
Therefore, $v_1$ reversely reaches $u_1$ in one step with probability at most $\frac12$.
If $v_1$ reversely reaches a vertex in $S^\ast$ such that the first step of the reverse random walk is not towards $u_1$, Lemma~\ref{lem:|A|1} implies that the probability this happens is at most $\frac{k}{k+1}$.
Putting together, for $w=v_1$, $\Pr(S^\ast\rightarrow v_1\mid\{v_1\}\infect{S^\ast}w)\leq \frac12+\frac12\cdot\frac{k}{k+1}$.
Therefore, it is always true that $\Pr(\neg(S^\ast\rightarrow v_1)\mid\{v_1\}\infect{S^\ast}w)\geq\frac1{2(k+1)}$.

Finally, we consider $\Sigma(\{v_1\})\setminus\Sigma(S^\ast)$ by only accounting for those vertices in $\Gamma(u_1)\setminus S^\ast$.
\begin{align*}
    \frac{|\Sigma(\{v_1\})\setminus\Sigma(S^\ast)|}{\prod_{w\in V}\deg(w)}&\geq\sum_{w\in\Gamma(u_1)\setminus S^\ast}\Pr\left(\left(\{v_1\}\infect{S^\ast}w\right)\land\neg\left(S^\ast\rightarrow v_1\right)\right)\\
    &\geq\sum_{w\in\Gamma(u_1)\setminus S^\ast}\frac1{2(k+1)}\Pr\left(\{v_1\}\infect{S^\ast}w\right)\\
    &>\frac1{2(k+1)}\cdot \frac{9\sigma(S^\ast)}{2000k^2}\tag{result from Step~1}\\
    &>\frac{1}{4000k^3}\frac{|\Sigma(S^\ast)|}{\prod_{w\in V}\deg(w)}.
\end{align*}
By Lemma~\ref{lem:maxc_outside}, this implies $|\Sigma(S)|\geq(1-(1-\frac1k)^k+\frac1{64000k^3})|\Sigma(S^\ast)|$, which further implies this proposition.
\end{proof}

Finally, we prove that $C_3$ cannot be too large if the greedy algorithm does not overcome the $1-(1-1/k)^k$ barrier.
Informally, this is because $C_3$ corresponds to a subset of the intersection among $\Sigma(\{u_1\}),\ldots,\Sigma(\{u_k\})$, and Lemma~\ref{lem:maxc_disjoint} implies that it cannot be too large.
\begin{proposition}\label{prop:C3}
If $C_3>\frac1{k^2}\cdot\sigma(S^\ast)$, then $\sigma(S)\geq(1-(1-\frac1k)^k+\frac1{8k^3})\sigma(S^\ast)$.
\end{proposition}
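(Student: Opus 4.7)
The plan is to show that $C_3$ is a lower bound (after multiplying by $\prod_{w\in V}\deg(w)$) on $\sum_{i=1}^k |\Sigma(\{u_i\})| - |\Sigma(S^\ast)|$, i.e., on the amount of ``overlap'' among the sets $\Sigma(\{u_1\}),\ldots,\Sigma(\{u_k\})$. Once this is established, Lemma~\ref{lem:maxc_disjoint} applied with $\varepsilon = 1/k^2$ immediately yields the desired conclusion $\sigma(S)\geq (1-(1-1/k)^k+\frac{1}{8k^3})\sigma(S^\ast)$.

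To carry out the reduction, I would first drop the conditioning event $\{v_1\}\rightarrow u_j$ from $C_3$, obtaining the trivial upper bound
\[
C_3\;\leq\;\sum_{w\in V}\sum_{i\neq j}\Pr\!\left((\{u_j\}\infect{S^\ast}u_i)\land(\{u_i\}\infect{S^\ast}w)\right).
\]
The next step is to interpret the right-hand side combinatorially via Theorem~\ref{thm:LT_live} and Remark~\ref{remark:uniquePath}. For a fixed live-edge realization $g$ and a fixed $w$, the reverse walk from $w$ follows the unique simple path of live incoming edges and visits the vertices of $S^\ast$ in some well-defined order. The event $(\{u_j\}\infect{S^\ast}u_i)\land(\{u_i\}\infect{S^\ast}w)$ for the pair $(i,j)$ says exactly that $u_i$ is the first vertex of $S^\ast$ visited from $w$ and $u_j$ is the second. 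Hence, summing over ordered pairs $i\neq j$ counts each pair $(w,g)$ whose walk hits at least two distinct vertices of $S^\ast$ exactly once, so the displayed bound equals (after multiplying by $\prod_{w\in V}\deg(w)$) the number of such pairs.

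Now, for each $(w,g)$ let $n_w(g)$ denote the number of $u_i$'s visited by the reverse walk from $w$ in $g$. Then $(w,g)\in\Sigma(\{u_i\})$ iff the walk hits $u_i$, so $\sum_{i=1}^k|\Sigma(\{u_i\})|-|\Sigma(S^\ast)|=\sum_{w,g}\max(n_w(g)-1,0)$, which is at least the number of $(w,g)$ with $n_w(g)\geq 2$. Combining with the previous step gives
\[
\sum_{i=1}^k|\Sigma(\{u_i\})|-|\Sigma(S^\ast)|\;\geq\;C_3\cdot\!\prod_{w\in V}\deg(w)\;>\;\frac{1}{k^2}\,|\Sigma(S^\ast)|,
\]
where the last inequality uses the hypothesis $C_3>\sigma(S^\ast)/k^2$ together with $\sigma(S^\ast)=|\Sigma(S^\ast)|/\prod_{w\in V}\deg(w)$. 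Viewing $\{u_1\},\ldots,\{u_k\}$ as subsets in the associated \maxc instance, this is exactly the hypothesis of Lemma~\ref{lem:maxc_disjoint} with $\varepsilon=1/k^2$, yielding $|\Sigma(S)|\geq(1-(1-1/k)^k+\tfrac{1}{8k^3})|\Sigma(S^\ast)|$, which after dividing by $\prod_{w\in V}\deg(w)$ gives the proposition.

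The argument is essentially routine once the reduction to Lemma~\ref{lem:maxc_disjoint} is identified; the only point requiring care is the bijective counting in the second paragraph, which relies critically on Remark~\ref{remark:uniquePath} (uniqueness of the live path from $w$) to guarantee that each ``multi-hit'' pair $(w,g)$ is counted exactly once in the ordered-pair sum and not overcounted.
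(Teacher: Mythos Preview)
Your approach is essentially the paper's: upper-bound $C_3\cdot\prod_w\deg(w)$ by the total overlap $\sum_{i}|\Sigma(\{u_i\})|-|\Sigma(S^\ast)|$ and then invoke Lemma~\ref{lem:maxc_disjoint} with $\varepsilon=1/k^2$.

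There is one small slip. Your claim that the ordered-pair sum counts each multi-hit pair $(w,g)$ \emph{exactly once} fails when $w\in S^\ast$: if $w=u_m$ and the walk from $u_m$ subsequently hits $u_{j_1}$ and then $u_{j_2}$, then both $(i,j)=(m,j_1)$ and $(i,j)=(j_1,j_2)$ satisfy $(\{u_j\}\infect{S^\ast}u_i)\land(\{u_i\}\infect{S^\ast}w)$, because $\{u_m\}\infect{S^\ast}u_m$ holds trivially. Hence the displayed sum can exceed the number of multi-hit pairs, and your intermediate equality (displayed bound $=$ \#multi-hit) is not valid. The fix is immediate and already implicit in your third paragraph: for every $(w,g)$ the number of contributing ordered pairs is at most $\max(n_w(g)-1,0)$ (for $w\notin S^\ast$ it is $\mathbb{1}[n_w(g)\ge 2]$, for $w\in S^\ast$ it is at most $\min(2,n_w(g)-1)$), so summing gives directly
\[
\sum_{w}\sum_{i\neq j}\Pr\!\left((\{u_j\}\infect{S^\ast}u_i)\land(\{u_i\}\infect{S^\ast}w)\right)\cdot\prod_{w}\deg(w)\;\le\;\sum_{i=1}^k|\Sigma(\{u_i\})|-|\Sigma(S^\ast)|,
\]
and the rest of your argument goes through unchanged. (The paper's own phrasing ``counted at most once by $C_3$'' has the same borderline issue at $w\in S^\ast$, but its displayed inequality---which is what is actually used---is the one above and is correct.)
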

\begin{proof}
Notice that $C_3\prod_{w\in V}\deg(w)$ is at most the number of tuples $(w,g)$ such that $w$ is reachable from more than one vertex in $S^\ast$ under $g$.
It is easy to see that
$$C_3\prod_{w\in V}\deg(w)\leq \left(\sum_{i=1}^k|\Sigma(\{u_i\})|\right)-|\Sigma(S^\ast)|$$
because: 1) each $(w,g)$ such that $w$ is reachable by more than one vertex in $S^\ast$ under $g$ is counted at most once by $C_3\prod_{w\in V}\deg(w)$, exactly once by $\Sigma(S^\ast)$, and at least twice by $\sum_{i=1}^k\Sigma(\{u_i\})$, so the contribution of each such $(w,g)$ to the right-hand side of the inequality is at least the contribution of it to the left-hand side; 2) each $(w,g)$ such that $w$ is reachable by exactly one vertex in $S^\ast$ under $g$ is not counted by $C_3\prod_{w\in V}\deg(w)$ and is counted exactly once by both $\sum_{i=1}^k\Sigma(\{u_i\})$ and $\Sigma(S^\ast)$, so the contribution of such $(w,g)$ is the same on both sides of the inequality; 3) each $(w,g)$ such that $g$ is not reachable from $S^\ast$ contributes $0$ to both sides of the inequality.
Observing this inequality, if $C_3>\frac1{k^2}\cdot\sigma(S^\ast)$, we have
$$\left(\sum_{i=1}^k|\Sigma(\{u_i\})|\right)-|\Sigma(S^\ast)|>\frac1{k^2}\sigma(S^\ast)\prod_{w\in V}\deg(w)= \frac1{k^2}|\Sigma(S^\ast)|.$$
Lemma~\ref{lem:maxc_disjoint} implies $|\Sigma(S)|\geq (1-(1-\frac1k)^k+\frac1{8k^3})|\Sigma(S^\ast)|$, which implies this proposition.
\end{proof}

With Proposition \ref{prop:C1}, \ref{prop:C2} and \ref{prop:C3}, if $\sigma(S)=(1-(1-1/k)^k+o(1/k^3))\sigma(S^\ast)$, it must be that
$$\frac{|\Sigma(\{v_1\})\cap\Sigma(S^\ast)|}{\prod_{w\in V}\deg(w)}=C_1+C_2+C_3\leq\left(\frac1{k^2}+\frac9{10k}+\frac1{100k}\right)\sigma(S^\ast)<\frac{\frac{92}{100k}|\Sigma(S^\ast)|}{\prod_{w\in V}\deg(w)}.$$
However, Lemma~\ref{lem:maxc_firstintersection} would have implied $\sigma(S)\geq(1-(1-\frac1k)^k+\frac8{400k})\sigma(S^\ast)$, which is a contradiction.
This finishes proving Theorem~\ref{thm:lowerbound}.

\section{Alternative Models}
\label{sect:generalize}
In this section, we first consider the linear threshold \infmax on more general models.
Naturally, Theorem~\ref{thm:upperbound} holds if the model is more general.
We study if Theorem~\ref{thm:lowerbound} still holds.
We consider whether the barrier $1-(1-1/k)^k$ can still be overcome.  Subsequently, we consider alternative definitions of the linear threshold model.

\paragraph{Directed graphs}
If we consider \infmax with the linear threshold model on general directed graphs, Theorem~\ref{thm:lowerbound} no longer holds.
Moreover, for any positive function $f(k)$ which may be infinitesimal, there is always an example where the greedy algorithm achieves less than a $(1-(1-1/k)^k+f(k))$-approximation.
Example~\ref{example} can be easily adapted to show this.
Firstly, all the $k(k+\ell)$ edges $(u_i,v_j)$ become directed, so the cliques associated with those $u_i$'s are not even needed.
We replace each $C_i$ by a single vertex $u_i$.
Secondly, $D_1,\ldots,D_{k+\ell}$ become directed stars such that the directed edges in each star $D_i$ are from $v_i$ to the remaining vertices in the star.
Lastly, we change the size of the star so that $|D_i|=\lceil m(1-\frac1k)^{i-1}\rceil$ for $i=1,\ldots,k$ and $|D_{k+1}|=\cdots=|D_{k+\ell-1}|=\lceil m(1-\frac1k)^k\rceil$, where $\ell$ and $|D_{k+\ell}|$ are set such that $\sum_{i=1}^{k+\ell}|D_i|=mk-2k$ and $m$ is a large integer which can be set significantly larger than $1/f(k)$.

Now each $u_i$ has in-degree $0$, so will never be infected unless seeded.
Each $v_j$ has in-degree exactly $k$, and each $u_i$ will contribute $1/k$ to $v_j$'s infection probability.
Straightforward calculations reveal that the greedy algorithm will pick $S=\{v_1,\ldots,v_k\}$ so that $\sigma(S)=\sum_{i=1}^k\lceil m(1-\frac1k)^{i-1}\rceil\leq mk(1-(1-k)^k)+k$.
On the other hand, the optimal solution is $S^\ast=\{u_1,\ldots,u_k\}$, and $\sigma(S^\ast)=k+(mk-2k)=mk-k$.
We have $\frac{\sigma(S)}{\sigma(S^\ast)}=\frac{mk(1-(1-k)^k)+k}{mk-k}$, which can be less than $(1-(1-1/k)^k+f(k))$ when $m$ is sufficiently large.

\paragraph{Prescribed seed set}
\citet{khanna2014influence} considered the more generalized setting where the seed set $S$ can only be a subset of a prescribed vertex set $V'\subseteq V$, where $V'$ is a part of the input of the instance, and showed that their result for the independent cascade model can be extended to this setting.
It is straightforward to check that our proof for Theorem~\ref{thm:lowerbound} can also be extended to this setting.
In particular, all the lemmas in Sect.~\ref{sect:propertiesmaxc} hold for the generalized \maxc setting where $\calS$ must be a subset of a prescribed candidate set $\M'\subseteq\M$, with the proofs being exactly the same.
Basically, the proofs in Sect.~\ref{sect:lowerbound} do not rely on that each vertex in $V$ is a valid seed choice, so restricting that the seeds can only be chosen from $V'$ does not invalidate any propositions or lemmas.

\paragraph{Weighted vertices}
Another generalization \citet{khanna2014influence} considered is to allow that each vertex $v$ has a positive weight $\omega(v)$, and the objective of \infmax is to find the seed set that maximizes the expected total \emph{weighted} of infected vertices.
\citet{khanna2014influence} showed that the greedy algorithm can still achieve a $(1-(1-1/k)^k+c)$ approximation (for some constant $c>0$) for this generalized model.
We show that, for the linear threshold model, the story is completely different.
If vertices are weighted, for any positive function $f(k)$ which may be infinitesimal, there is always an example where the greedy algorithm achieves less than a $(1-(1-1/k)^k+f(k))$-approximation (for the linear threshold model \infmax with undirected graphs).
Thus, Theorem~\ref{thm:lowerbound} fails to extend to this setting.
While the settings with and without weighted vertices are not very different in the independent cascade model, they are quite different for the linear threshold model.

Again, Example~\ref{example} can be easily adapted to show our claim.
Let $m\gg k$ be a very large number.
Firstly, change the size of each clique $C_i$ to $m^{0.1}$.
Secondly, instead of connecting each $v_i$ to a lot of vertices to form a star, we let $v_i$ have a very high weight (so each star $D_i$ is replaced by a single vertex $v_i$).
Specifically, let $\omega(v_i)=m(1-1/k)^{i-1}$ for each $i=1,\ldots,k$, let $\omega(v_{k+1})=\cdots=\omega(v_{k+\ell-1})=m(1-1/k)^{k}$, and let $\ell$ and $\omega(v_{k+\ell})$ be such that $\sum_{i=1}^{k+\ell}\omega(v_i)=mk-m^{0.1}k$.
Let the weight of all the remaining vertices be $1$.
The greedy algorithm will pick $\{v_1,\ldots,v_k\}$, and the expected total weight of infected vertices is $o(m^{0.1})+\sum_{i=1}^{k}\omega(v_i)=m(1-(1-1/k)^k)+o(m^{0.1})$.
The optimal seeds are $u_1,\ldots,u_k$, with expected total weight of infected vertices being at least $mk-m^{0.1}k$.
We have $\frac{\sigma(S)}{\sigma(S^\ast)}\leq\frac{mk(1-(1-1/k)^k)+o(m^{0.1})}{mk-m^{0.1}k}$, which is less than $(1-(1-1/k)^k+f(k))$ when $m$ is sufficiently large.

\paragraph{Alternative models for linear threshold model}
We have defined the linear threshold model for \emph{unweighted}, undirected graphs where all the incoming edges of a vertex have the same weight.

In general, we can define the linear threshold model on edge-weighted directed graph $G=(V,E,w)$, where the weights satisfy the constraint that, for each vertex $v$, $\sum_{u\in\Gamma(v)}w(u,v)\leq1$.  A vertex $v$'s threshold $\theta_v$ is a real number sampled uniformly at random from the interval $[0,1]$, and $v$ is infected if the sum of the weights of the edges connecting from its infected neighbors exceeds the threshold: $\sum_{u\in\Gamma(v):u\text{ is infected}}w(u,v)\geq\theta_v$.

Notice that the constraint $\sum_{u\in\Gamma(v)}w(u,v)\leq1$ mentioned earlier is essential for the linear threshold model, as otherwise the probability a vertex $v$ is infected is no longer ``linear'' in terms of the influence from its infected in-neighbors, and the resultant model becomes fundamentally different.

Definition~\ref{def:LTM} is a special case of this model by assigning weights to the edges in the graph (that is originally unweighted) as follows: $w(u,v)=\frac1{\deg(v)}$.
Note that the assigned edge weights are not necessarily symmetric (i.e., $w(u, v) = w(v, u)$ is not necessarily true), as is common in past literature.  If the undirected graph  $G=(V,E,w')$ is weighted, then a natural extension is to define $w(u, v) = \frac{w'(u, v)}{\sum_{u \in \Gamma(v)} w'(u, v))}$.

In Appendix~\ref{sect:generalizeLTM}, we discuss alternative or more general ways to define a linear threshold model on undirected graphs.
In particular, we show that in the above undirected weighted version of the linear threshold model, the greedy algorithm cannot achieve a $(1-(1-1/k)^k+f(k))$-approximation for any positive function $f(k)$. (See the subsection ``weighted undirected graphs with normalization'' in Appendix~\ref{sect:generalizeLTM}.)
We also consider a version where we require all incoming edges of a vertex $v$ to have the same weight but the total weight is allowed to be strictly less than $1$.
In this case, all our results (Theorem~\ref{thm:upperbound} and Theorem~\ref{thm:lowerbound}) still hold.  (See the subsection ``Unweighted undirected graphs with slackness'' in Appendix~\ref{sect:generalizeLTM}.)

\section{Conclusion and Open Problems}
We have seen that the greedy algorithm for \infmax with the linear threshold model on undirected graphs can overcome the $1-(1-1/k)^k$ barrier by an additive term $\Omega(1/k^3)$ as shown in Theorem~\ref{thm:lowerbound}.
However, Theorem~\ref{thm:upperbound} suggests that, unlike the case for the independent cascade model, the greedy algorithm  cannot overcome the $(1-1/e)$ barrier for $k\rightarrow\infty$ for the linear threshold model.
Moreover, we have seen in Sect.~\ref{sect:generalize} that the approximation guarantee $1-(1-1/k)^k$ is tight if the vertices are weighted, which is different from Khanna and Lucier's result for the independent cascade model.
This again suggests that there are fundamental differences between these two diffusion models.

The tight instance in Example~\ref{example} has a significant limitation: it cannot scale to large $\sigma(S^\ast)$.
Notice that, to make the example work, we have to make the size of each $D_i$ be $o(k)$ and $\sigma(S^\ast)=o(k^2)$.
Otherwise, $\{u_1,\ldots,u_k\}$ will not be able to infect each $v_i$ with probability $1-o(1)$.
If the sizes of $D_i$ are $\omega(k)$, each seed in the seed set $\{v_1,\ldots,v_k\}$ output by the greedy algorithm will not be connected from $\{u_1,\ldots,u_k\}$ with a constant probability.
In the \maxc view, this will imply that $\Sigma(S)\setminus\Sigma(S^\ast)$ contains a significant number of elements, which will make the greedy algorithm overcome the $1-1/e$ barrier.
Therefore, a natural question is, if $\sigma(S^\ast)$ is large enough, say, $\sigma(S^\ast)=\omega(k^2)$, can the $1-1/e$ barrier be overcome?
We believe it can be overcome, and we make the following conjecture.
\begin{conjecture}
Consider \infmax problem $(G=(V,E),k)$ with the linear threshold model on undirected graphs.
If $\max_{S:S\subseteq V,|S|\leq k}\sigma(S)=\omega(k^2)$, there exists a constant $c>0$ such that the greedy algorithm achieves a $(1-1/e+c)$-approximation.
\end{conjecture}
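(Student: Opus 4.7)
The plan is a proof by contradiction: assume the greedy's approximation surplus is $o(1)$, and derive a contradiction with the hypothesis $\sigma(S^*) = \omega(k^2)$.

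First, I would invoke the \maxc lemmas of Section~\ref{sect:propertiesmaxc} with a small constant $\varepsilon$ to extract rigid structural information. If the surplus is $o(1)$, then by Lemmas~\ref{lem:maxc_g=o}--\ref{lem:maxc_balance}: (i) $v_1 \notin S^*$; (ii) each $\sigma(\{u_i\}) = (1 \pm o(1))\sigma(S^*)/k$; (iii) the $\Sigma(\{u_i\})$ are almost pairwise disjoint, $\sum_i |\Sigma(\{u_i\})| = (1+o(1))|\Sigma(S^*)|$; (iv) $|\Sigma(\{v_1\}) \cap \Sigma(S^*)| = (1 \pm o(1))|\Sigma(S^*)|/k$; and (v) $|\Sigma(\{v_1\}) \setminus \Sigma(S^*)| = o(|\Sigma(S^*)|)$. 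Combining (ii) with $\sigma(S^*) = \omega(k^2)$ gives $\sigma(\{u_i\}) = \omega(k)$, and since $v_1$ is greedy's first pick, $\sigma(\{v_1\}) \geq \sigma(\{u_1\}) = \omega(k)$. Lemma~\ref{lem:deg+1} then yields $\deg(v_1), \deg(u_i) = \omega(k)$ for every $i$.

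The main engine is a strengthening of the escape-probability bound that drives Propositions~\ref{prop:C1} and~\ref{prop:C2}. Lemma~\ref{lem:|A|2} gives $\Pr(S^* \to v_1 \mid \{v_1\} \infect{S^*} w) \leq k/(k+1)$, yielding an escape probability of only $\Omega(1/k)$; this is exactly where the $1/k^3$ factor in Theorem~\ref{thm:lowerbound} originates. When $\deg(v_1) = \omega(k)$, however, at most a $k/\deg(v_1) = o(1)$ fraction of $v_1$'s neighbors belong to $S^*$. I would try to prove a strengthened lemma: for every $w \neq v_1$ with $\Pr(\{v_1\} \infect{S^*} w) > 0$, if $\deg(v_1) = \omega(k)$ then $\Pr(S^* \to v_1 \mid \{v_1\} \infect{S^*} w) \leq 1 - c_0$ for an absolute constant $c_0 > 0$. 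The intuition is that the reverse random walk from $v_1$, at each visit to $v_1$, escapes to a non-$S^*$ neighbor with probability $1 - o(1)$; combining this with a bound on the expected number of such visits before the walk terminates by revisit should yield a constant-slack bound rather than a $1/k$ one.

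Substituting the strengthened escape bound into the arguments of Propositions~\ref{prop:C1} and~\ref{prop:C2} replaces the $1/k$ factors there by absolute constants, giving $|\Sigma(\{v_1\}) \setminus \Sigma(S^*)| = \Omega(|\Sigma(S^*)|/k)$. This is only an $\Omega(1/k)$ fraction of $|\Sigma(S^*)|$, so one more step is needed: iterate the analysis for $v_2, \ldots, v_k$. Under the assumption that every marginal greedy step carries surplus $o(1)$, each subsequent $v_i$ should inherit the same structural and high-degree properties relative to a residual prescribed seed set; here the prescribed-seed-set extension discussed in Section~\ref{sect:generalize} is crucial, since it lets us reapply the whole machinery to the residual problem. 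Summing the $\Omega(|\Sigma(S^*)|/k)$ contributions over $i = 1, \ldots, k$ yields $|\Sigma(S) \setminus \Sigma(S^*)| = \Omega(|\Sigma(S^*)|)$, which via Lemma~\ref{lem:maxc_outside} (applied cumulatively) gives a constant surplus and contradicts the $o(1)$ assumption.

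The hard part will be the strengthened escape probability: the bound $|A|/(|A|+1)$ in Lemma~\ref{lem:|A|2} is tight in the worst case and does not exploit the degree of $v_1$. Turning the condition $\deg(v_1) = \omega(k)$ into an $\Omega(1)$ escape probability requires a genuinely new random-walk argument, perhaps decomposing the reverse walk into excursions leaving and returning to $v_1$ and showing that each returning excursion has constant probability of exiting along a non-$S^*$ edge (and constant probability of inducing a revisit rather than reaching $S^*$). A secondary difficulty is the iteration: greedy's later choices depend on earlier ones, so matching each $v_i$ to a distinct $u_{\pi(i)}$ while preserving the high-degree condition $\deg(v_i) = \omega(k)$ for all $i \geq 2$ is subtler than for $v_1$ alone and may require a global averaging argument over the whole greedy trajectory.
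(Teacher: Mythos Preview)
The statement you are attempting to prove is a \emph{conjecture} in the paper, not a theorem; the paper does not prove it and explicitly leaves it as an open problem in the conclusion. So there is no proof in the paper to compare your proposal against, and your proposal should be read as an attack plan on an open problem rather than a reconstruction of an existing argument.

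As a plan, your overall strategy is natural and tracks the paper's own heuristic for why the conjecture should be true: the discussion immediately preceding the conjecture observes that the tight Example~\ref{example} cannot scale to $\sigma(S^*)=\omega(k^2)$ precisely because making $|D_i|=\omega(k)$ would force $|\Sigma(\{v_1\})\setminus\Sigma(S^*)|$ to be large. However, the step you correctly flag as ``the hard part'' is a genuine gap, and your sketched argument for it does not work as written. The reverse random walk from $v_1$ visits $v_1$ exactly once, at the start; if it ever returns to $v_1$ that is a revisit and the walk terminates without reaching $S^*$. Hence your proposed decomposition into ``excursions leaving and returning to $v_1$'' is ill-defined: there is only one excursion. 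Moreover, the bare hypothesis $\deg(v_1)=\omega(k)$ does not by itself bound $\Pr(S^*\to v_1)$ away from $k/(k+1)$. If $v_1$'s non-$S^*$ neighbors are each adjacent to all of $S^*$ and to nothing else besides $v_1$, then $\Pr(S^*\to v_1)\approx k/(k+1)$ no matter how large $\deg(v_1)$ is. In that particular construction greedy would not pick $v_1$ first (one checks $\sigma(\{u_i\})>\sigma(\{v_1\})$), so the configuration is excluded; but this shows that any strengthened escape bound must invoke the constraint $\sigma(\{v_1\})\geq\max_i\sigma(\{u_i\})$ and Lemma~\ref{lem:deg+1} in an essential, global way, not merely the degree of $v_1$. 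Your iteration step inherits the same difficulty: later greedy picks $v_i$ need not individually satisfy $\sigma(\{v_i\})=\omega(k)$, so $\deg(v_i)=\omega(k)$ is not automatic for $i\geq 2$, and the residual ``prescribed seed set'' reduction you appeal to does not by itself supply a substitute high-degree condition.
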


Other than to prove (or disprove) the conjecture above, another open problem is to further close the gap for \infmax with undirected graphs.
Right now, the gap between $(1-1/e)$ \cite{KempeKT03} and $(1-\tau)$ \cite{schoenebeck2019influence} is still large.
Designing an approximation algorithm that achieves significantly better than a $(1-1/e)$-approximation and proving stronger APX-hardness results are two interesting and important directions for future work.

\section*{Acknowledgement}
We would like to thank the anonymous reviewers for their helpful and constructive comments.
Especially, we would like to thank the anonymous reviewer who brought us the point for the linear threshold model on edge-weighed undirected graphs, which motivates us to include the discussions in Append.~\ref{sect:generalizeLTM}.

\bibliographystyle{plainnat}
\bibliography{reference}

\newpage
\appendix
\section{Proof of Lemma~\ref{lem:clique}}
\label{append:proof_lem:clique}
Given a seed $s$ in the complete graph $G$, we calculate the probability that an arbitrary vertex $v\in V\setminus\{s\}$ is infected according to Remark~\ref{remark:RR}.
Consider the reverse random walk without repetition starting from $v$ as described in Remark~\ref{remark:RR}.
It reaches $s$ in one move with probability $\frac1{n-1}$, and it reaches $s$ in $t$ moves with probability
$\frac{1}{n-1}\prod_{i=1}^{t-1}\frac{n-1-i}{n-1}$ since $\prod_{i=1}^{t-1}\frac{n-1-i}{n-1}$ is the probability that the random walk never reaches $s$ and never comes back to any vertices that have been visited within the first $t-1$ moves and $\frac{1}{n-1}$ is the probability that the random walk moves to $s$ in the $t$-th move.
Putting this together, $v$ will be infected by $s$ with probability
$$\frac{1}{n-1}+\sum_{t=2}^{n-1}\frac{1}{n-1}\prod_{i=1}^{t-1}\frac{n-1-i}{n-1}=\frac1{n-1}\sum_{t=1}^{n-1}\prod_{i=0}^{t-1}\frac{n-1-i}{n-1}.$$
Simple calculations reveal an upper bound for this probability.
\begin{align*}
    &\frac1{n-1}\sum_{t=1}^{n-1}\prod_{i=0}^{t-1}\frac{n-1-i}{n-1}\\
    =&\frac1{n-1}\left(\sum_{t=1}^{\lceil\sqrt{n}\rceil}\prod_{i=0}^{t-1}\frac{n-1-i}{n-1}+\sum_{t=\lceil\sqrt{n}\rceil+1}^{n-1}\prod_{i=0}^{t-1}\frac{n-1-i}{n-1}\right)\\
    =&\frac1{n-1}\left(\sum_{t=1}^{\lceil\sqrt{n}\rceil}\prod_{i=0}^{t-1}\frac{n-1-i}{n-1}+\left(\prod_{i=0}^{\lceil\sqrt{n}\rceil-1}\frac{n-1-i}{n-1}\right)\sum_{t=\lceil\sqrt{n}\rceil+1}^{n-1}\prod_{i=\lceil\sqrt{n}\rceil}^{t-1}\frac{n-1-i}{n-1}\right)\tag{distributive law}\\
    <&\frac1{n-1}\left(\sum_{t=1}^{\lceil\sqrt n\rceil}1+1\cdot\sum_{t=1}^{n-1-\lceil\sqrt{n}\rceil}\left(\frac{n-1-\lceil\sqrt n\rceil}{n-1}\right)^t\right)\tag{the first two products are replaced by $1$, and $\prod_{i=\lceil\sqrt{n}\rceil}^{t-1}\frac{n-1-i}{n-1}\leq\left(\frac{n-1-\lceil\sqrt n\rceil}{n-1}\right)^{t-\lceil\sqrt{n}\rceil}$}\\
    <&\frac1{n-1}\left(\sum_{t=1}^{\lceil\sqrt n\rceil}1+\sum_{t=0}^{\infty}\left(\frac{n-1-\lceil\sqrt n\rceil}{n-1}\right)^t\right)\tag{the summation is extended to the infinite series}\\
    =&\frac1{n-1}\left(\left\lceil\sqrt{n}\right\rceil+\frac{n-1}{\lceil\sqrt n\rceil}\right).
\end{align*}
Finally, by linearity of expectation, the expected total number of infected vertices is
$$1+(n-1)\cdot \frac1{n-1}\left(\left\lceil\sqrt{n}\right\rceil+\frac{n-1}{\lceil\sqrt n\rceil}\right)<3\sqrt{n},$$
which concludes the lemma.

\section{Proofs in Sect.~\ref{sect:propertiesmaxc}}
We prove all the lemmas in Sect.~\ref{sect:propertiesmaxc} here.
Notice that the lemmas are restated for the ease of reading.
Again, for all the lemmas in this section, we are considering a \maxc instance $(U,\M,k)$ where $\calS=\{S_1,\ldots,S_k\}$ denotes $k$ subsets output by the greedy algorithm and $\calS^\ast=\{S_1^\ast,\ldots,S_k^\ast\}$ denotes the optimal solution.

We first define a useful notion called a \maxc instance \emph{with restriction}.

\begin{definition}
Given a \maxc instance $(U,\M,k)$ and a subset $U'\subseteq U$, the \maxc instance $( U,\M,k)$ \emph{with restriction on} $ U'$ is another \maxc instance $( U',\M',k')$ where $\M'=\{S\cap U':S\in\M\}$.
\end{definition}

We begin by proving the following lemma, which compares the $k$ subsets output by the greedy algorithm with arbitrary $\ell$ subsets.
This is a more general statement than saying that the greedy algorithm always achieves a $(1-(1-1/k)^k)$-approximation.

\begin{lemma}\label{lem:maxc_kl}
Given a \maxc instance $( U,\M,k)$, let  $\calS'$ be an arbitrary collection of $\ell$ subsets, we have $\val( \calS)\geq(1-(1-1/\ell)^k)\val( \calS')$.
\end{lemma}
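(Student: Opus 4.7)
The plan is to mimic the standard greedy analysis for \maxc, but compare the greedy progress against the best of $\ell$ subsets (rather than the best of $k$), so that the averaging step gives a factor $1/\ell$ instead of $1/k$.

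More concretely, let $a_i = \val(\{S_1, \ldots, S_i\})$ for $i = 0, 1, \ldots, k$, and write $T = \val(\calS')$. I would first observe that at each step $i$, if we look at the $\ell$ subsets comprising $\calS'$, they together cover $T$ elements; removing the at most $a_i$ elements already covered by $\{S_1,\ldots,S_i\}$ leaves at least $T - a_i$ elements that some member of $\calS'$ could contribute. By averaging, there exists $S^* \in \calS'$ whose marginal contribution over $\{S_1,\ldots,S_i\}$ is at least $(T - a_i)/\ell$. Since the greedy algorithm picks the subset with the largest marginal contribution from \emph{all} of $\M$, we get
\begin{equation*}
a_{i+1} - a_i \;\geq\; \frac{T - a_i}{\ell}.
\end{equation*}

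From the recurrence $T - a_{i+1} \leq (1 - 1/\ell)(T - a_i)$, a straightforward induction yields $T - a_k \leq (1-1/\ell)^k (T - a_0) = (1-1/\ell)^k T$, which rearranges to $\val(\calS) = a_k \geq (1 - (1-1/\ell)^k)\, \val(\calS')$, as desired.

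There is no real obstacle here; the only small subtlety is the averaging step, which requires noting that if the $\ell$ subsets $S_1^{\prime}, \ldots, S_\ell^{\prime}$ of $\calS'$ cover in aggregate at least $T - a_i$ elements not already covered by $\{S_1,\ldots,S_i\}$, then by pigeonhole at least one of them covers at least $(T - a_i)/\ell$ such elements (subadditivity of set union over the $\ell$ subsets). This mirrors the usual proof that the greedy algorithm achieves a $1 - (1-1/k)^k$ approximation for \maxc, specializing to $\ell = k$ when $\calS'$ is the optimum, and thus recovers the well-known bound as a corollary.
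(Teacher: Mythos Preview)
Your argument is correct. The recurrence
\[
T - a_{i+1} \;\le\; \Bigl(1-\tfrac{1}{\ell}\Bigr)(T-a_i)
\]
follows exactly as you say: the $\ell$ subsets of $\calS'$ together cover at least $T-a_i$ elements outside $\bigcup_{j\le i}S_j$, so by subadditivity one of them has marginal at least $(T-a_i)/\ell$, and greedy does at least as well since $\calS'\subseteq\M$. Iterating $k$ times gives the bound.

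The paper proves the same lemma by a slightly different packaging: it fixes $\ell$ and inducts on $k$, passing at each step to the restricted instance on $U\setminus S_1$ and then doing a case split on whether $|S_1\cap(\bigcup_{S\in\calS'}S)|$ is above or below $\tfrac{1}{\ell}\val(\calS')$. Your recurrence approach and the paper's induction-with-case-analysis are both standard presentations of the greedy bound; the underlying mechanism (greedy's marginal at each step is at least a $1/\ell$ fraction of what $\calS'$ still has uncovered) is identical. Your version is a bit more direct and avoids the case split; the paper's version, on the other hand, makes the role of $|S_1\cap(\bigcup_{S\in\calS'}S)|$ explicit, which dovetails with how the surrounding lemmas (on what happens when this intersection deviates from $1/k$) are stated and used later in the analysis.
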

\begin{proof}
Fix an arbitrary $\ell$, we prove this lemma by induction on $k$.
To prove the base step for $k=1$, the subset in $\calS'$ with the largest size covers at least $\frac1\ell\val( \calS')$ elements, so the first subset picked by the greedy algorithm should cover at least $\frac1\ell\val( \calS')$ elements.
Thus, for $k=1$, $\val( \calS)\geq(1/\ell)\val( \calS')=(1-(1-1/\ell)^k)\val( \calS')$.

For the inductive step, suppose this lemma holds for $k=k_0$, we aim to show that it holds for $k=k_0+1$.
Let $ \calS=\{S_1,\ldots,S_{k_0+1}\}$ be the output of the greedy algorithm.
By the same analysis above, $|S_1|\geq\frac1\ell\val( \calS')$.
Consider the \maxc instance $( U'= U\setminus S_1,\M',k_0)$ which is the instance $( U,\M,k_0+1)$ with restriction on $ U\setminus S_1$.
Since the greedy algorithm selects subsets based on \emph{marginal} increments to $\val(\cdot)$, $(S_2\setminus S_1),\ldots,(S_{k_0+1}\setminus S_1)$ will also be the $k_0$ subsets picked by the greedy algorithm on the restricted instance.
By the induction hypothesis, we have $$\left|\left(\bigcup_{i=2}^{k_0+1}S_i\right)\setminus S_1\right|=\left|\bigcup_{i=2}^{k_0+1}(S_i\setminus S_1)\right|\geq\left(1-\left(1-\frac1\ell\right)^{k_0}\right)\left|\bigcup_{S\in\calS'}(S\setminus S_1)\right|$$
$$\qquad=\left(1-\left(1-\frac1\ell\right)^{k_0}\right)\left|\left(\bigcup_{S\in\calS'}S\right)\setminus S_1\right|.$$
We then discuss two different cases.

If $|S_1\cap(\bigcup_{S\in \calS'}S)|\leq\frac1\ell|\bigcup_{S\in \calS'}S|=\frac1\ell\val( \calS')$, then $|(\bigcup_{S\in \calS'}S)\setminus S_1|\geq\frac{\ell-1}\ell\val( \calS')$ and
$$\val(\calS)=|S_1|+\left|\left(\bigcup_{i=2}^{k_0+1}S_i\right)\setminus S_1\right|\geq\frac1\ell\val( \calS')+\left(1-\left(1-\frac1\ell\right)^{k_0}\right)\cdot\frac{\ell-1}{\ell}\val( \calS')$$
$$\qquad=\left(1-\left(1-\frac1\ell\right)^{k_0+1}\right)\val( \calS'),$$
which concludes the inductive step.

If $|S_1\cap(\bigcup_{S\in \calS'}S)|>\frac1\ell|\bigcup_{S\in \calS'}S|=\frac1\ell\val( \calS')$, let $|S_1\cap(\bigcup_{S\in \calS'}S)|=(\frac1\ell+c)\val( \calS')$ for some $c\in(0,1-\frac1\ell]$, and we have
$$\val( \calS)=|S_1|+\left|\left(\bigcup_{i=2}^{k_0+1}S_i\right)\setminus S_1\right|\geq\left(\frac1\ell+c\right)\val( \calS')+\left(1-\left(1-\frac1\ell\right)^{k_0}\right)\cdot\left(1-\frac1\ell-c\right)\val( \calS')$$
$$\qquad=\left(1-\left(1-\frac1\ell\right)^{k_0+1}+c\left(1-\frac1\ell\right)^{k_0}\right)\val( \calS')>\left(1-\left(1-\frac1\ell\right)^{k_0+1}\right)\val( \calS'),$$
which concludes the inductive step as well.
\end{proof}

The lemma below shows that, if the first subset picked by the greedy algorithm is one of the subsets in the optimal solution, then the barrier $1-(1-1/k)^k$ can be overcome.

\maxcgo*
\begin{proof}
Assume $S_1=S_1^\ast$ without loss of generality.
In order to be picked by the greedy algorithm, $S_1^\ast$ should also be the subset in $\calS^\ast$ with the largest size.
Therefore, $|S_1|=|S_1\cap(\bigcup_{i=1}^kS_i^\ast)|=(\frac1k+c)\val(\calS^\ast)$ for some $c\geq0$, and $|(\bigcup_{i=2}^kS_i^\ast)\setminus S_1|=|(\bigcup_{i=1}^kS_i^\ast)\setminus S_1|=(1-\frac1k-c)\val(\calS^\ast)$.
By applying Lemma~\ref{lem:maxc_kl} on the instance with restriction $U\setminus S_1$, we have $|(\bigcup_{i=2}^kS_i)\setminus S_1|\geq(1-(1-\frac1{k-1})^{k-1})|(\bigcup_{i=2}^kS_i^\ast)\setminus S_1|=(1-(1-\frac1{k-1})^{k-1})(1-\frac1k-c)\val(\calS^\ast)$.
Putting together,
\begin{align*}
    \val(\calS)&=|S_1|+\left|\left(\bigcup_{i=2}^kS_i\right)\setminus S_1\right|\\
    &\geq\left(\frac1k+c\right)\val\left(\calS^\ast\right)+\left(1-\left(1-\frac1{k-1}\right)^{k-1}\right)\left(1-\frac1k-c\right)\val\left(\calS^\ast\right)\\
    &=\left(1-\left(1-\frac1k\right)\left(1-\frac1{k-1}\right)^{k-1}+c\left(1-\frac1{k-1}\right)^{k-1}\right)\val(\calS^\ast)\\
    &\geq\left(1-\left(1-\frac1k\right)\left(1-\frac1k\right)^{k-1}\left(1-\frac1{(k-1)^2}\right)^{k-1}\right)\val(\calS^\ast)\tag{since $\left(1-\frac1k\right)\left(1-\frac1{(k-1)^2}\right)=1-\frac1{k-1}$ and $c\left(1-\frac1{k-1}\right)^{k-1}\geq0$}\\
    &\geq\left(1-\left(1-\frac1k\right)^k\left(1-\frac1{(k-1)^2}\right)\right)\val(\calS^\ast)\\
    &\geq\left(1-\left(1-\frac1k\right)^k+\frac1{(k-1)^2}\left(1-\frac1k\right)^k\right)\val(\calS^\ast).
\end{align*}
The lemma follows from noticing $(1-\frac1k)^k\geq\frac14$ and $\frac1{k-1}>\frac1k$.
\end{proof}

Next, we show that, in order to have the tight approximation guarantee $1-(1-1/k)^k$, the first subset picked by the greedy algorithm must intersect almost exactly $1/k$ fraction of the elements covered by the $k$ optimal subsets.

\maxcfirstintersection*
\begin{proof}
By the same argument in the first paragraph of the proof of Lemma~\ref{lem:maxc_kl}, we have $|S_1|\geq\frac1k\val(\calS^\ast)$.
On the other hand, considering the instance with restriction on $U\setminus S_1$, the greedy algorithm, picking subsets based on marginal increments, will pick $(S_2\setminus S_1),\ldots,(S_k\setminus S_1)$ as the first $k-1$ seeds in the restricted instance.
Applying Lemma~\ref{lem:maxc_kl}, we have $|(\bigcup_{i=2}^kS_i)\setminus S_1|\geq(1-(1-1/k)^{k-1})|(\bigcup_{i=1}^kS_i^\ast)\setminus S_1|$.

If $\frac{|S_1\cap(\bigcup_{i=1}^kS_i^\ast)|}{\val(\calS^\ast)}>\frac1k+\varepsilon$, let $\frac{|S_1\cap(\bigcup_{i=1}^kS_i^\ast)|}{\val(\calS^\ast)}=\frac1k+c$ where $c>\varepsilon$.
The last paragraph of the proof of Lemma~\ref{lem:maxc_kl} can be applied here, and we have $$\val(\calS)\geq\left(\frac1k+c\right)\val(\calS^\ast)+\left(1-\left(1-\frac1k\right)^{k-1}\right)\left(1-\frac1k-c\right)\val(\calS^\ast)$$
$$\qquad=\left(1-\left(1-\frac1k\right)^k+c\left(1-\frac1k\right)^{k-1}\right)\val(\calS^\ast)>\left(1-\left(1-\frac1k\right)^k+\frac\varepsilon4\right)\val(\calS^\ast),$$
since $(1-\frac1k)^{k-1}>\frac14$ and $c>\varepsilon$.

If $\frac{|S_1\cap(\bigcup_{i=1}^kS_i^\ast)|}{\val(\calS^\ast)}<\frac1k-\varepsilon$, let $\frac{|S_1\cap(\bigcup_{i=1}^kS_i^\ast)|}{\val(\calS^\ast)}=\frac1k-c$ where $c\in(\varepsilon,\frac1k)$.
We have
$$\left|\left(\bigcup_{i=2}^kS_i\right)\setminus S_1\right|\geq\left(1-\left(1-\frac1k\right)^{k-1}\right)\left|\left(\bigcup_{i=1}^kS_i^\ast\right)\setminus S_1\right|$$
$$\qquad=\left(1-\left(1-\frac1k\right)^{k-1}\right)\left(1-\frac1k+c\right)\val(\calS^\ast).$$
Adding $S_1$, we have $$\val(\calS)=|S_1|+\left|\left(\bigcup_{i=2}^kS_i\right)\setminus S_1\right|\geq\frac1k\val(\calS^\ast)+\left(1-\left(1-\frac1k\right)^{k-1}\right)\left(1-\frac1k+c\right)\val(\calS^\ast)$$
$$\qquad=\left(1-\left(1-\frac1k\right)^k+c\left(1-\left(1-\frac1k\right)^{k-1}\right)\right)\val(\calS^\ast)>\left(1-\left(1-\frac1k\right)^k+\frac\varepsilon4\right)\val(\calS^\ast),$$ since $1-(1-\frac1k)^{k-1}>\frac14$ (this holds $k\geq2$; if $k=1$, the premise of the lemma will not hold as we will then have $S^\ast=\{S_1\}$)
and $c>\varepsilon.$
\end{proof}

The next lemma shows that, in order to have the tight approximation guarantee $1-(1-1/k)^k$, the first subset output by the greedy algorithm must not cover a number of elements that is significantly more than $1/k$ fraction of the number of elements in the optimal solution.

\begin{lemma}\label{lem:maxc_first}
If $|S_1|\geq(\frac1k+\varepsilon)\val(\calS^\ast)$ for some $\varepsilon>0$ which may depend on $k$, then $\val(\calS)\geq(1-(1-1/k)^k+\varepsilon/8)\val(\calS^\ast)$.
\end{lemma}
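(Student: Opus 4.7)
}
The plan is to mimic the structure used in the proof of Lemma~\ref{lem:maxc_firstintersection}, isolating the gain contributed by $S_1$ being large and then invoking Lemma~\ref{lem:maxc_kl} on the instance restricted to $U \setminus S_1$. First I would write $|S_1| = (\tfrac{1}{k} + c)\val(\calS^\ast)$ for some $c \geq \varepsilon$. Note that $\val(\calS^\ast) \geq |S_1|$ (because $\{S_1\}$ together with any $k-1$ other subsets forms a feasible collection of size $k$), so $c \leq 1 - \tfrac{1}{k}$ and the parametrization is well-defined.

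Next I would consider the restricted instance $(U \setminus S_1, \M', k-1)$ obtained from $(U, \M, k)$ by restriction to $U \setminus S_1$. Since the greedy algorithm selects subsets based on marginal increments, $(S_2 \setminus S_1), \ldots, (S_k \setminus S_1)$ are exactly the $k-1$ subsets that the greedy algorithm picks on this restricted instance. Applying Lemma~\ref{lem:maxc_kl} with greedy budget $k-1$ and comparison collection $\{S_1^\ast \setminus S_1, \ldots, S_k^\ast \setminus S_1\}$ of size $\ell = k$ yields
\[
\left|\bigcup_{i=2}^k (S_i \setminus S_1)\right| \;\geq\; \left(1 - \left(1 - \tfrac{1}{k}\right)^{k-1}\right) \left|\left(\bigcup_{i=1}^k S_i^\ast\right) \setminus S_1\right|.
\]
Since $|(\bigcup_{i=1}^k S_i^\ast) \setminus S_1| = \val(\calS^\ast) - |S_1 \cap \bigcup_i S_i^\ast| \geq \val(\calS^\ast) - |S_1| = (1 - \tfrac{1}{k} - c)\val(\calS^\ast)$, adding $|S_1|$ to both sides gives
\[
\val(\calS) \;\geq\; \left(\tfrac{1}{k} + c\right)\val(\calS^\ast) + \left(1 - \left(1 - \tfrac{1}{k}\right)^{k-1}\right)\left(1 - \tfrac{1}{k} - c\right)\val(\calS^\ast).
\]

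Finally I would simplify: the coefficient of $\val(\calS^\ast)$ equals $1 - (1 - \tfrac{1}{k})^{k-1}(1 - \tfrac{1}{k}) + c(1 - \tfrac{1}{k})^{k-1} = 1 - (1 - \tfrac{1}{k})^k + c(1 - \tfrac{1}{k})^{k-1}$. Using $c \geq \varepsilon$ together with the standard bound $(1 - \tfrac{1}{k})^{k-1} \geq 1/e > 1/4 > 1/8$ for all $k \geq 2$ (the case $k = 1$ is vacuous since then $|S_1| \leq \val(\calS^\ast)$ forces $\varepsilon \leq 0$), we conclude $\val(\calS) \geq (1 - (1 - \tfrac{1}{k})^k + \varepsilon/8)\val(\calS^\ast)$, as desired.

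There is no real obstacle here; the only minor care points are verifying that the restricted instance identifies the greedy's subsequent picks correctly (which follows from the marginal-increment rule) and ensuring the case $|S_1 \cap \bigcup_i S_i^\ast| < |S_1|$ does not cause loss, which is handled automatically by the crude bound $\val(\calS^\ast) - |S_1|$ being a valid lower bound on what $\calS^\ast$ covers outside $S_1$.
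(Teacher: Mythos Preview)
Your proof is correct, and it is in fact a bit cleaner than the paper's. Both arguments rest on the same core step---applying Lemma~\ref{lem:maxc_kl} to the instance restricted to $U\setminus S_1$---but the paper first performs a case split on whether $|S_1\cap(\bigcup_i S_i^\ast)|/\val(\calS^\ast)$ lies in $[\tfrac1k-\tfrac\varepsilon2,\tfrac1k+\tfrac\varepsilon2]$, invoking Lemma~\ref{lem:maxc_firstintersection} in the outlying case and only doing the direct calculation in the remaining case. You bypass this by parametrizing directly in terms of $|S_1|$ (rather than the intersection) and using the cruder but perfectly adequate bound $|(\bigcup_i S_i^\ast)\setminus S_1|\geq\val(\calS^\ast)-|S_1|$. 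This makes your argument self-contained (no dependence on Lemma~\ref{lem:maxc_firstintersection}) and avoids the case analysis; indeed your computation actually yields the stronger surplus $c(1-\tfrac1k)^{k-1}\geq\varepsilon/e>\varepsilon/4$, so the stated $\varepsilon/8$ follows with room to spare.
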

\begin{proof}
If $|S_1\cap(\bigcup_{i=1}^kS_i^\ast)|/\val(\calS^\ast)\notin[\frac1k-\frac\varepsilon2,\frac1k+\frac\varepsilon2]$, Lemma~\ref{lem:maxc_firstintersection} directly implies this lemma.
Suppose $|S_1\cap(\bigcup_{i=1}^kS_i^\ast)|/\val(\calS^\ast)\in[\frac1k-\frac\varepsilon2,\frac1k+\frac\varepsilon2]$.
Since $|S_1|\geq(\frac1k+\varepsilon)\val(\calS^\ast)$, we have $|S_1\setminus(\bigcup_{i=1}^kS_i^\ast)|>\frac\varepsilon2\val(\calS^\ast)$.
Let $|S_1\cap(\bigcup_{i=1}^kS_i^\ast)|=(\frac1k+c)\val(\calS^\ast)$ where $c\in[-\frac\varepsilon2,\frac\varepsilon2]$.
By the same analysis in the last paragraph of the proof of Lemma~\ref{lem:maxc_firstintersection} (which uses Lemma~\ref{lem:maxc_kl} as well),
$$\val(\calS)=|S_1|+\left|\left(\bigcup_{i=2}^kS_i\right)\setminus S_1\right|\geq\left(\frac1k+\varepsilon\right)\val(\calS^\ast)+\left(1-\left(1-\frac1k\right)^{k-1}\right)\left(1-\frac1k+c\right)\val(\calS^\ast)$$
$$=\left(1-\left(1-\frac1k\right)^k+c\left(1-\left(1-\frac1k\right)^{k-1}\right)+\varepsilon\right)\val(\calS^\ast)>\left(1-\left(1-\frac1k\right)^k+\frac\varepsilon8\right)\val(\calS^\ast).$$
For the last inequality, it holds trivially if $c\geq0$, and it holds for $c<0$ as $c(1-(1-\frac1k)^{k-1})+\varepsilon>c+\varepsilon\geq\frac\varepsilon2$.
\end{proof}

The next two lemmas show that, in order to have the tight approximation guarantee $1-(1-1/k)^k$,  those optimal subsets must be almost disjoint and the first subset output by the greedy algorithm must not cover too many elements that are not covered by the optimal subsets.

\maxcdisjoint*
\begin{proof}
If $\sum_{i=1}^k|S_i^\ast|>(1+\varepsilon)\val(\calS^\ast)$, the subset in $\calS^\ast$ with the largest size should contain more than $\frac{1+\varepsilon}{k}\val(\calS^\ast)$ elements, implying that $|S_1|>\frac{1+\varepsilon}{k}\val(\calS^\ast)$.
Lemma~\ref{lem:maxc_first} then implies that $\val(\calS)\geq(1-(1-\frac1k)^k+\frac\varepsilon{8k})\val(\calS^\ast)$.
\end{proof}

\maxcoutside*
\begin{proof}
If $|S_1\cap(\bigcup_{i=1}^kS_i^\ast)|<(\frac1k-\frac\varepsilon2)\val(\calS^\ast)$, Lemma~\ref{lem:maxc_firstintersection} implies this lemma. Otherwise, we have
$$|S_1|=\left|S_1\setminus\left(\bigcup_{i=1}^kS_i^\ast\right)\right|+\left|S_1\cap\left(\bigcup_{i=1}^kS_i^\ast\right)\right|\geq \varepsilon\val(\calS^\ast)+\left(\frac1k-\frac\varepsilon2\right)\val(\calS^\ast)=\left(\frac1k+\frac\varepsilon2\right)\val(\calS^\ast),$$
and Lemma~\ref{lem:maxc_first} implies this lemma.
\end{proof}

Finally, the lemma below shows that, in order to have the tight approximation guarantee $1-(1-1/k)^k$, those subsets in the optimal solution must have about the same size.

\maxcbalance*
\begin{proof}
Assume $|S_1^\ast|<(\frac1k-\varepsilon)\val(\calS^\ast)\leq(\frac1k-\varepsilon)\sum_{i=1}^k|S_i^\ast|$ without loss of generality.
We have
$$\sum_{i=2}^k|S_i^\ast|>\left(\frac{k-1}k+\varepsilon\right)\sum_{i=1}^k|S_i^\ast|\geq\left(\frac{k-1}k+\varepsilon\right)\val(\calS^\ast).$$
Therefore,
$$\max_{2\leq i\leq k}|S_i^\ast|>\left(\frac1k+\frac\varepsilon{k-1}\right)\val(\calS^\ast)>\left(\frac1k+\frac\varepsilon{k}\right)\val(\calS^\ast).$$
By the nature of the greedy algorithm, $$|S_1|\geq\max_{2\leq i\leq k}|S_i^\ast|>\left(\frac1k+\frac\varepsilon{k}\right)\val(\calS^\ast),$$
and Lemma~\ref{lem:maxc_first} implies this lemma.
\end{proof}

We remark that we only include those properties that are useful in our analysis, while there are some other important properties for \maxc that are not listed here.

\section{Alternative Models for Linear Threshold Model on Undirected Graphs}
\label{sect:generalizeLTM}
As mentioned in the last subsection of Sect.~\ref{sect:generalize}, we will discuss alternative or more general ways to define a linear threshold model on undirected graphs, and discuss whether our results in Sect.~\ref{sect:upperbound} and Sect.~\ref{sect:lowerbound} extend to those new settings.

\paragraph{Weighted undirected graphs with symmetric weights}
A seemingly natural way to define the linear threshold model on undirected edge-weighted graphs is to define edge-weighted undirected graphs such that the weights satisfy the constraints that, 1) for each vertex $v$, $\sum_{u\in\Gamma(v)}w(u,v)\leq1$ (as it is in the linear threshold model for general directed graphs), and 2) $w(u,v)=w(v,u)$ for any pair $\{u,v\}$ (so that the graph is undirected).
However, this model is unnatural in reality, because it disallows the case that a popular vertex exercises significant influence over many somewhat lonely vertices.
Consider an extreme example where the graph is a star, with a center $u$ and $n$ leaves $v_1,\ldots,v_n$.
The constraint $\sum_{i=1}^nw(v_i,u)\leq1$ implies that there exists at least one $v_i$ such that $w(v_i,u)\leq\frac1n$, and furthermore, $w(u,v_i)=w(v_i,u)\leq\frac1n$.
In this case, even if $u$ is the only neighbor of $v_i$, $u$ still has very limited influence to $v_i$ just because $u$ has a lot of other neighbors.
In reality, it is unnatural to assume that a node's being popular reduces its influence to its neighbors.

The linear threshold model constraint  $\sum_{u\in\Gamma(v)}w(u,v)\leq1$ mekes the above model with symmetrically weighted graphs unnatural.  Moreover, this constrains is particular to the linear threshold model.  For the independent cascade model which does not have this constraint, it is much more natural to consider graphs with symmetric edge weights $\forall\{u,v\}:w(u,v)=w(v,u)$, and this is indeed the model studied most often in the past literature, including Khanna and Lucier's work \cite{khanna2014influence}.

\paragraph{Weighted undirected graphs with normalization}
A more natural way to define the linear threshold model on graphs that are both edge-weighted and undirected is to start with an edge-weighted undirected graph $G=(V,E,w')$ without any constraint and then normalize the weight of each edge $(u,v)$ such that $w(u,v)=\frac{w'(u,v)}{\sum_{u'\in\Gamma(v)}w'(u',v)}$ and $w(v,u)=\frac{w'(u,v)}{\sum_{v'\in\Gamma(u)}w'(u,v')}$, as mentioned in the last subsection of Sect.~\ref{sect:generalize}.
After normalization, we have, for each $v\in V$, $\sum_{u\in\Gamma(v)}w(u,v)=1$, so this is a valid linear threshold model.
Notice that, after the normalization, the weights of the two anti-parallel directed edges $(u,v)$ and $(v,u)$ may be different.  Even though they had the same weight before the normalization (to maintain the undirected feature).
In the corresponding live-edge interpretation, each $v$ chooses one of its incoming edges to be ``live'' with probability proportional to the edge-weights (instead of choosing one uniformly at random as in Theorem~\ref{thm:LT_live}).

Theorem~\ref{thm:upperbound} holds naturally under this more generalized model.
However, Theorem~\ref{thm:lowerbound} no longer holds, and the barrier $1-(1-1/k)^k$ is tight even up to lower order terms: for any positive function $f(k)$ which may be infinitesimal, there is always an example where the greedy algorithm achieves less than a $(1-(1-1/k)^k+f(k))$-approximation.
Example~\ref{example} can be easily adapted to show this.
Let $m\gg k$ be a sufficiently large number that is divisible by $k^k$ such that both $m^{0.1}$ and $\sqrt{m}$ are integers.
Increase the size of $C_1,\ldots,C_k$ to $m^{0.1}$.
Increase the sizes of the stars such that $|D_i|= m(1-\frac1k)^{i-1}$ for $i=1,\ldots,k$ and $|D_{k+1}|=\cdots=|D_{k+\ell-1}|= m(1-\frac1k)^k$, where $\ell$ and $|D_{k+\ell}|$ are set such that $\sum_{i=1}^{k+\ell}|D_i|=km-k\sqrt{m}$.
Set the weights of the edges in each $D_i$ to be extremely small, say $1/m^{100}$, and set the weights of the remaining edges to be $1$.
After normalizing the weights, the weight of each edge connecting $v_i$ to each of the remaining vertices in $D_i$ is still $1$, the weight of each edge $(u_i,v_j)$ (for $i=1,\ldots,k$ and $j=1,\ldots,k+\ell$) becomes $\frac1{k+(|D_j|-1)/m^{100}}\approx\frac1k$, the weight of each edge $(v_j,u_i)$ (again for $i=1,\ldots,k$ and $j=1,\ldots,k+\ell$) becomes $\frac1{k+\ell+|C_i|-1}=\Theta(\frac1m)$ which can be made much smaller than $f(k)$.
By a similar argument, the greedy algorithm will choose $\{v_1,\ldots,v_k\}$, while the optimal seed set is $\{u_1,\ldots,u_k\}$.
We have $\frac{\sigma(S)}{\sigma(S^\ast)}\leq\frac{mk(1-(1-1/k)^k)+o(m^{0.1})}{km-k\sqrt{m}}$, which can be less than $(1-(1-1/k)^k+f(k))$ for sufficiently large $m$.

Lemma~\ref{lem:|A|1} and Lemma~\ref{lem:|A|2} rely crucially on the fact that each vertex $v$ should choose its incoming live edge \emph{uniformly at random}, and Lemma~\ref{lem:deg+1} also relies on this.
This explains why the proof of Theorem~\ref{thm:lowerbound} fails to work for this edge-weighted setting.

\paragraph{Unweighted undirected graphs with slackness}
In the previous setting, as well as the unweighted setting used in this paper, we have $\sum_{u\in\Gamma(v)}w(u,v)$ equals exactly $1$.
Equivalently, each $v$ chooses exactly one incoming live edge.
The most general linear threshold model allows that $\sum_{u\in\Gamma(v)}w(u,v)$ may be strictly less than $1$, or that each $v$ can choose no incoming live edge with certain probability.

To define a model that incorporates this feature, we consider a more general model where each vertex $v$ has a parameter $\vartheta_v\in[0,1]$ (given as an input to the algorithm) such that each vertex $v$ chooses no incoming live edge with probability $1-\vartheta_v$, and, with probability $\vartheta_v$, it chooses an incoming edge being live uniformly at random.
Equivalently, given an undirected unweighted graph $G=(V,E)$, we assign weights to the edges such that $w(u,v)=\frac{\vartheta_v}{\deg(v)}$ and consider the standard linear threshold model on directed graphs.
Notice that we could further generalize this to allow weighted graphs, and then normalize the weights of the edges such that the sum of the weights of all incoming edges of each vertex $v$ is exactly $\vartheta_v$.
However, this is a model that is even more general than the one in the last subsection (the model in the last subsection is obtained by setting $\vartheta_v=1$ for all $v$ from this model), and we know that the ratio $1-(1-1/k)^k$ is tight even up to infinitesimal additive $f(k)$.
Thus, in this subsection, we consider the unweighted setting with the $(1-\vartheta_v)$ slackness for each vertex $v$.
We will show that both Theorem~\ref{thm:upperbound} and Theorem~\ref{thm:lowerbound} hold under this setting.
It is clear that Theorem~\ref{thm:upperbound} holds, as we are considering a more general model.

To see that Theorem~\ref{thm:lowerbound} holds, we first observe that Lemma~\ref{lemma:negativecorrelation}, Lemma~\ref{lem:|A|1} and Lemma~\ref{lem:|A|2} hold with exactly the same proofs.
To see that the remaining part of the proof of Theorem~\ref{thm:lowerbound} can be adapted to this setting, we need to show that Lemma~\ref{lem:deg+1} holds, and we need to establish that \infmax under this setting is still a special case of \maxc so that Proposition~\ref{prop:C3}, \ref{prop:C1} and \ref{prop:C2} hold.

Note that Lemma~\ref{lem:deg+1} is also true for this new setting with slackness, and it can be proved by a simple coupling argument if knowing Lemma~\ref{lem:deg+1} for the original setting without slackness is true. Alternative, it can be proved directly by a similar arguments used by \citet{schoenebeck2019influence}, and we include such a proof in Appendix~\ref{append:extension_slackness} for completeness.

We will use a more general version of \maxc with weighted elements, where each element $e_i$ has a positive weight $w(e_i)$, and the objective function we are maximizing becomes $\displaystyle\val(\mathcal{S})=\sum_{e\in\bigcup_{S\in\mathcal{S}}S}w(e)$.
All the lemmas in Sect.~\ref{sect:propertiesmaxc} hold for the weighted \maxc with exactly the same proofs.
The interpretation of an \infmax instance to a \maxc instance is almost the same as it is given in Sect.~\ref{sect:premaxc}.
The elements are tuples in $V\times H$ where $H$ is the set of all possible realizations.
Notice that here $|H|=\prod_{v\in V}(\deg(v)+1)$, as an extra outcome that an vertex chooses no incoming live edge is possible now.
The weight of the element $(v,g)$ equals to the probability that $g$ is sampled.
Therefore, $\sigma(S)=\sum_{v\in V}\Pr(S\rightarrow v)=\sum_{v\in V}\sum_{g:\text{ }v\text{ is reachable from }S\text{ under }g}\Pr(g\text{ is sampled})=\sum_{(v,g):\text{ }v\text{ is reachable from }S\text{ under }g}w((v,g))$.
Let $\Sigma(S)$ be the same as before (which is the set of all elements $(v,g)$ that are ``covered'' by $S$, or equivalently, the set of all $(v,g)$'s such that $v$ is reachable from $S$ under $g$).
We have $\sigma(S)=\sum_{(u,g)\in\Sigma(S)}w((u,g))$.

Finally, Proposition~\ref{prop:C3}, \ref{prop:C1} and \ref{prop:C2} hold with the following changes to the proof.
\begin{itemize}
    \item every $|\Sigma(S)|$ is changed to its weighted version $\sum_{(u,g)\in\Sigma(S)}w((u,g))$;
    \item $\prod_{v\in V}\deg(v)$ is changed to $1$ (Notice that we had $|H|=\prod_{v\in V}\deg(v)$ before, but we have $\sum_{g\in H}\Pr(g\text{ is sampled})=1$ now).
\end{itemize}

\section{Proof of Lemma~\ref{lem:deg+1} Including Slackness}
\label{append:extension_slackness}
Recall that, in the linear threshold model on undirected graphs with slackness, each vertex has a parameter $\vartheta_v\in[0,1]$ that is given as an input to the algorithm.
With probability $1-\vartheta_v$, vertex $v$ chooses no incoming live edge, and with probability $\vartheta_v$, vertex $v$ chooses one of its incoming edges as the live edge uniformly at random. (See the last subsection of Append.~\ref{sect:generalizeLTM}.)

We prove the following lemma in this section.

\begin{lemma}\label{lem:deg+1slack}
Consider the linear threshold model on undirected graphs with slackness.
For any $v\in V$, we have $\sigma(\{v\})=\deg(v)+1$.
\end{lemma}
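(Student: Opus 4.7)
The plan is to prove Lemma~\ref{lem:deg+1slack} (interpreted as $\sigma(\{v\})\leq\deg(v)+1$, matching the inequality in Lemma~\ref{lem:deg+1}) by induction on $n=|V|$. The base case $n=1$ is immediate: $\sigma(\{v\})=1=\deg(v)+1$. For the inductive step, I would adopt the live-edge interpretation extended with slackness, in which each vertex $u$ independently picks no incoming live edge with probability $1-\vartheta_u$ and otherwise picks a uniformly random incoming edge; then $\sigma(\{v\})$ equals the expected size of the set reachable from $v$ via live edges.

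The key formula comes from a path expansion of $\sigma(\{v\})$ enabled by Remark~\ref{remark:uniquePath}: for each $u\neq v$, $\Pr[v\to u]$ equals the sum over simple paths $p=(v=p_0,p_1,\dots,p_{|p|}=u)$ of $\prod_{j=1}^{|p|}\vartheta_{p_j}/\deg(p_j)$, since path uniqueness makes the distinct live-path events disjoint and vertices' incoming-edge choices are independent. Summing over $u\neq v$ and conditioning on the first step $x=p_1\in\Gamma(v)$ gives
\begin{equation*}
\sigma(\{v\})-1=\sum_{x\in\Gamma(v)}\frac{\vartheta_x}{\deg(x)}\bigl(1+T^G(x)\bigr),
\end{equation*}
where $T^G(x)$ sums $\prod_{j\geq 1}\vartheta_{p_j}/\deg_G(p_j)$ over simple paths of length $\geq 1$ starting at $x$ and avoiding $v$ inside $G$.

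Next, I would compare $T^G(x)$ to the analogous quantity in the subgraph $G-v$. The simple paths of $G$ that avoid $v$ are exactly the simple paths of $G-v$, and $\deg_{G-v}(p_j)\leq\deg_G(p_j)$ termwise along any such path, so $T^G(x)\leq T^{G-v}(x)=\sigma_{G-v}(\{x\})-1$. By the inductive hypothesis applied to $G-v$, $\sigma_{G-v}(\{x\})\leq\deg_{G-v}(x)+1$, and for $x\in\Gamma(v)$ we have $\deg_{G-v}(x)=\deg_G(x)-1$, so $1+T^G(x)\leq\deg_G(x)$. Plugging back in,
\begin{equation*}
\sigma(\{v\})-1\leq\sum_{x\in\Gamma(v)}\frac{\vartheta_x}{\deg(x)}\cdot\deg(x)=\sum_{x\in\Gamma(v)}\vartheta_x\leq\deg(v),
\end{equation*}
which closes the induction.

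The main obstacle I anticipate is cleanly justifying the path-expansion identity in the slackness setting: I must check that each vertex's live-edge choice (now including the possibility of choosing none, with probability $1-\vartheta_u$) preserves the independence needed for the product factorization across a path, and that Remark~\ref{remark:uniquePath} still applies when some vertices opt out of choosing any incoming live edge. Both are routine upon a careful reading of the slackness definition (the choices remain independent across vertices, and each vertex still has in-degree at most one in the live graph, so the unique simple-path property is retained), and they are exactly where the parameter $\vartheta$ enters the bookkeeping. Once that identity is in place, the remainder of the argument mirrors the no-slackness case.
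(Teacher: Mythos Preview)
Your proof is correct and takes a genuinely different route from the paper. The paper proceeds in two stages: it first proves the bound for trees by repeatedly contracting leaf-subtrees, and then reduces the general case to the tree case by constructing a ``lift'' $\widehat{G}_v$ (essentially the tree of all simple paths from $v$) and exhibiting a coupling that shows $\sigma_G(\{v\})\le\sigma_{\widehat G_v}(\{v\})$; since $v$ has the same degree in $\widehat G_v$, the tree bound finishes the argument. Your approach bypasses both the lift construction and the coupling: the exact path-expansion identity (which your caveat about independence and Remark~\ref{remark:uniquePath} correctly justifies in the slackness setting) lets you peel off one vertex and invoke induction on $|V|$, with the monotonicity $\deg_{G-v}(\cdot)\le\deg_G(\cdot)$ doing all the work. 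Conceptually the two are cousins---your path expansion is the analytic shadow of the paper's tree unrolling---but your argument is shorter, avoids any coupling machinery, and absorbs the slackness parameters $\vartheta_u$ with no extra effort; the paper's structural detour, on the other hand, isolates the tree case as an explicit intermediate lemma, which may be of independent use.
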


This lemma is a generalization to Lemma~\ref{lem:deg+1}, as the linear threshold model in Definition~\ref{def:LTM} used in this paper is a special case with the slackness of each vertex being $0$.

This lemma also fills in the last piece of the proof that Theorem~\ref{thm:lowerbound} holds for the setting with unweighted undirected graphs with slackness.

As mentioned, we  the arguments is largely identical to the one by \citet{schoenebeck2019influence}.

We first show that Lemma~\ref{lem:deg+1} holds for trees.

\begin{lemma}\label{lem:LT_1seed}
Suppose $G$ is a tree, we have $\sigma(\{v\})\leq\deg(v)+1$.
\end{lemma}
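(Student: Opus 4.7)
The plan is to combine the live-edge interpretation (adapted to the slackness model of Append.~\ref{sect:generalizeLTM}, in which each vertex $u$ picks no incoming live edge with probability $1-\vartheta_u$ and otherwise picks one of its $\deg(u)$ edges uniformly at random) with the uniqueness of paths in a tree. Root $T$ at $v$. For any $u$ there is a unique simple $v$-to-$u$ path $v=w_0,w_1,\ldots,w_\ell=u$, and this is the only route through which $v$ can influence $u$; it is live (in the $v$-to-$u$ direction) iff each $w_i$ with $i\geq 1$ selects $w_{i-1}$ as its live incoming edge. Since different vertices choose their incoming edges independently, this yields the closed form
$$\Pr\bigl(\{v\}\to u\bigr)=\prod_{i=1}^{\ell}\frac{\vartheta_{w_i}}{\deg(w_i)}.$$

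I would then prove the bound by induction on $|V(T)|$, with the trivial single-vertex tree as the base case ($\sigma(\{v\})=1=\deg(v)+1$). Let $c_1,\ldots,c_d$ be the neighbors of $v$ in $T$ (so $d=\deg_T(v)$), and let $T_i$ be the subtree on the $c_i$-side of $v$, rooted at $c_i$. A short calculation using the product formula above, together with the bookkeeping facts that $\deg_{T_i}(w)=\deg_T(w)$ for every $w\in T_i\setminus\{c_i\}$ and $\deg_{T_i}(c_i)=\deg_T(c_i)-1$, gives the recursion
$$\sigma_T(\{v\})=1+\sum_{i=1}^{d}\frac{\vartheta_{c_i}}{\deg_T(c_i)}\,\sigma_{T_i}(\{c_i\}).$$
Applying the inductive hypothesis, $\sigma_{T_i}(\{c_i\})\leq\deg_{T_i}(c_i)+1=\deg_T(c_i)$, so
$$\sigma_T(\{v\})\leq 1+\sum_{i=1}^{d}\vartheta_{c_i}\leq 1+d=\deg_T(v)+1,$$
using $\vartheta_{c_i}\leq 1$.

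The main (mild) obstacle is the degree bookkeeping: the factor $\vartheta_{c_i}/\deg_T(c_i)$ in the recursion uses $c_i$'s \emph{full} degree in $T$, and the extra ``$+1$'' in the inductive bound on $\sigma_{T_i}(\{c_i\})$ cancels exactly against the edge to $v$ that $c_i$ loses inside $T_i$, leaving the clean per-branch bound $\vartheta_{c_i}\leq 1$. No further subtlety arises because a tree has no cycles, and hence no alternative live $v$-to-$u$ path beyond the unique tree path needs to be accounted for.
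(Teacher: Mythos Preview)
Your proof is correct. Both yours and the paper's argument are inductions on the tree structure, but they run in opposite directions: you decompose at the root, while the paper contracts at the leaves. Concretely, the paper picks a penultimate-level vertex $u$ with leaf children $u_1,\ldots,u_t$, observes that the expected number of infections inside that subtree given that $u$'s parent is infected with probability $x$ is at most $x\vartheta_u$, and hence that collapsing the whole subtree to a single leaf can only increase $\sigma(\{v\})$; iterating reduces $T$ to a star centered at $v$, for which the bound $\sigma(\{v\})\leq 1+\deg(v)$ is immediate. Your top-down recursion $\sigma_T(\{v\})=1+\sum_i\frac{\vartheta_{c_i}}{\deg_T(c_i)}\,\sigma_{T_i}(\{c_i\})$ together with the inductive bound $\sigma_{T_i}(\{c_i\})\leq\deg_{T_i}(c_i)+1=\deg_T(c_i)$ is arguably a bit cleaner, since it bypasses the monotonicity-under-contraction step and works directly from the closed-form product $\prod_i\vartheta_{w_i}/\deg(w_i)$ along the unique tree path. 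The two arguments are equivalent in content; the difference is organizational.
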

\begin{proof}
We assume without loss of generality that $G$ is rooted at $v$.
Consider an arbitrary vertex $u\neq v$ at the penultimate level with children $u_1,\ldots,u_t$ being leaves of $T$.
We have $\deg(u)=t+1$.
Suppose $u$'s parent $s$ is infected by $v$ with probability $x$ ($x=1$ if $s=v$).
Then $u$ will be infected with probability $\frac{x\vartheta_u}{t+1}$, and each $u_i$ of $u_1,\ldots,u_t$, having degree $1$, will be infected with probability $\vartheta_{u_i}$ if $u$ is infected.
Therefore, the expected number of infected vertices in the subtree rooted at $u$ is $$\frac{x\vartheta_u}{t+1}\left(1+\sum_{i=1}^t\vartheta_{u_i}\right)+\left(1-\frac{x\vartheta_u}{t+1}\right)\cdot0x\leq\frac{x\vartheta_u}{t+1}(t+1)=x\vartheta_u.$$
This suggests that, if we contract the subtree rooted at $u$ to a single vertex $u$, the expected total number of infected vertices can only increase for this change of the graph $G$, since the degree of $u$ becomes $1$ after this contraction, making the infection probability of $u$ become $x\vartheta_u$.
We can keep doing this contraction until $G$ becomes a star with center $v$, and the expected number of infected vertices can only increase during this process.
The lemma follows.
\end{proof}

We define the \emph{lift} of an undirected graph $G$ with respect to a vertex $a\subseteq V$, which is a new undirected graph $\widehat{G}_a$ that shares the same vertex $a$ with $G$ plus a lot of new vertices.
We will then define a coupling between sampling live-edges in $G$ and sampling live-edges in $\widehat{G}_a$.
Given the seed $v$, this coupling reveals an upper bound of $\sigma(\{v\})$.
In particular, we will show $\sigma_G(\{v\})\leq\sigma_{\widehat{G}_v}(\{v\})$, where $\sigma_G(\cdot)$ and $\sigma_{\widehat{G}_v}(\cdot)$ denote the function $\sigma(\cdot)$ with respect to the graphs $G$ and $\widehat{G}_v$ respectively.

Let
$$\calP_a=\left\{P=((v_1,v_2),(v_2,v_3),\ldots,(v_{t-1},v_t)): v_1=a; v_2,\ldots,v_t\neq a;\forall i\neq j:v_i\neq v_j\right\}$$
be the set of all simple paths $P$ that start from vertex $a$ but never come back to $a$.

\begin{definition}\label{def:lift}
Given an undirected graph $G=(V,E)$ and $a\in V$, the \emph{lift} of $G$ with respect to $a$, denoted by $\widehat{G}_a=(\widehat{V},\widehat{E})$, is an undirected graph defined as follows.
\begin{itemize}
    \item The vertex set is $\widehat{V}=\{a\}\cup V_P$, where $V_P=\{v_P:P\in\calP_a\}$ is the set of vertices corresponding to the simple paths in $\calP_a$.
    \item For each $v_P\in V_P$, include $(a,v_P)\in\widehat{E}$ if $P$ is a path of length $1$ that starts from $a$; for each $v_{P_1},v_{P_2}\in V_P$, include $(v_{P_1},v_{P_2})$ if $|P_2|=|P_1|+1$ and $P_2,P_1$ share the first $|P_1|$ common edges (or $|P_1|=|P_2|+1$ and $P_1,P_2$ share the first $|P_2|$ common edges, since $\widehat{G}_a$ is undirected).
    \item If $P\in\calP_a$ is a path ending at a vertex in $G$ that is adjacent to $a$, add a dummy vertex in $\widehat{G}$ and connect this vertex to $v_P$.
\end{itemize}
\end{definition}

It is easy to see that $\widehat{G}_a$ is a tree (that can be viewed as) rooted at $a$.
The vertices in the tree $\widehat{G}_a$ correspond to all the paths in $\calP_a$ starting at $a$.
For any path $P\in\calP_a$ with $v$ being its ending vertex, $\deg(v_P)$ in $\widehat{G}_a$ equals to $\deg(v)$ in $G$.

Let $\Psi:E\to2^{\widehat{E}}$ be the function mapping an undirected edge in $G$ to its counterparts in $\widehat{G}_a$:
$$\Psi(e)=\left\{
\begin{array}{ll}
    \{(a,v_P)\mid P=((a,v))\} & \mbox{if }e=(a,v)\\
    \{(v_{P_1},v_{P_2})\mid P_2=(P_1,e)\} & \mbox{Otherwise.}
\end{array}
\right.$$
Notice that in the above definition, $\Psi(e)$ contains only a single edge $(a,v_P)$ with $P=((a,v))$ being the length-one path connecting $a$ and $v$ if $e=(a,v)$, while $\Psi(e)$ contains the set of all $(v_{P_1},v_{P_2})$ such that $P_2$ is obtained by appending $e$ to $P_1$.
Let $\Phi:V\to 2^{\widehat{V}}$ represent the vertex correspondence:
$$\Phi(v)=\left\{
\begin{array}{ll}
    \{v\} &  \mbox{if }v=a \\
    \{v_P\mid P\mbox{ ends at }v\} & \mbox{Otherwise.}
\end{array}
\right.$$

From our definition, it is easy to see that $\Psi(e_1)\cap\Psi(e_2)=\emptyset$ if $e_1\neq e_2$, and $\Phi(u)\cap\Phi(v)=\emptyset$ if $u\neq v$.
Moreover, since $\calP_a$ contains only paths, for any vertex $v$ and edge $e$ in $G$, each path in $\widehat{G}_{a}$ connecting $a$ to a leaf (recall that $\widehat{G}_{a}$ is a tree) can intersect each of $\Psi(e)$ and $\Phi(v)$ at most once.\footnote{To see this for each $\Psi(e)$, suppose for the sake of contradiction that the path from $v_P$ to the root contains two edges $(v_{P_1},v_{P_2}), (v_{P_3},v_{P_4})$ such that $(v_{P_1},v_{P_2}), (v_{P_3},v_{P_4})\in\Psi(e)$ for some edge $e$.
Assume without loss of generality that the order of the four vertices on the path according to the distances to the root is $(v_{P_1},v_{P_2},v_{P_3},v_{P_4})$.
It is easy to see from our construction that $P_1\subsetneq P_2\subsetneq P_3\subsetneq P_4$.
As a result, $(v_{P_1},v_{P_2}), (v_{P_3},v_{P_4})\in\Psi(e)$ implies that $P_2$ is the path obtained by appending $e$ to $P_1$, and $P_4$, containing $P_2,P_3$, is obtained by appending $e$ to $P_3$, which further implies that $P_4$ is a path that uses the edge $e$ twice, contradicting to our definition that $\calP_a$ contains only simple paths.

The corresponding claim for each $\Phi(v)$ can be shown similarly.}

Finally, to let the inequality $\sigma_G(\{v\})\leq\sigma_{\widehat{G}_v}(\{v\})$ make sense, we need to specify the parameter $\vartheta$ for each vertex in $\widehat{G}_v$.
This is done in a natural way: for each vertex $w\in\Phi(v)$ in $\widehat{G}_v$, set $\vartheta_w$ for vertex $w$ in $\widehat{G}_v$ be the same as $\vartheta_v$ for vertex $v$ in $G$.

\begin{lemma}\label{lem:LT_coupling}
$$\sigma_G(\{v\})\leq\sigma_{\widehat{G}_v}(\{v\}).$$
\end{lemma}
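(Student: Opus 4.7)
The plan is to bypass an explicit coupling and instead establish the inequality by a clean path-by-path computation, exploiting the tree structure of $\widehat{G}_v$ together with the uniqueness of live-edge paths (Remark~\ref{remark:uniquePath}). First I would decompose $\Pr(v \to u \text{ in } G)$ for each $u \neq v$ according to the unique simple live-edge path from $v$ to $u$. By Remark~\ref{remark:uniquePath},
\[
\Pr(v \to u \text{ in } G) = \sum_{\substack{P \in \calP_v \\ \text{end}(P) = u}} \Pr(P \text{ is live in } G),
\]
where ``$P$ is live in $G$'' denotes the event that every non-starting vertex $w$ on $P$ selects the preceding edge of $P$ as its (sole) incoming live edge. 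Because each vertex independently picks a uniformly random incoming edge with probability $\vartheta_w$ (or none at all with probability $1-\vartheta_w$),
\[
\Pr(P \text{ is live in } G) = \prod_{w \in P,\, w \neq v} \frac{\vartheta_w}{\deg_G(w)}.
\]

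Next I would perform the analogous computation on the tree $\widehat{G}_v$. Since $\widehat{G}_v$ is a tree rooted at $v$, the only path from $v$ to $v_P$ in $\widehat{G}_v$ visits $v_{P_1}, v_{P_2}, \ldots, v_{P_t} = v_P$, where $P_i$ is the length-$i$ prefix of $P$. For the event $v \to v_P$ in $\widehat{G}_v$ it is both necessary and sufficient that each $v_{P_i}$ selects its tree-parent as its live incoming edge. Using the claimed degree-preservation property $\deg_{\widehat{G}_v}(v_{P_i}) = \deg_G(\text{end}(P_i))$ together with independence of the choices at distinct vertices,
\[
\Pr(v \to v_P \text{ in } \widehat{G}_v) = \prod_{i=1}^{t} \frac{\vartheta_{\text{end}(P_i)}}{\deg_G(\text{end}(P_i))} = \Pr(P \text{ is live in } G).
\]
Summing these identities over all $u \neq v$ and all $P \in \calP_v$ ending at $u$ (which together enumerate $\calP_v$ exactly once),
\[
\sigma_G(\{v\}) = 1 + \sum_{P \in \calP_v} \Pr(P \text{ is live in } G) = 1 + \sum_{P \in \calP_v} \Pr(v \to v_P \text{ in } \widehat{G}_v) \leq \sigma_{\widehat{G}_v}(\{v\}),
\]
where the final inequality holds because the $V_P$-indexed terms are only part of $\sigma_{\widehat{G}_v}(\{v\})$, which additionally counts the (nonnegative) contributions of the dummy vertices.

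The hard part will be verifying the degree-preservation identity $\deg_{\widehat{G}_v}(v_P) = \deg_G(\text{end}(P))$, since Definition~\ref{def:lift} is terse about the precise role of the dummies. The intended accounting is that each of the $d := \deg_G(\text{end}(P))$ neighbors of $u := \text{end}(P)$ contributes exactly one edge incident to $v_P$: the preceding vertex $u'$ of $u$ on $P$ contributes the tree-parent edge, each neighbor of $u$ lying outside $P$ contributes a child $v_{P \cdot (u,w)}$, and each neighbor of $u$ that lies on $P$ but is not $u'$ contributes a dummy. Once this degree match is confirmed, every probability factor on the two sides aligns exactly and the lemma follows by the summation above.
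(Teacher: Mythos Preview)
Your approach is correct and takes a genuinely different, more elementary route than the paper's. The paper constructs an explicit coupling between the live-edge revelations on $G$ and on $\widehat{G}_v$: whenever the incoming edge of a vertex $u$ is sampled in $G$, that sample simultaneously dictates the incoming-edge choice of every $v_P\in\Phi(u)$ in $\widehat{G}_v$, and one then verifies that along any root-to-$v_P$ path in the tree the coupled choices involve pairwise distinct $\Phi$-classes and are therefore independent, so the marginal on $\widehat{G}_v$ is the intended one. You instead compute both sides directly, matching each simple $v$-path $P$ in $G$ with the single non-dummy vertex $v_P$ of the tree and observing that in both graphs the relevant probability factorizes identically as $\prod_w \vartheta_w/\deg(w)$. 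This is cleaner and sidesteps the somewhat delicate correctness check of the coupling; the coupling, on the other hand, makes the intuition ``the lift just unfolds the cascade into a tree'' more vivid.

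One remark on your ``hard part'': Definition~\ref{def:lift} as literally written attaches a dummy only when $\mathrm{end}(P)$ is adjacent to $a$, not one dummy for every neighbor of $\mathrm{end}(P)$ that already lies on $P$; under the literal reading one gets only $\deg_{\widehat{G}_v}(v_P)\le\deg_G(\mathrm{end}(P))$ (take a $4$-cycle with a chord to see a strict inequality). This is harmless for your argument, since a smaller degree on the $\widehat{G}_v$ side only makes each factor $\vartheta/\deg$ larger, so your final chain becomes an inequality in the right direction rather than an equality. Your more generous reading of the dummies is in fact precisely what is needed to make the paper's own assertion $\deg_{\widehat{G}_v}(v_P)=\deg_G(\mathrm{end}(P))$ (stated just before the lemma) hold, so flagging this was worthwhile.
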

\begin{proof}
We will define a coupling between the process of revealing live-edges in $G$ and the process of revealing live-edges in $\widehat{G}_v$.
Let $\chi_G$ be the edge-revelation process in $G$, and $\chi_{\widehat{G}_v}$ be the edge-revelation process in $\chi_{\widehat{G}_v}$, where in both processes, each edge is viewed as two anti-parallel directed edges, and we always reveal all the incoming edges for a vertex $u$ simultaneously by choosing exactly one incoming edge uniformly at random with probability $\vartheta_u$.
We will couple $\chi_G$ with another edge-revelation process $\chi_{\widehat{G}_v}'$ of $\widehat{G}_v$.

We consider the following coupling.
In each iteration where all the incoming edges of $u$, denoted by $(u_1,u),(u_2,u),\ldots,(u_{\deg(u)},u)$, are revealed such that at most one of them is live, we reveal all the incoming edges for each $v_P\in\Phi(u)$ as follows.
\begin{itemize}
    \item If none of $(u_1,u),(u_2,u),\ldots,(u_{\deg(u)},u)$ is live in $G$, then $v_P$ chooses no live incoming edge.
    \item For each $P'$ such that $v_{P'}$ is a neighbor of $v_P$, there must exists $u_i\in\{u_1,\ldots,u_{\deg(u)}\}$ such that either that $P'$ is obtained by appending $(u,u_i)$ to $P$ or that $P$ is obtained by appending $(u_i,u)$ to $P'$. Reveal the directed edge $(v_{P'},v_P)$ such that it is live if and only if $(u_i,u)$ is live in $G$.
    \item If there is a live edge $(v_{P'},v_P)$ revealed in the above step, make all the remaining directed edges connecting to $v_P$ not be live. If no live edge is revealed in the above step and one of $(u_1,u),(u_2,u),\ldots,(u_{\deg(u)},u)$ in $G$ is live, it must be that $(a,u)$ is an edge in $G$ and $u$ has chosen $(a,u)$ being the live edge. In this case, let the edge between $v_P$ and the dummy vertex being live (See the third bullet point of Definition~\ref{def:lift}).
\end{itemize}
This defines a coupling between $\chi_G$ and $\chi_{\widehat{G}_v}'$.
It is easy to check that each $v_P\in\widehat{V}$ chooses exactly one of its incoming edges uniformly at random with probability $\vartheta_{v_P}$ and chooses no incoming edge with probability $1-\vartheta_{v_P}$ in this coupling, which is the same as it is in the process $\chi_{\widehat{G}_v}$.
The difference is that, there are dependencies between the revelations of incoming edges for different vertices in $\widehat{G}_v$: if both $v_P,v_{P'}\in\widehat{V}$ belongs to the same $\Phi(u)$ for some $u\in V$, the incoming edges for $v_P$ and $v_{P'}$ are revealed in the same way.

Although the two processes $\chi_{\widehat{G}_v}'$ and $\chi_{\widehat{G}_v}$ are not the same, we will show that the expected number of vertices that are reachable from $v$ by live edges is the same in both $\chi_{\widehat{G}_v}'$ and $\chi_{\widehat{G}_v}$.
It suffices to show that, for each $v_P\in\widehat{V}$, all the \emph{vertices} in the path connecting $v_P$ to the seed $v$ are considered independently (meaning that the incoming edges for $v_{P_1}$ on the path are revealed independently to the revelations of the incoming edges of $v_{P_2}$), since this would imply that the probability $v_P$ is connected to a seed is the same in both $\chi_{\widehat{G}_v}'$ and $\chi_{\widehat{G}_v}$, and the total number of vertices reachable from $v$ by live edges is the same by the linearity of expectation.
We only need to show that there do not exist two vertices on this path that are in the same set $\Phi(u)$ for some $u\in V$, since the incoming edges of each $v_{P_1}\in\Phi(u_1)$ are revealed independently to the revelations of the incoming edges of each $v_{P_2}\in\Phi(u_2)$ whenever $u_1\neq u_2$.
This is true due to that all the paths in $\calP_v$ are simple paths, as remarked in the paragraph below where we define function $\Phi(\cdot)$.

Following the same analysis before, we can show that the number of the vertices reachable from $v$ in $\chi_G$ is always at most the number of vertices reachable from $v$ in $\chi_{\widehat{G}_v}'$.
The lemma concludes here.
\end{proof}

Since $v$ has the same degree in $G$ and $\widehat{G}_v$, Lemma~\ref{lem:LT_coupling} and Lemma~\ref{lem:LT_1seed} implies Lemma~\ref{lem:deg+1slack}.

\end{document}